\newcommand{\tool}{\textsc{GreedyMI}}
\newcommand{\erdosrenyi}{Erd\H{o}s-R\'{e}nyi}
\newtheorem{theorem}{Theorem} 
\newtheorem{corollary}[theorem]{Corollary}
\newtheorem{lemma}{Lemma}
\newtheorem{observation}[theorem]{Observation}
\renewcommand{\tilde}{\widetilde}
\def\max{\qopname\relax n{max}}
\def\argmin{\qopname\relax n{argmin}}
\def\argmax{\qopname\relax n{argmax}}
\newcommand{\prob}{\textsc{TestPrev}}
\newcommand{\obj}{\mathcal{M}}
\newcommand{\p}{\lambda}
\title{Information Theoretic Optimal Surveillance for Epidemic Prevalence in Networks}
\begin{document}

\author {   
    Ritwick Mishra\textsuperscript{\rm 1,2},
    Abhijin Adiga\textsuperscript{\rm 1},
    Madhav Marathe\textsuperscript{\rm 1, 2},
    S. S. Ravi\textsuperscript{\rm 1},
    Ravi Tandon\textsuperscript{\rm 3},
    Anil Vullikanti\textsuperscript{\rm 1,2}
}
\affiliations {   
    \textsuperscript{\rm 1} Biocomplexity Institute, University of Virginia \\
    \textsuperscript{\rm 2} Department of Computer Science, University of Virginia \\
    \textsuperscript{\rm 3} Department of Electrical and Computer Engineering, University of Arizona\\
    \{mbc7bu, abhijin, marathe, ssravi,vsakumar\}@virginia.edu,
    tandonr@arizona.edu
}

\maketitle

\begin{abstract}
Estimating the true prevalence of an epidemic outbreak is a key public health problem. 
This is challenging because surveillance is usually resource intensive and biased.
In the network setting, prior work on cost sensitive disease surveillance has focused on choosing a subset of individuals (or nodes) to minimize objectives such as probability of outbreak detection.
Such methods do not give insights into the outbreak size distribution which, despite being complex and  multi-modal, is very useful in public health planning.

We introduce \prob{}, a problem of choosing a subset of nodes which maximizes the mutual information with disease prevalence, which directly provides information about the outbreak size distribution.
We show that, under the independent cascade~(IC) model, solutions computed by all prior disease surveillance approaches are highly sub-optimal for \prob{} in general.
We also show that \prob{} is hard to even approximate.
While this mutual information objective is computationally challenging
for general networks, we show that it can be computed efficiently for various network classes. 
We present a greedy strategy, called \tool{}, that uses estimates of mutual information from cascade simulations and thus can be applied on any network and disease model. 
We find that \tool{} does better than natural baselines in terms of maximizing the mutual information as well as reducing the expected variance in outbreak size, under the IC model.

\end{abstract}

\section{Introduction}
\label{sec:intro}

Effective surveillance  in the context of disease outbreaks and contagion processes (e.g., modeled as SIR type processes on networks~\cite{kempeMaximizing2003}) involves monitoring or testing a subset of individuals (referred to henceforth as \underline{nodes}) to determine characteristics of an outbreak; see e.g.,~\cite{leskovec2007cost, christakis2010social, shao2018forecasting, tsui2024toward, bai2017optimizing, heavey2022provable}. This is also referred to as the problem of choosing ``sensors''~\cite{leskovec2007cost}.
This is particularly important from a public health perspective because testing is very expensive, and resources are limited.
In networked models of disease spread, a lot of prior work on surveillance has been on choosing optimal sensor sets, so that monitoring them allows a good estimation of metrics such as probability of infections, delay in detecting the outbreak  (e.g.,~\cite{leskovec2007cost, heavey2022provable}) and time of peak (e.g.,~\cite{christakis2010social, shao2018forecasting}).
However, none of the prior methods provides information about the disease outbreak at a distribution level.

Information theoretic strategies are a natural way to capture distribution level information, and such approaches have been developed in other settings, such as  placing sensors for environmental monitoring.
Using a Gaussian process based modeling approach, Caselton 
and Zidek~(\citeyear{caselton1984Optimal})
proposed the use of 
\emph{mutual information}\footnote{Definitions of information theoretic concepts used in this paper can be
found in standard texts such as 
\citet{Cover-Thomas-1991,MacKay-2003}.}
as the optimization criterion.
Specifically, for a sensor set $A\subset V$,
their goal is to maximize the mutual information $I(X_A; X_{V\setminus A})$, where $X_S$ denotes the state vector for a subset $S\subset V$ of nodes.
Krause, Singh, and 
Guestrin~(\citeyear{krause2008Near}) showed that 
this objective  (referred to henceforth as the CZK objective) leads to
sensor placements that are most informative regarding locations without sensors, compared to other objectives.

Krause et al.~(\citeyear{krause2008Near}) showed that finding a placement of sensors that maximizes this objective is NP-hard; 
they also showed that 
$I(X_A; X_{V\setminus A})$ is a submodular function of $A$, and thus can be approximated well using a greedy strategy. 
Such an information theoretic perspective has not been studied for surveillance of disease outbreaks, and is our focus here.
Specifically, we study the following: how should we select a
subset $A$ of nodes on a network which gives the most information about the outbreak size distribution, denoted by $P(Z)$, 
from the results obtained by testing only the nodes in $A$?
This is particularly useful in the case of disease outbreaks because they often exhibit threshold effects, e.g.,~\cite{marathe:cacm13, pastor2015epidemic}, and a distribution level understanding gives better insights into the risk of having large outbreaks.
Prior surveillance methods only optimize specific epidemic metrics, such as the probability of detection or peak. Their solutions cannot provide such insights about the outbreak, as we formally show.

\smallskip 

\noindent
\textbf{Our contributions.}
1. We introduce a novel information theoretic criterion for optimizing surveillance for prevalence estimation.
Motivated by~\cite{caselton1984Optimal,krause2008Near}, our goal is to choose a subset $A$ which maximizes the mutual information $I(X_A; Z)$ with the prevalence (more generally, a weighted prevalence); we refer to this as the \prob{} problem.
We show that, in general, subsets that optimize other epidemic objectives 
(e.g., detection probability), do not maximize the mutual information; they can be, in fact, arbitrarily away from the optimal mutual information.\\
2. We show that \prob{} is NP-hard, even to approximate within a $\Theta(\log{n})$ factor, where $n$ is the number of nodes.
We also show that there are significant differences between the \prob{} and CZK objectives.
First, optimal solutions for the CZK objective can have arbitrarily low mutual information with the prevalence.
Second, the CZK objective is computationally much simpler--- it is submodular, and can be approximated within a $(1-1/e)$ factor by a greedy algorithm, while the \prob{} objective cannot be approximated to within a $\Theta(\log{n})$ factor, unless
\textbf{P} = \textbf{NP}.\\
\noindent
3. In general, computing the mutual information is computationally challenging since it involves an exponential sum.
We show that for trees and one-hop disease models, the value of mutual information can be computed efficiently.
For a simple path network, we derive a closed
form expression for the optimal solution to \prob{}.
For the general setting, we present a greedy strategy, called \tool{}, based on estimating the mutual information from cascade samples.\\
4. We evaluate our method through simulations over both synthetic networks and a real contact network between patients and providers in a hospital ICU, under various disease probability regimes. 
We compare the surveillance sets found by our method with natural baselines to show its efficacy in finding robust solutions.
We find that \tool{} outperforms the considered baselines and for budget as low as~2\%, gives solutions with more than 60\% reduction in variance about prevalence in many networks.\\
5. We analyze the solution sets in terms of their structural and dynamical properties. Our analysis reveals that \textsc{GreedyMI} achieves a favorable balance between relevance (i.e., high information content) and redundancy (i.e., low marginal information gain), outperforming baseline methods in this trade-off.\\

\noindent
For space reasons, proofs of some results
are omitted. They can be found in the technical supplement.

\section{Problem Definition}
\label{sec:prob_def}

\paragraph{Disease model. } We consider the simplest SIR-type disease model on a network $G=(V, E)$, the Independent Cascade (IC) model. All nodes in the network except the Infected seed~(s) start as Susceptible (S). 
At each time step, an Infected (I) node $u$ can activate an adjacent Susceptible node $v$ with probability $\p_{(u,v)}$, after which, it is Removed (R) from the process. The process continues until there are no more new infections. 
In a $d$-hop IC model, the spread is limited to a maximum of $d$-hops from the seed(s). 
Thus, the IC model is denoted by $IC(\boldsymbol{\p}, d)$, where the vector $\boldsymbol{\p}$ is composed of all the edge-wise disease probabilities $\p_{e}$ and $d$ is the number of hops; when there are no hop restrictions, we denote the model by $IC(\boldsymbol{\p}, \infty)$.
In the homogeneous setting, where~$\lambda_e=\lambda$, we use the notation~$IC(\lambda,d)$ to denote such a model.

We will use upper-case letters for random variables and lower-case letters for deterministic variables; for example, $X$ takes value $x$ in case of scalars or $\mathbf{x}$ for vectors.
We use a random variable $X_v=1$ to indicate that node $v\in V$ is infected.
The \emph{weighted disease prevalence} (sometimes referred to as prevalence 
for simplicity), $Z=\sum_{i\in V} w_i X_i$, is a random variable representing the weighted sum of infections, where  $w_i$ denotes a non-negative weight associated with node $i\in V$.
Let $Z_A$ = $\sum_{i\in A} w_iX_i$ denote the weighted sum of the state variables of nodes in $A$, while $Z_A^-$ = $\sum_{i\in V\setminus A} w_i X_i$
be the weighted sum of the state variables of the \emph{remaining} nodes in the network.

Let $h(p)$ denote the binary entropy function defined by $h(p) = -p\log p - (1-p)\log (1-p)$, 
$p\in [0,1]$.
All the logarithms in this paper use base 2. 
Note that $h(p)$ represents the entropy $H(Q)$ of a random
variable $Q$ which takes on one of two values with
probabilities $p$ and $1-p$ 
respectively~\cite{Cover-Thomas-1991}.
For two random variables $P$ and $Q$,
$H(P|Q)$ represents the conditional 
entropy of $P$ given $Q$.
We use the following lemma regarding
conditional entropy repeatedly (proof in the supplement).
\begin{lemma}
    \label{lem:remain-sum}
   $H(Z|X_A) = H(Z_A^-| X_A)$. 
When the variables are all independent, $H(Z|X_A) = H(Z_A^-)$.
\end{lemma}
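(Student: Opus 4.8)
The plan is to exploit the additive decomposition $Z = Z_A + Z_A^-$ together with the observation that $Z_A = \sum_{i\in A} w_i X_i$ is a deterministic function of the state vector $X_A$. First I would expand the conditional entropy term by term,
\[
H(Z \mid X_A) \;=\; \sum_{\mathbf{x}_A} \pr(X_A = \mathbf{x}_A)\, H\!\left(Z \mid X_A = \mathbf{x}_A\right),
\]
and analyze a single realization $\mathbf{x}_A$. Writing $z_A := \sum_{i\in A} w_i x_i$ for the corresponding (now constant) value of $Z_A$, we have, conditioned on $X_A = \mathbf{x}_A$, that $Z = z_A + Z_A^-$; that is, $Z$ is obtained from $Z_A^-$ by the shift $t \mapsto t + z_A$, which is a bijection on $\mathbb{R}$. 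Since the entropy of a discrete random variable is invariant under a bijective relabeling of its outcomes, $H(Z \mid X_A = \mathbf{x}_A) = H(Z_A^- \mid X_A = \mathbf{x}_A)$ for every $\mathbf{x}_A$. Averaging over $\mathbf{x}_A$ with the weights $\pr(X_A = \mathbf{x}_A)$ then gives $H(Z \mid X_A) = H(Z_A^- \mid X_A)$, which is the first claim.

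For the second claim, assume the node-state variables $\{X_i\}_{i\in V}$ are mutually independent. Then the block $X_A = (X_i)_{i\in A}$ is independent of the block $(X_i)_{i\in V\setminus A}$, and since $Z_A^- = \sum_{i \in V\setminus A} w_i X_i$ is a function of the latter block, $Z_A^-$ is independent of $X_A$. Conditioning on an independent variable leaves entropy unchanged, so $H(Z_A^- \mid X_A) = H(Z_A^-)$; combining this with the first claim yields $H(Z \mid X_A) = H(Z_A^-)$.

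The argument is essentially bookkeeping; the only point that needs a little care is the conditional version of shift-invariance of entropy, namely that a bijection which is allowed to depend on the conditioning value still preserves conditional entropy term by term. I would record this as a one-line observation: for any random variables $Y, W$ and any family of maps $\{g_w\}$ with each $g_w$ injective on the support of $Y \mid W = w$, one has $H(g_W(Y) \mid W) = H(Y \mid W)$, since for each fixed $w$ the conditional laws of $g_w(Y)$ and $Y$ agree up to relabeling of outcomes. Instantiating this with $g_{\mathbf{x}_A}(t) = t + z_A(\mathbf{x}_A)$ gives exactly what is needed, and no heavier machinery is required.
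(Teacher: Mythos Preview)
Your proof is correct and follows essentially the same approach as the paper: expand the conditional entropy over realizations of $X_A$, use that conditioning fixes $Z_A$ to a constant so that $Z$ is a bijective shift of $Z_A^-$, and then invoke independence for the second part. Your write-up is in fact a bit more careful than the paper's (which writes the shift as $\mathbf{x}^T\mathbf{1}$ rather than the weighted sum), but the argument is the same.
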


\noindent
\textbf{The prevalence mutual information criterion.}
Given a set of nodes $A$, we define a function $\obj{}:2^V \rightarrow \mathbb{R}_{\geq 0}$ as the mutual information  between $X_A$ and prevalence $Z$. 
\begin{align}
    \obj{}(A) = I(X_A; Z) = \sum_{z =0}^n\sum_{x\in \mathcal{X}_A} p(x, z) \log \frac{p(x,z)}{p(x)p(z)}
    \label{eqn:mi}
\end{align}
Here, $p(x,z)$ is the joint probability mass function of $X_A$ and $Z$, while $p(x)$ and $p(z)$ are their respective marginal probabilities. $\mathcal{X}_A$ represents the support of $X_A$.
$\obj(\cdot)$ quantifies the effect of knowing the states of a set of nodes on the prevalence distribution, which serves as our optimization criterion. In a limited budget setting, we want to query a limited subset of nodes whose effect on the cascade size distribution is the greatest among all such sets. 

\noindent
\textbf{The \prob{} problem.} Given a network $G=(V,E)$,  disease parameters  $\boldsymbol{\p}, d$, weights $w_i$ and costs $c_i, i\in V$, and budget $k$, our goal is to find a set of nodes with the maximum mutual information, i.e., $A^* \in \argmax_{A\subset V, \sum_{i\in A}c_i \leq k} \obj{}(A)$. 
The weights can be used to model the relative importance of subgroups within the population.

Since $I(X_A; Z)$ = 
$H(Z) - H(Z|X_A)$~\cite{Cover-Thomas-1991},
maximizing $I(X_A; Z)$ is equivalent to minimizing $H(Z|X_A)$. 
In some of our results, we will consider uniform weights and costs, i.e., $w_i=1, c_i=1$ for all $i\in V$.

\noindent
\textbf{Caselton-Zidek-Krause (CZK) objective.} 
This objective  
denoted by $\mathcal{K}(A)$, is defined by 
$\mathcal{K}(A)$ = $I(X_A; X_{V\setminus A}) = H(X_{V\setminus A}) -H(X_{V\setminus A}| X_A)$.
It chooses a subset $A$ which maximizes the reduction in conditional entropy over the rest of the nodes.
This was first proposed by \cite{caselton1984Optimal} and studied extensively by \cite{krause2008Near} for spatial processes; it has not been used for disease spread.

\section{Related Work}
\label{sec:related}
There is considerable prior work on epidemic surveillance with diverse objectives such as to detect outbreaks~\cite{leskovec2007cost,bai2017optimizing}, determine outbreak characteristics~\cite{christakis2010social,shao2018forecasting}. Our surveillance problem is new and is generally open in the context of epidemic processes over networks.
As mentioned earlier, similar mutual information criteria have been used for problems such as observation selection~\cite{krause2007near,krause2012near}, sensor placement~\cite{krause2008Near,caselton1984Optimal} and feature selection~\cite{peng2005feature,brown2012conditional}. 
\cite{tsui2024toward} consider an active learning framework for node subset selection, where the test feedback is used to inform the next choice. 
The problem of placing sensors for detecting outbreaks has been considered in \cite{leskovec2007cost,adhikari2019fast,heavey2022provable}. Our goal instead is to find nodes which give the most information about the prevalence. 
We adapt the algorithm on trees in  \citet{burkholz2021cascade} to compute the conditional prevalence given a set of observations. 
The problem of estimating entropy has a long history \cite{paninski2003estimation} with more recent works~\cite{valiant2011estimating, wu2016minimax} focusing on advanced estimators to achieve a near-optimal asymptotic sample complexity. Here, we use a simple plug-in estimator for implementability and low computational overhead. \citet{wangFastApproximationEmpirical2019} speed-up the estimation of empirical entropy by taking a random subsample of the dataset. Here, our goal is to estimate the true entropy of the prevalence from samples.

\section{Analytical Results}

\label{sec:analytical}

\smallskip 

\noindent
\textbf{Complexity results for \prob{}.}~
In establishing our complexity results, we use the fact (from Section~\ref{sec:prob_def}) that
maximizing $I(X_A;Z)$ is equivalent to
minimizing the conditional entropy $H(Z|X_A)$.
Thus, we will consider the following 
(equivalent) version
of \prob{}:
given a directed contact network $G=(V,E)$, 
with a  non-negative weight $w_i$ and cost $c_i$
for each node $v_i \in V$, a probability $\lambda_e$ for each
directed edge $e \in E$, a non-negative weight bound $k \leq \sum_{i=1}^{n} c_i$ and a non-negative rational value $R$, is there a subset $A \subseteq V$ with  weight $\leq k$ such that $H(Z|X_A) \leq R$?
The following result establishes the
computational intractability of \prob{}.

\begin{theorem}\label{thm:gen_prob_hardness}
(a) \prob{} is NP-hard even when the
propagation is restricted to one hop.
(b) The problem remains
NP-hard even if the constraint on the weight of $A$ can be violated by a factor $(1-\epsilon)\log{n}$, for any $\epsilon < 1$,
where $n = |V|$.
\end{theorem}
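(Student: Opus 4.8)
The plan is to reduce from \textsc{Minimum Set Cover} (and, for part~(b), from its optimal $(1-\epsilon)\ln N$ inapproximability). Recall from Section~\ref{sec:prob_def} that \prob{} is equivalent to choosing $A$ with weight $\le k$ that \emph{minimizes} the residual entropy $H(Z\mid X_A)$, and that by Lemma~\ref{lem:remain-sum} this quantity is the entropy of the weighted prevalence carried by the \emph{unobserved} nodes. Given a set cover instance with universe $U$, $|U|=N$, and family $\mathcal{S}=\{S_1,\dots,S_m\}$, I would construct an IC instance on polynomially many nodes consisting of: a seed; one cheap ``set node'' $\sigma_j$ (unit cost) per $S_j$; and, for each element $u_i$, a small ``element gadget'' that injects a fixed, known quantum of entropy into $Z$ unless a set node covering $u_i$ is selected, in which case that element's contribution is neutralized. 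The node weights, edge probabilities, and costs are chosen so that for any selection $A$ the residual entropy $H(Z\mid X_A)$ decomposes into one independent contribution per element, the contribution of $u_i$ being zero exactly when $u_i$ is covered by (the sets of) $A$. Taking the target $R$ to be the ``all covered'' value then makes the decision question ``$\exists A$ of weight $\le k$ with $H(Z\mid X_A)\le R$'' literally equivalent to ``$\mathcal{S}$ has a subcover of size $\le k$,'' which establishes the \textbf{NP}-hardness; a standard replication/padding argument then yields the uniform-weight, unit-cost version.

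Part~(b) follows by running the \emph{same} reduction on the hard instances behind the optimal inapproximability of set cover (Feige; Dinur and Steurer): for every $\epsilon<1$ it is \textbf{NP}-hard to distinguish set cover instances admitting a cover of size $k$ from those in which every cover has size more than $(1-\epsilon)\ln N\cdot k$. Since the reduction is exact --- a selection consisting of $t$ set nodes satisfies $H(Z\mid X_A)\le R$ precisely when the corresponding $t$ sets cover $U$, and element gadgets are made too expensive to be selected within any relevant budget --- these two cases map to ``$\exists A$ of weight $\le k$ with $H(Z\mid X_A)\le R$'' versus ``every $A$ of weight $\le(1-\epsilon)\log n\cdot k$ has $H(Z\mid X_A)>R$,'' where $n$ is the (polynomially larger) node count. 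One then checks that the blow-up from $N$ to $n$ is polynomial, so that $\log n=\Theta(\log N)$ and the loss of the constant $\ln 2$ together with the polynomial exponent can be absorbed while still permitting $\epsilon$ arbitrarily close to $1$; this gives exactly the stated bicriteria hardness, and in particular rules out an $o(\log n)$-approximation for \prob{} unless \textbf{P}=\textbf{NP}.

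The main obstacle is the gadget itself and the entropy bookkeeping around it. The delicate point is that it must be the \emph{selection of a set node}, not the observation of an element's own node, that collapses an element's contribution to $H(Z\mid X_A)$; engineering this coupling while staying inside the monotone, binary-state IC dynamics --- and, for part~(a), within a single hop, where the states of non-seed nodes are mutually independent --- is the real work, and may force the ``covering'' structure to be encoded through the node weights rather than through propagation. Once the construction is fixed, one must verify that the residual prevalence $Z_A^-$ is a sum of independent, designer-weighted Bernoulli terms whose per-element contributions do not interfere (which dictates a careful choice of weight scales, e.g.\ sufficiently separated magnitudes), so that the threshold $R$ separates covers from non-covers with a polynomial gap, and then re-establish this count for unit weights via replication. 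Checking this arithmetic, and that it survives the set-cover hardness gap, is where the difficulty lies; the reduction plumbing and the derivation of \textbf{NP}-hardness from it are then routine.
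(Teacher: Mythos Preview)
Your proposal is correct and follows the same route as the paper: reduce from \textsc{Minimum Set Cover}, make set nodes cheap and element nodes too expensive to select, set the target $R$ to the ``all covered'' value, and for part~(b) observe that the reduction is approximation-preserving and invoke Feige's $(1-\epsilon)\ln N$ lower bound. The paper's concrete gadget is simpler than what you anticipate---set nodes get weight~$0$ and cost~$1$, element nodes weight~$1$ and cost~$\alpha+1$, all edge probabilities equal~$1$, and $R=0$---so the separated-weight and replication machinery you flag as potential obstacles is not used.
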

\noindent
The proof is by a reduction from the Minimum Set Cover (MSC) problem (details in the supplement). 

\noindent
\textbf{\prob{} vs. the CZK objective.}
We show here that there is a significant difference \prob{} and the CZK objective, both in terms of the objective value (they can differ arbitrarily), and structure of optimal solutions (the \prob{} objective is not always submodular, unlike CZK).

\begin{observation}
\label{obs:CZK}
    There exist instances in which optimizing the CZK criterion $\mathcal{K}(A) = I(X_A; X_{V\setminus A})$ does not optimize for mutual information with prevalence $\obj(A) = I(X_A;Z)$.
\end{observation}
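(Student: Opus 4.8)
The plan is to exhibit a single small, fully analyzable instance together with a budget for which the set maximizing the CZK objective $\mathcal{K}(\cdot)$ is \emph{disjoint} from the set maximizing $\obj(\cdot)$. The intuition I would exploit is that $\mathcal{K}$ rewards a node for being \emph{structurally} predictive of the other nodes' states---most cheaply, for having a perfectly correlated ``twin''---whereas $\obj$ rewards a node for pinning down the \emph{aggregate} $Z$; these two goals disagree once the ``twin'' effect and the ``controls $Z$'' effect are placed on different nodes. A cap that shapes the construction: when $|A|=1$ both objectives are at most $H(X_u)\le 1$ for the chosen binary node $u$, so to push $\mathcal{K}(\{u\})$ all the way to $1$ the node $u$ must be (almost surely) a function of $X_{V\setminus\{u\}}$, which is exactly what a twin provides.

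Concretely, I would take $G$ to be the disjoint union of two gadgets under the IC model (a hop bound of $2$ already suffices), with uniform weights and costs and budget $k=1$, each component having its own seed. Gadget~1 is a seed $s_1$ with an edge $s_1\to u$ of probability $1/2$ and $u$ a leaf, so $X_u\sim\mathrm{Ber}(1/2)$ independently of everything else. Gadget~2 is a seed $s_2$ with an edge $s_2\to v$ of probability $q$ and an edge $v\to v'$ of probability $1$, so $X_{v'}=X_v\sim\mathrm{Ber}(q)$; I would fix $q=1/4$. Because the two gadgets are stochastically independent and each seed is deterministic, every joint distribution factorizes and all the entropies below split into binary-entropy terms, so the computation is routine rather than delicate.

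The remaining step is to evaluate both objectives on every singleton (and the empty set). For $\mathcal{K}$: a deterministic seed contributes $0$; $X_u$ is independent of $X_{V\setminus\{u\}}$, so $\mathcal{K}(\{u\})=0$; and the twin gives $\mathcal{K}(\{v\})=\mathcal{K}(\{v'\})=H(X_v)=h(1/4)>0$, so the CZK-optimal singletons are exactly $\{v\}$ and $\{v'\}$. For $\obj$: with uniform weights $Z=X_{s_1}+X_{s_2}+X_u+X_v+X_{v'}=2+X_u+2X_v$, which is a bijective relabeling of $(X_u,X_v)$; hence $\obj(\{u\})=H(X_u)=1$ while $\obj(\{v\})=\obj(\{v'\})=H(X_v)=h(1/4)<1$, so the unique $\obj$-optimal singleton is $\{u\}$. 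Since $\{u\}$ is neither $\{v\}$ nor $\{v'\}$, and neither of the latter is $\obj$-optimal, no CZK-optimal solution maximizes $\obj$, proving the observation. (If one instead zeroes the weights of $s_1,s_2,u,u'$ and moves the twin onto $u$ so that $Z=X_v$, the CZK optimum becomes $\{u\}$ or $\{u'\}$ with $\obj$-value exactly $0$ while $\obj$'s optimum is positive, yielding the stronger ``can be arbitrarily far'' statement mentioned in the introduction.) The only genuine difficulty here is design rather than arithmetic: choosing a gadget small enough to solve in closed form yet structured so that the ``most predictive of the rest'' node and the ``most informative about the total'' node are provably different---and the $|A|=1$ entropy cap above is what points to the twin gadget as the natural device.
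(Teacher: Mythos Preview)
Your construction is correct and establishes the observation, but it differs from the paper's proof. The paper uses a single connected five-node graph $\{a,b,c,d,e\}$ with one source $a$ and homogeneous edge probability $\lambda=0.5$, then directly computes $\mathcal{K}(\{b\}),\mathcal{K}(\{c\})$ and $\obj(\{b\}),\obj(\{c\})$ to show that $c$ wins for CZK while $b$ wins for $\obj$. Your approach instead builds a disconnected instance from two independent gadgets with two seeds: a perfectly correlated ``twin'' pair $(v,v')$ that maximizes $\mathcal{K}$ by construction (since $X_{v'}=X_v$ forces $\mathcal{K}(\{v\})=H(X_v)$), and an isolated $\mathrm{Ber}(1/2)$ node $u$ whose higher marginal entropy makes it the $\obj$-optimum once you note that $Z=2+X_u+2X_v$ is a bijection of $(X_u,X_v)$. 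The advantage of your route is that every entropy collapses to a single binary-entropy value with no conditional computations, and the mechanism (``predictive of the rest'' versus ``informative about the total'') is made explicit; the advantage of the paper's route is that it stays within the more canonical single-source connected-network setting and does not rely on deterministic edges or multiple seeds. Both are valid instances under the paper's model. One small cleanup: in your closing parenthetical you refer to a node $u'$ that was never introduced; presumably you mean to place the twin on $u$ (i.e., add $u'$ with $X_{u'}=X_u$) and drop the $v'$ gadget, but as written it is undefined.
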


\noindent
We also show that $\mathcal{M}(\cdot)$ has a different structure from the CZK objective.

\begin{observation}\label{obs:ind-supmod}
  Given a set of nodes $V$ whose states are mutually independent random variables, the function $\obj{}(A) = I(Z_V;X_A)$ is supermodular in $A \subseteq V$.
\end{observation}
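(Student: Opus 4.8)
The plan is to verify supermodularity via its increasing‑marginal‑returns characterization: a set function $f$ is supermodular iff $f(A\cup\{j\})-f(A)\le f(A'\cup\{j\})-f(A')$ for all $A\subseteq A'\subseteq V$ and $j\in V\setminus A'$. By the chain rule for mutual information, $\obj(A\cup\{j\})-\obj(A)=I(Z_V;X_{A\cup\{j\}})-I(Z_V;X_A)=I(Z_V;X_j\mid X_A)$, so it suffices to show that the marginal gain $I(Z_V;X_j\mid X_A)$ is nondecreasing in $A$ over subsets avoiding $j$. Everything in sight is finite since each $X_i$ is binary.

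First I would rewrite this marginal gain purely in terms of sums of independent variables. Writing $Z_V=Z_A+Z_A^-$ with $Z_A=\sum_{i\in A}w_iX_i$ a deterministic function of $X_A$, conditioning on $X_A$ makes $Z_V$ a shift of $Z_A^-$, so $I(Z_V;X_j\mid X_A)=I(Z_A^-;X_j\mid X_A)$. Since $j\notin A$, both $X_j$ and $Z_A^-$ are functions of $\{X_i:i\in V\setminus A\}$, which is independent of $X_A$; hence $I(Z_A^-;X_j\mid X_A)=I(Z_A^-;X_j)$ (this is the same mechanism underlying Lemma~\ref{lem:remain-sum}). Peeling off the $j$-th summand, $Z_A^-=w_jX_j+S_A$ with $S_A:=\sum_{i\in V\setminus(A\cup\{j\})}w_iX_i$ independent of $X_j$, and since $H(S_A)=H(w_jX_j+S_A\mid X_j)$, the marginal gain of $j$ at base $A$ equals $I(w_jX_j+S_A;X_j)$.

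Finally I would compare bases $A\subseteq A'$ (with $j\notin A'$) using the data processing inequality. As $V\setminus(A'\cup\{j\})\subseteq V\setminus(A\cup\{j\})$, write $S_A=S_{A'}+N$ where $N:=\sum_{i\in A'\setminus A}w_iX_i$; by mutual independence of the node states, $N$ is independent of the pair $(X_j,S_{A'})$, so $X_j\to w_jX_j+S_{A'}\to w_jX_j+S_{A'}+N=w_jX_j+S_A$ is a Markov chain. The data processing inequality then gives $I(w_jX_j+S_A;X_j)\le I(w_jX_j+S_{A'};X_j)$, i.e., $I(Z_V;X_j\mid X_A)\le I(Z_V;X_j\mid X_{A'})$, which is exactly the increasing‑marginal‑returns inequality; hence $\obj$ is supermodular.

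The step requiring the most care is the handling of independence: both the reduction $I(Z_V;X_j\mid X_A)=I(Z_A^-;X_j)$ and, more importantly, the Markov property in the last step rely on \emph{mutual} (not merely pairwise) independence of the node states, so that $N$ is independent of $(X_j,S_{A'})$ \emph{jointly}. An equivalent, slightly slicker packaging is to note via Lemma~\ref{lem:remain-sum} that $\obj(A)=H(Z_V)-H(Z_A^-)$ with $H(Z_V)$ constant; then supermodularity of $\obj$ is equivalent to submodularity of $A\mapsto H(Z_A^-)$, hence — since composing with set complementation preserves submodularity — to submodularity of $g(B):=H(\sum_{i\in B}w_iX_i)$, whose decreasing marginal returns follow from the same data‑processing argument.
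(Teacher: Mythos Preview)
Your proof is correct and takes a somewhat different route from the paper's. The paper writes the marginal gain as an entropy difference $H(Z_A^-)-H(Z_{A\cup\{w\}}^-)$ (via Lemma~\ref{lem:remain-sum}) and then invokes an entropy inequality of Madiman (2008) for sums of independent random variables, namely $H(Y_1+Y_2)+H(Y_2+Y_3)\ge H(Y_1+Y_2+Y_3)+H(Y_2)$, with the choice $Y_1=Z_{B\setminus A}$, $Y_2=Z_{B\cup\{w\}}^-$, $Y_3=X_w$. You instead express the marginal gain as $I(w_jX_j+S_A;X_j)$ and compare bases directly via the data processing inequality through the Markov chain $X_j\to w_jX_j+S_{A'}\to w_jX_j+S_{A'}+N$. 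The two arguments are closely related at the core: Madiman's inequality is itself equivalent to the DPI step ``adding independent noise $Y_3$ to $Y_1+Y_2$ can only reduce $I(Y_1;\,\cdot\,)$,'' so your proof is effectively a self-contained derivation of the needed special case, avoiding the external citation, while the paper's version is shorter once that inequality is granted. Your closing alternative packaging --- reducing supermodularity of $\obj$ to submodularity of $B\mapsto H\big(\sum_{i\in B}w_iX_i\big)$ --- is precisely the content of the cited Madiman inequality, so the two presentations meet there as well.
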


\begin{observation}
\label{obs:submodular}
    There exist instances in which the function $\obj$ is submodular.
\end{observation}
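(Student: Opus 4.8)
The plan is to prove Observation~\ref{obs:submodular} by exhibiting an explicit \prob{} instance on which $\obj{}$ is submodular. The design idea is to force an ``all-or-nothing'' cascade, so that one Bernoulli coin fixes the state of an entire block of nodes; this makes the mutual information with $Z$ \emph{saturate} as soon as one informative node is added, which is exactly the diminishing-returns behaviour we want. Concretely, let $G$ be the directed path $s \to v_1 \to v_2 \to \cdots \to v_n$ with $s$ the (always infected) seed, set $\p_{(s,v_1)} = p$ for a fixed $p \in (0,1)$, set $\p_{(v_i,v_{i+1})} = 1$ for every $i \geq 1$, and take unit node weights and unit costs (the budget $k$ is irrelevant to submodularity of $\obj{}$). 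Under $IC(\boldsymbol{\p},\infty)$ the outbreak either stops at $s$ (with probability $1-p$) or reaches and activates all of $v_1,\dots,v_n$ (with probability $p$). Hence $X_{v_1} = X_{v_2} = \cdots = X_{v_n} =: Y$ with $Y \sim \mathrm{Bernoulli}(p)$, while $X_s \equiv 1$, so the prevalence is $Z = 1 + nY$ and $H(Z) = h(p)$.

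Next I would evaluate $\obj{}$ on an arbitrary $A \subseteq V$. The seed carries no information since $X_s$ is constant, so $I(X_s; Z) = 0$. If $A$ contains at least one block node $v_j$ ($j \geq 1$), then observing $X_A$ reveals $Y$, which determines $Z$ exactly, so $H(Z\mid X_A) = 0$ and $\obj{}(A) = H(Z) - H(Z\mid X_A) = h(p)$ (equivalently, via Lemma~\ref{lem:remain-sum}, $H(Z_A^-\mid X_A) = 0$); the same value holds if $A$ additionally contains $s$. If $A \subseteq \{s\}$ then $\obj{}(A) = 0$. Therefore, writing $B = \{v_1,\dots,v_n\}$,
\begin{equation}
  \obj{}(A) \;=\; h(p)\cdot \mathbf{1}\!\left[\, A \cap B \neq \emptyset \,\right]
  \qquad \text{for all } A \subseteq V.
  \label{eqn:submod-instance}
\end{equation}

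Finally I would deduce submodularity directly from \eqref{eqn:submod-instance}. Fix $S \subseteq T \subseteq V$ and $x \in V \setminus T$. If $x = s$, both marginal gains $\obj{}(S\cup\{x\}) - \obj{}(S)$ and $\obj{}(T\cup\{x\}) - \obj{}(T)$ equal $0$. If $x = v_j \in B$, then the marginal gain of adding $x$ to a set $U$ equals $h(p)\cdot\mathbf{1}[U \cap B = \emptyset]$; since $S \subseteq T$ gives $S \cap B \subseteq T \cap B$, emptiness of $T \cap B$ implies emptiness of $S \cap B$, so $\mathbf{1}[T\cap B=\emptyset] \leq \mathbf{1}[S\cap B=\emptyset]$ and the marginal gain at $T$ is at most that at $S$. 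Hence $\obj{}$ satisfies the diminishing-returns inequality, i.e., it is submodular, proving the claim. Note that $\obj{}$ is not modular here, e.g. $\obj{}(\{v_1\}) + \obj{}(\{v_2\}) = 2h(p) > h(p) = \obj{}(\{v_1,v_2\}) + \obj{}(\emptyset)$, so this is a genuine example; it also highlights the role of dependence, since by Observation~\ref{obs:ind-supmod} mutually independent states would instead make $\obj{}$ supermodular.

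I do not anticipate a real obstacle, since this is an existence statement backed by an explicit witness; the only things to handle with care are (i) that the construction is a legal IC instance --- because the seed must start infected, the randomness is injected through the single edge $\p_{(s,v_1)} = p$; (ii) the claim ``$X_A$ determines $Z$ whenever $A$ meets the block,'' which yields $H(Z\mid X_A)=0$ and hence \eqref{eqn:submod-instance}; and (iii) checking the submodularity inequality for all pairs $S \subseteq T$, which the indicator-of-nonempty-intersection form of \eqref{eqn:submod-instance} reduces to a one-line monotonicity argument.
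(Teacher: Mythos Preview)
Your proof is correct and gives a valid witness, but it proceeds differently from the paper's own argument. The paper exhibits a two-node instance $\{a,b\}$ with stochastic seeding ($s_a=0.5$) and edge probability $\lambda=0.5$, then numerically computes $H(Z)$, $H(Z\mid X_a)$, $H(Z\mid X_b)$, and $H(Z\mid X_a,X_b)=0$ and checks the submodularity inequalities by plugging in the values $1.5,\,0.5,\,0.688$. Your construction instead uses a deterministic seed and forces perfect correlation among the non-seed nodes via unit edge probabilities, which collapses $\obj{}$ to the closed form $h(p)\cdot\mathbf{1}[A\cap B\neq\emptyset]$; submodularity then follows from the standard one-line monotonicity argument for coverage-type indicators rather than from numerics. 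Your route buys generality (any $n\geq 2$, any $p\in(0,1)$), stays within the basic deterministic-seed IC model of Section~\ref{sec:prob_def} (edge probabilities equal to $1$ are used elsewhere in the paper, e.g.\ in the proof of Theorem~\ref{thm:gen_prob_hardness}), and makes strict submodularity transparent via your $\obj(\{v_1\})+\obj(\{v_2\})>\obj(\{v_1,v_2\})+\obj(\emptyset)$ remark. The paper's route is more compact (ground set of size two) but relies on a randomized-seed variant and a case-by-case numerical check.
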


\noindent
\textbf{\prob{} vs. optimizing surveillance objectives.}
We show that prior work on optimizing epidemic objectives, e.g., detection probability, does not give good solutions to \prob{}.

\begin{observation}
\label{obs:detprob}
There exist instances in which node selection for maximizing detection likelihood can lead to solutions with objective value for \prob{} less than the optimal by $\Theta(n)$.
\end{observation}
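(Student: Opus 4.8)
The plan is to exhibit a family of one-hop instances, parameterized by $n$, in which detection likelihood is maximized by a node set carrying \emph{zero} information about the weighted prevalence, while the \prob{} optimum equals $\Theta(n)$. Fix $k=\lfloor (n-1)/2\rfloor$ and take vertices $\{s\}\cup\{a_1,\dots,a_k\}\cup\{b_1,\dots,b_k\}$ (adding at most one isolated zero-weight vertex when $n$ is even). Let $s$ be the seed, and add edges $s\to a_i$ of probability $1$ and $s\to b_i$ of probability $1/2$ for each $i$, with every $a_i,b_i$ a sink so that the hop bound plays no role. Use unit costs, budget $k$, and weights $w_s=w_{a_i}=0$, $w_{b_i}=2^{\,i-1}$ (taking $w_{a_i}=1$ instead only shifts $Z$ by a constant and changes nothing). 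Under $IC(\boldsymbol{\p},d)$ the variables $X_{a_i}$ are all identically $1$, the $X_{b_i}$ are i.i.d.\ $\mathrm{Bernoulli}(1/2)$, and $Z=\sum_{i=1}^k 2^{\,i-1}X_{b_i}$ is uniform on $\{0,1,\dots,2^{k}-1\}$, so $H(Z)=k$.

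Next I would analyze the detection-likelihood optimum. Detection likelihood is at most $1$, and the feasible set $A_{\mathrm{det}}=\{a_1,\dots,a_k\}$ attains likelihood $1$ because $a_1$ is infected with probability $1$; hence $A_{\mathrm{det}}$ is an optimal solution for detection likelihood and a solver may return it. But each $X_{a_i}$ is deterministic, so conditioning on $X_{A_{\mathrm{det}}}$ does not alter the law of $Z$; by Lemma~\ref{lem:remain-sum} (or directly), $H(Z\mid X_{A_{\mathrm{det}}})=H(Z)$, and therefore $\obj(A_{\mathrm{det}})=I(X_{A_{\mathrm{det}}};Z)=0$.

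Then I would lower-bound the \prob{} optimum via the feasible set $A^{*}=\{b_1,\dots,b_k\}$. Since the binary-expansion map $(X_{b_1},\dots,X_{b_k})\mapsto Z$ is a bijection, $Z$ is a deterministic function of $X_{A^{*}}$, so $H(Z\mid X_{A^{*}})=0$ and $\obj(A^{*})=H(Z)=k$. Because $\obj(A)\le H(Z)=k$ for every $A$, the set $A^{*}$ is in fact \prob{}-optimal and the \prob{} optimum equals $k$. Combining, $\obj(A^{*})-\obj(A_{\mathrm{det}})=k=\Theta(n)$, which is the asserted bound.

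I do not expect a genuine obstacle, but the one point that needs care---and where a naive attempt fails---is that an \emph{additive} $\Theta(n)$ gap is achievable only because the prevalence is weighted: with uniform weights $Z$ would be binomial, so $H(Z)$ and hence every \prob{} value would be merely $\Theta(\log n)$, yielding only a logarithmic separation. Placing a super-increasing (here geometric) weight sequence on $\Theta(n)$ mutually independent informative vertices is precisely what lifts the separation to $\Theta(n)$; I would also note that these weights are legitimate inputs, being non-negative and of total bit-length $O(n^{2})$.
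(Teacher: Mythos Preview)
Your proof is correct and, in one respect, sharper than the paper's own argument. Both constructions use a one-hop instance with two blocks of nodes: a block of certainly-infected nodes (so that selecting them maximizes detection likelihood but is useless for $\obj$) and a block of independent $\mathrm{Bernoulli}(1/2)$ nodes that carry all the information about $Z$. The essential difference is that the paper works with \emph{uniform} weights, so that $Z$ is, up to a shift, $\mathrm{Binomial}(n,1/2)$; its proof then only records that $H(\mathrm{Binomial}(n,1/2))>H(\mathrm{Binomial}(n-k,1/2))$ and concludes ``suboptimal'' without quantifying the gap. As you correctly flag in your final paragraph, with uniform weights that gap is bounded by $H(\mathrm{Binomial}(n,1/2))=\Theta(\log n)$, not $\Theta(n)$. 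Your geometric weights $w_{b_i}=2^{\,i-1}$ make $Z$ uniform on $2^{k}$ values, lifting $H(Z)$ to $k=\Theta(n)$ and delivering the full additive $\Theta(n)$ separation that the statement asserts. So your construction actually attains the claimed bound, whereas the paper's version (as written) establishes only strict suboptimality; the observation you single out about super-increasing weights is precisely where the two arguments diverge, and it is what buys you the linear gap.
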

\noindent
All these observations are discussed in detail in the supplement.

\section{Our Approach}
\label{sec:approach}

\noindent 
\textbf{Overview.}~
Since the \prob{} objective is computationally very challenging, we first study it for special classes of networks and disease models. We show that it can be computed efficiently for certain special networks and disease models.
Finally, we develop a sampling-based greedy strategy for finding solutions to \prob{} for general networks and disease models.

\subsection{1-Hop Disease Spread}

Consider a scenario where we are given a set of infectious individuals and we would like to know which among their immediate circle of contacts should be prioritized for testing. We model this as a directed bipartite network $G=(U\cup W, E)$ where $U$ is the set of infected nodes and $W$ is the set of susceptible nodes which are neighbors of~$U$. Figure \ref{fig:1hop-ex} has an example. The disease spreads for one time-step from $U$ to $W$. We are given disease probabilities $\boldsymbol{\p} = \{\lambda_{ji}~:~ (j,i) \in E\}$. The probability $p_i$ of a node $i \in W$ becoming infected is given by $p_i = 1 - \prod_{(j,i)\in E} ( 1 - \lambda_{ji})$. The goal of the \prob{} problem is to find a subset of nodes $A$ which maximizes $I(Z;X_A) = H(Z) - H(Z|X_A)$, which is equivalent to minimizing $H(Z|A)$.  Due to independence of $\{X_i\}_{i\in W}$, $H(Z|X_A) = H(Z_A^-)$ by Lemma \ref{lem:remain-sum}. Thus, choosing the best set of nodes minimizes the entropy of the sum of the remaining node states in $W$.

\begin{figure}
    \centering
    \includegraphics[width=0.33\linewidth]{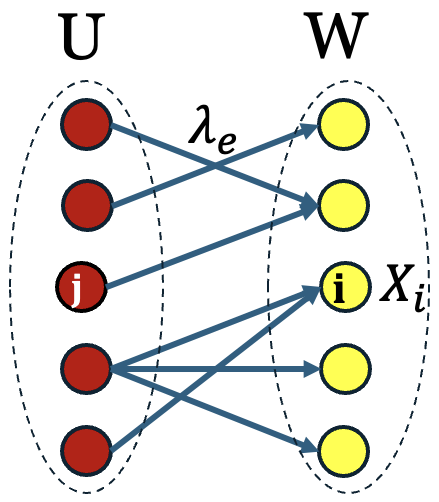}
    \caption{1-hop example on a bipartite network}
    \label{fig:1hop-ex}
\end{figure}

\begin{lemma}
\label{lem:onehop}
For the 1-hop disease spread scenario, the conditional entropy~$H(Z| X_A)$ can be computed exactly in polynomial time.    
\end{lemma}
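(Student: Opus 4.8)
The plan is to reduce the computation of $H(Z\mid X_A)$ to evaluating the probability mass function of a weighted sum of independent Bernoulli random variables, and then to compute that pmf by a standard convolution-style dynamic program.

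First I would observe that in the 1-hop bipartite model every seed node $j \in U$ is infected with probability $1$, so $X_j = 1$ deterministically and $Z = \sum_{j\in U} w_j + \sum_{i\in W} w_i X_i$ differs from $\sum_{i\in W} w_i X_i$ only by an additive constant; since entropy is invariant under such a shift, and since testing a seed conveys no information, we may assume $A \subseteq W$. By Lemma~\ref{lem:remain-sum} together with the mutual independence of $\{X_i\}_{i \in W}$ (each $X_i$ depends only on the independent edge-activation coin flips incident to $i$), we obtain $H(Z \mid X_A) = H(Z_A^-) = H\big(\sum_{i \in W\setminus A} w_i X_i\big)$. Each marginal $p_i = \pr[X_i=1]$ is computed directly from the edge probabilities via $p_i = 1 - \prod_{(j,i)\in E}(1-\lambda_{ji})$ in polynomial time.

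Next I would compute the pmf of $S := \sum_{i\in W\setminus A} w_i X_i$. Enumerating $W \setminus A = \{i_1, \dots, i_m\}$, maintain a table $f_t$ over the possible values of the partial sum $\sum_{r\le t} w_{i_r} X_{i_r}$, initialized so that $f_0$ places all mass on $0$, and updated by the recurrence $f_t(\ell) = (1-p_{i_t})\, f_{t-1}(\ell) + p_{i_t}\, f_{t-1}(\ell - w_{i_t})$. After $m$ steps, $f_m$ is the exact pmf of $S$. The support of $S$ lies in $\{0,1,\dots,\sum_i w_i\}$ for integer weights, so each table has polynomially many entries provided the weights are polynomially bounded integers; in particular, in the uniform-weight setting $w_i = 1$ the table has size $m+1$ and the whole procedure runs in $O(m^2)$ time, recovering the Poisson-binomial pmf. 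Finally $H(Z\mid X_A) = H(S) = -\sum_{\ell} f_m(\ell)\log f_m(\ell)$, evaluated in time linear in the support size.

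The one real subtlety — and the point I would be careful to state explicitly — is the dependence on the weights: with arbitrary real weights, $S$ could take exponentially many distinct values and its pmf would not admit a polynomial-size representation, so the dynamic program would not be polynomial. The lemma should therefore be read for uniform weights (or, more generally, non-negative integer weights with $\sum_i w_i$ polynomially bounded), which is exactly the regime in which the paper states its exact-computation results; under that assumption the argument above is exact and runs in polynomial time. The remaining steps — computing the $p_i$, running the convolution, and plugging into the entropy sum — are routine.
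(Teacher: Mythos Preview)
Your proposal is correct and follows essentially the same approach as the paper: use Lemma~\ref{lem:remain-sum} together with the independence of $\{X_i\}_{i\in W}$ to reduce $H(Z\mid X_A)$ to $H(Z_A^-)$, compute the pmf of $Z_A^-$ by direct convolution in $O(|W|^2)$ time, and read off the entropy. Your treatment is in fact more careful than the paper's on one point: the paper silently assumes uniform weights when it asserts that $Z_A^-$ is Poisson binomial, whereas you explicitly flag that the convolution DP remains polynomial only when $\sum_i w_i$ is polynomially bounded (in particular, for $w_i=1$), which is a genuine and correctly identified restriction.
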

\begin{proof} 
For any subset $A\subseteq W$, $Z_{A}^-$ follows a Poisson binomial distribution with parameters $\{p_i\}_{i\in W\setminus A} $.
We can compute the probability distribution of $Z_{A}^-$, using Direct Convolution method \cite{biscarri2018simple} in $O(|W|^2)$ time. This allows efficient computation of the entropy $H(Z_A^-)=H(Z|X_A)$.
\end{proof}

    \paragraph{Greedy heuristic.} 
    Due to all the node states being independent, $\obj(A) = I(Z; X_A)$ is supermodular from Observation~\ref{obs:ind-supmod}. As $I(Z;X_A)=H(Z) - H(Z|X_A)$ and $H(Z)$ is constant, $H(Z|X_A) = H(Z_A^-)$ is submodular in $A$. Here, \prob{} is minimizing a submodular function with cardinality constraints, which is generally NP-hard. Thus, in Algorithm 
    \ref{alg:1hop_greedy} 
    (in the supplement) we have a greedy heuristic which uses the exact computations of $H(Z|X_{A\cup \{i\}}) = H(Z_{A\cup \{i\}}^-)$ (based on Lemma~\ref{lem:onehop}) for each node $i\in W$ in the greedy step. 
\paragraph{Complexity. } The greedy heuristic takes $O(k|W|^3)$ under the assumption that the greedy step of computing $H(Z|X_{A\cup \{i\}})$ for a node $i$ takes $O(|W|^2)$ time.

\subsection{Rooted Trees}
Consider a tree network $T_r$ of size $n$ with a single source at the root $r$, under $IC(\boldsymbol{\p},\infty)$. We want to find a subset $A\subseteq V$ such that $\obj(A)$ is maximized, which is equivalent to minimizing $H(Z|X_A)$. 

\paragraph{Computing $H(Z|X_A)$.} Given a node
set $A$,
\textsc{EntropyOnTree} (Algorithm~\ref{alg:entr_tree} in the supplement) provides an exact method to compute the conditional entropy $H(Z|X_A)$ of the prevalence on a rooted tree with the root being the source of infection. It uses the following subroutines (whose pseudocodes appear in the supplement):\\
1. \textsc{Feasible}($T_r,\mathbf{x})$: filters out zero-probability infection status vectors~$\mathbf{x}$.\\
2. \textsc{Contract}($T_r, \Gamma_{rv}$): contracts the path from $r$ to $v$.\\
3. \textsc{Remove}($T_r, v$): removes the node $v$ from the tree.\\
4. \textsc{MessagePassing}($\tilde{T}_r, \boldsymbol{\p}, v$): computes the unconditional prevalence distribution, $P(Z_{\tilde{T}_r})$.
\begin{lemma}
\label{lem:tree}
    For any subset $A\subseteq T_r$,~\textsc{EntropyOnTree} exactly computes the conditional entropy of the prevalence $H(Z|X_A)$.
\end{lemma}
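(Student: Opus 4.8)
The plan is to verify that \textsc{EntropyOnTree} correctly implements the atomic decomposition
\[
H(Z\mid X_A) \;=\; \sum_{x \in \mathrm{supp}(X_A)} p(x)\, H\big(Z \mid X_A = x\big),
\qquad H(Z\mid X_A=x) = -\!\sum_z P(z\mid x)\log P(z\mid x),
\]
so it suffices to establish that (i) \textsc{Feasible} enumerates exactly the support of $X_A$, and (ii) for each feasible $x$ the algorithm produces the correct marginal $p(x)$ together with the correct conditional law $P(Z\mid X_A=x)$. For (i) I would invoke the basic structural fact about $IC(\boldsymbol{\p},\infty)$ on a tree whose only source is the root: a node $v$ is infected iff \emph{every} edge on the root-to-$v$ path $\Gamma_{rv}$ transmits; consequently $x$ has positive probability iff it never marks a node $v\in A$ infected while marking some $A$-ancestor of $v$ uninfected (monotonicity down root-paths). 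This is precisely the zero-probability test \textsc{Feasible} applies, and an easy argument shows every monotone $x$ is realizable by some edge configuration.

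For (ii) I would first record two consequences of the path-activation characterization together with edge independence. (a) If we condition on a set $C$ of nodes all being infected, the union $S_C=\bigcup_{v\in C}\Gamma_{rv}$ of their root-paths is a subtree rooted at $r$ every node of which is forced infected, contributing the deterministic amount $W_{S_C}=\sum_{i\in S_C}w_i$ to $Z$, with probability $\prod_{e\in E(S_C)}\p_e$. (b) Conditioned on $S_C$ being fully infected, the cascades in the distinct subtrees hanging off $S_C$ are mutually independent, each a fresh IC process whose root-edge fires into it, so $Z = W_{S_C}+\sum_j Z^{(j)}$ with independent summands. The subroutine \textsc{Contract} realizes (a) by collapsing each $\Gamma_{rv}$ into the root super-node, while \textsc{MessagePassing} realizes (b) by a bottom-up convolution of per-subtree prevalence generating functions on the contracted tree; correctness of \textsc{MessagePassing} follows by induction on tree size, since at an infected node the prevalence strictly below it is the independent sum over its children of (edge fires)$\times$(subtree prevalence). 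Using Lemma~\ref{lem:remain-sum} one may equivalently track $Z_A^-$ and shift by the deterministic $Z_A=\sum_{i\in A}w_i x_i$.

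The remaining — and most delicate — ingredient is conditioning on the uninfected set $A_0=\{v\in A: x_v=0\}$, which is what \textsc{Remove} is for: by feasibility no node of $A_0$ lies in the infected core, and since an uninfected $v$ forces its whole subtree to contribute $0$, each $v\in A_0$ can be localized to the single hanging subtree containing it. The subtlety is that several constraints ``$v$ uninfected'', ``$v'$ uninfected'' within the same subtree are \emph{not} independent, because the corresponding paths share a prefix; I would resolve this by inclusion--exclusion, writing
\[
P\big(X_{A_1}=1,\,X_{A_0}=0,\,Z=z\big)=\sum_{B\subseteq A_0}(-1)^{|B|}\,P\big(X_{A_1\cup B}=1,\,Z=z\big),
\]
and evaluating each all-infected term by (a)--(b) as
$P\big(X_{A_1\cup B}=1,Z=z\big)=\big(\prod_{e\in E(S_{A_1\cup B})}\p_e\big)\cdot P\big(W_{S_{A_1\cup B}}+\sum_j Z^{(j)}=z\big)$,
with the $Z^{(j)}$ laws supplied by \textsc{MessagePassing} on the contracted-and-pruned tree; summing over $z$ yields $p(x)$ for the normalization. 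Equivalently, a single message-passing pass can propagate the needed conditional distributions so that \textsc{Remove} simply deletes the forced-empty subtrees with the correct residual weights. I expect this dependency bookkeeping — showing that \textsc{Contract}/\textsc{Remove} reduce each conditional instance to a genuine smaller IC instance with a known root-infection event whose prevalence \textsc{MessagePassing} computes exactly — to be the crux; overall correctness then follows by induction on $|A|$ (equivalently on the number of \textsc{Contract}/\textsc{Remove} operations), with the outer sum over feasible $x$ reassembling $H(Z\mid X_A)$.
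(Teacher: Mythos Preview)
Your overall scaffold—decompose $H(Z\mid X_A)$ over feasible configurations, then separately justify $P(X_A=\mathbf{x})$ and the conditional law $P(Z\mid X_A=\mathbf{x})$—is exactly the paper's approach, and your treatment of \textsc{Feasible}, \textsc{Contract}, and \textsc{MessagePassing} aligns with it. The divergence is at the uninfected-nodes step: the paper simply asserts that $P(X_A=\mathbf{x})=\prod_{v\in A_1}p(v)\prod_{u\in A_0}(1-p(u))$ and that contracting live-edge paths plus removing the $A_0$ nodes yields a tree $\tilde T_r$ with $H(Z_{\tilde T_r})=H(Z\mid X_A=\mathbf{x})$, without engaging the dependency you correctly flag as delicate.

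Your inclusion--exclusion over $A_0$ is mathematically sound, but it certifies a \emph{different} computation—one summing $2^{|A_0|}$ all-infected instances—rather than \textsc{EntropyOnTree}'s single \textsc{Contract}/\textsc{Remove} sweep followed by one \textsc{MessagePassing} call. Since the lemma is about that specific algorithm, the gap sits precisely at your hedged sentence (``Equivalently, a single message-passing pass\ldots''): you still owe the argument that the iterative \textsc{Remove}, with $p(v)$ recomputed on the evolving $\tilde T_r$, realizes the same conditional. Note also that the dependency is not only between \emph{pairs} of uninfected nodes sharing a prefix; even a \emph{single} constraint $X_v=0$ tilts the distribution of every ancestor of $v$ between $v$ and the contracted root, and hence of any sibling subtrees hanging off that path—something a bare subtree deletion does not by itself capture. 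The paper's proof glosses over this point; your refusal to gloss over it is appropriate, but as written your argument leaves the correctness of \textsc{EntropyOnTree} itself unproved.
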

\noindent
\textbf{Proof sketch.}
We compute $P(X_A = \mathbf{x})$ for each $\mathbf{x}\in \mathcal{X}_A$ using the probability of the live-edge paths from the (un)-infected nodes. To compute $H(Z|X_A = \mathbf{x})$, we contract the live-edge paths and remove uninfected nodes to reduce the problem to computing unconditional prevalence distributions $P(Z_{\tilde{T}})$. This is done using a message-passing algorithm from \cite{burkholz2021cascade}. Finally, we compute $H(Z|X_A) = \sum_{\mathbf{x}\in \mathcal{X}_A} P(X_A=\mathbf{x})H(Z|X_A=\mathbf{x})$. More details are in the supplement.

\paragraph{Greedy heuristic.}
We use a greedy heuristic in Algorithm~\ref{alg:tree_greedy}  (in the supplement) which repeatedly adds nodes to the query set based on the exact computations of the conditional entropy of the prevalence using~\textsc{EntropyOnTree}.

\paragraph{Complexity.} \textsc{MessagePassing} has a complexity of $O(n^2)$ where $n$ is the size of the tree. It is called a maximum of $2^{k}$ times by \textsc{EntropyOnTree}. The greedy heuristic calls \textsc{EntropyOnTree} $O(kn)$ times, resulting in a complexity of $O(k2^{k}n^3)$. The running time is polynomial when $k$ is fixed.

\subsection{Path Networks}

We are given a path network of size $n+1$ with the node set being $V = \{0, 1, \ldots, n\}$. Assume a single source at node $0$. The \prob{} problem is to find a subset of nodes $\{i_j; i_j \in V\}$ which maximize the prevalence mutual information $\obj$ with budget $k$. Equivalently, we can find the sequence of optimal separations, $\{g_1 = i_1, g_2=i_2-i_1,\dots, g_k=i_k - i_{k-1}\}$, i.e., $\{g_j=i_j-i_{j-1}; j=1,\dots, k, g_j \geq 1\}$ where $i_0$ is defined to be $0$. Consider the version of this problem where we remove the integrality constraint on the variables, i.e., we allow $g_j \in \mathbb{R}_{\geq 0}$. With this relaxation, we can obtain a closed form solution.

\begin{theorem}
\label{lem:path}
For budget $k$, a sufficiently long path, i.e.,  $n > -\log (k+1)/\log \p$, and $IC(\p,\infty)$ with homogeneous disease probability $\p\in (0,1)$, the \prob{}-optimal separation  without integrality constraints is $\{g_j = \log (\frac{k+1-j}{k+2-j}) / \log \lambda; j=1,\dots, k\}$. 
\end{theorem}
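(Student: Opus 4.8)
The plan is to set up the optimization explicitly on the path and exploit the fact that on a path under $IC(\lambda,\infty)$, a node $i$ is infected exactly when all edges $1,2,\dots,i$ are live, which happens with probability $\lambda^i$; moreover the infected set is always a prefix $\{0,1,\dots,L\}$ where $L$ is a geometric-type random variable with $\Pr[L \ge i] = \lambda^i$. First I would observe that, by Lemma~\ref{lem:remain-sum}, maximizing $\obj(A) = I(X_A;Z)$ is equivalent to minimizing $H(Z \mid X_A)$, and that querying nodes at positions $0 < i_1 < i_2 < \dots < i_k$ partitions the path into segments; conditioned on the observed states $X_A = \mathbf x$, the prevalence $Z$ is determined except within the one segment $[i_{j-1}, i_j]$ where the infection ``stops'' (i.e.\ $X_{i_{j-1}}=1$ but $X_{i_j}=0$), or in the final tail beyond $i_k$ if $X_{i_k}=1$. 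Within such a segment of length $g_j = i_j - i_{j-1}$, the residual uncertainty about how far the infection progressed is captured by a truncated geometric distribution, and I would compute $H(Z \mid X_A)$ as a weighted sum over which segment contains the stopping point, weighted by the probability $\lambda^{i_{j-1}} - \lambda^{i_j}$ (plus the tail term $\lambda^{i_k}$) of that event, times the conditional entropy of a truncated geometric on $g_j$ points.

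Next I would write $H(Z\mid X_A)$ as an explicit function of the separations $g_1,\dots,g_k$ (using $i_j = g_1 + \dots + g_j$), relax $g_j \in \mathbb R_{\ge 0}$, and minimize via Lagrange multipliers or direct differentiation. The ``sufficiently long path'' hypothesis $n > -\log(k+1)/\log\lambda$ should be exactly what guarantees the unconstrained optimum has $\sum g_j \le n$, i.e.\ it fits on the path, so the path-length constraint is inactive and we are left with a clean unconstrained (or only $g_j \ge 1$-constrained, which the relaxation drops) problem. I expect the first-order conditions to telescope: setting $\partial H / \partial g_j = 0$ for each $j$ should yield a recursion relating consecutive $i_j$'s, whose solution is $\lambda^{i_j} = (k+1-j)/(k+1)$ (so that the $k+1$ ``mass levels'' $1, \lambda^{i_1}, \dots, \lambda^{i_k}, 0$ are equally spaced — an equipartition-of-probability phenomenon), and then $g_j = i_j - i_{j-1} = \log\!\big(\tfrac{k+1-j}{k+2-j}\big)/\log\lambda$, matching the claimed formula. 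I would also check the second-order / convexity condition (or argue directly that each truncated-geometric entropy term is concave/convex in the relevant variable) to confirm this stationary point is the global minimum of $H(Z\mid X_A)$, hence the global maximum of $\obj$.

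The main obstacle I anticipate is getting the conditional-entropy bookkeeping exactly right: the event that the infection stops strictly inside segment $j$ versus exactly at a queried node, the handling of the final tail segment beyond $i_k$ (which is an untruncated geometric and contributes a different entropy term), and making sure the algebra of the truncated-geometric entropies simplifies enough that the derivative conditions decouple into the clean telescoping form above. A secondary subtlety is justifying the relaxation cleanly — arguing that the continuous optimum is genuinely attained in the interior of the feasible region under the stated hypothesis on $n$, and noting (as the theorem statement implicitly does) that this is the solution to the relaxed, non-integral problem, so no rounding argument is needed. Once the objective is written down correctly, the optimization itself should be short.
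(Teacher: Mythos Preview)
Your approach is correct in principle but takes a different route from the paper, and the paper's route is significantly cleaner. You decompose $I(X_A;Z)=H(Z)-H(Z\mid X_A)$ and set out to minimize $H(Z\mid X_A)$, which forces you to deal with truncated-geometric entropies inside each segment and, as you yourself flag, an awkward untruncated tail term beyond $i_k$ whose entropy depends on $n$. The paper instead reverses the mutual information: $I(X_A;Z)=H(X_A)-H(X_A\mid Z)$, and on a path with source at $0$ the prevalence $Z$ determines every node state, so $H(X_A\mid Z)=0$ and the objective is simply $H(X_A)$. By the chain rule and the Markov property of the path this factors as
\[
H(X_A)=h(\lambda^{g_1})+\lambda^{g_1}h(\lambda^{g_2})+\cdots+\lambda^{g_1+\dots+g_{k-1}}h(\lambda^{g_k}),
\]
a sum of binary entropies with no tail term and no dependence on $n$ at all (beyond the feasibility constraint $\sum g_j\le n$). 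The optimization is then done by backward induction: $\partial/\partial g_k=0$ gives $\lambda^{g_k}=1/2$, and assuming the formula for $g_m,\dots,g_k$ the derivative in $g_{m-1}$ telescopes to $h'(\lambda^{g_{m-1}})+\log(k-m+2)=0$, yielding the stated $g_{m-1}$.

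What each approach buys: yours is the more ``generic'' conditional-entropy computation and would work even when $Z$ does not determine $X_A$, but here it creates exactly the bookkeeping obstacle you anticipate (the tail term, which in fact carries all the $n$-dependence and would have to be shown to cancel against $H(Z)$). The paper's reversal exploits the path-specific determinism $H(X_A\mid Z)=0$ to sidestep that entirely, leaving a short calculus exercise. Your equipartition interpretation $\lambda^{i_j}=(k{+}1{-}j)/(k{+}1)$ is correct and matches the paper's solution after telescoping the product of the $\lambda^{g_j}$.
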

\paragraph{Proof sketch.}
$\obj{\{X_{i_j}\}} = I(Z;\{X_{i_j}\}) = H(\{X_{i_j}\}) - H(\{X_{i_j}\}|Z) = H(\{X_{i_j}\})$ because once prevalence is given, all node states become deterministic in this path setting.~$H(\{X_{i_j}\})=h(\p^{g_1})+\p^{g_1}h(\p^{g_2})+ \p^{g_1+g_2}h(\p^{g_3})+\dots+\p^{\sum_{m=1}^{k-1} g_m}h(\p^{g_k})$ by chain rule, where $h$ is the binary entropy function. Using induction, we find the optimal separations in reverse-order via the first derivative test. The details are in the supplement.

\noindent 
\noindent
\textbf{Remark:}
   Since the optimal separations $\{g_j\}$ in Lemma~\ref{lem:path} may be fractional, our proposed simple heuristic in
   Algorithm~\ref{alg:path}
   (in the supplement) checks the nearby integral solutions. It has a time complexity of $O(k)$.

\subsection{Sampling-Based Methods}

On general networks, we rely on sampling-based techniques for estimating the mutual information function $\obj$ for any given query-set $A$.
The sampling method first constructs a dataset $D$ from large number of i.i.d. samples from the cascade distribution $IC(\boldsymbol{\p},d)$. Then we estimate the conditional entropy of the prevalence from the observations in $D$, ignoring the probability of any infection vectors and sizes  not seen in the dataset. The method is described in Algorithm \ref{alg:emp_entr}.

\paragraph{Greedy heuristic.} We provide a greedy heuristic which sequentially adds nodes to the query-set, maximizing the information gain in each step. Equivalently, the greedy step adds a node $v'$ to current query-set $A$ such that, $ v' \in \argmin_{v \in V\setminus A} H(Z|X_{A \cup\{v\}})$.
\textsc{GreedyMI} is described in Algorithm~\ref{alg:greedyquery} and has the empirical entropy subroutine in Algorithm~\ref{alg:emp_entr}.

\begin{algorithm}[tb]
\caption{\tool{}}
    \label{alg:greedyquery}
    \textbf{Input:} A contact network $G=(V,E)$, disease parameters $IC(\boldsymbol{\p}, d)$, budget $k$, number of cascade samples $T$.\\
    \textbf{Output:} Query-set $A\subseteq V$.
    \begin{algorithmic}[1]
        \State $A \leftarrow \phi$
        \State Sample $T$ i.i.d. disease cascades from $IC(\boldsymbol{\p}, d$) 
        \State Construct a matrix $D$ such that,
        $D_{i,j} = 1 $ if $j$-th node is infected in the $i$-th cascade sample, $0$ otherwise.
        \For{ $j$ = 1 to $k$}
            \For{each $v \in V\setminus A$}
                \State $\delta_v \leftarrow $ \Call{EmpiricalEntropy}{$D, A\cup \{v\}$}\
            \EndFor
            \State $v^* \leftarrow \argmin_{v\in V\setminus A} \delta_v$ 
            \State $A \leftarrow A \cup \{v^*\}$
        \EndFor
        \State\textbf{return} {$A$}
    \end{algorithmic}
\end{algorithm}

\begin{algorithm}[tb]
\textbf{Input:} Dataset of cascade samples $D$, subset of nodes $A$\\
\textbf{Output:} Empirical conditional entropy of prevalence $H_D(Z|A)$
\caption{\textsc{EmpiricalEntropy}}
\label{alg:emp_entr}
    \begin{algorithmic}[1]
        \State For each $\mathbf{x} \in \mathcal{X}_A$, compute the empirical probability $P_D(X_A = \mathbf{x})$,
        empirical conditional prevalence, $P_D(Z| X_A = \mathbf{x})$, thereby its entropy $H_D(Z|X_A = \mathbf{x})$.
        \State Compute the empirical conditional entropy, $H_D(Z|A) = \sum_{\mathbf{x} \in \mathcal{X}_A} P_D(X_A = \mathbf{x}) H_D(Z|X_A = \mathbf{x})$
        \State \textbf{return}  $H_D(Z|A)$
    \end{algorithmic}
\end{algorithm}

\paragraph{Complexity analysis.} The subroutine \ref{alg:emp_entr} can be implemented with a hashing-based grouping of the dataset $D$.  On average, the time complexity of \textsc{EmpiricalEntropy} is $O(kn)$. The time complexity of sampling from $IC(\boldsymbol{\p}, d)$ is $O(n+ m)$ where $m$ is the number of edges. This leads to a complexity of $O(T(n+m) + k^2 n^2)$ for \textsc{GreedyMI}, where $T$ denotes the number of cascade samples in Algorithm \ref{alg:greedyquery}.

\paragraph{Sample complexity.} The number of samples required for consistent estimation depends on the joint alphabet size of $P(Z,X_A)$, which can be as large as $n2^k$, but is usually much smaller in practice due to infeasible state configurations. With our plug-in estimator, we have a sample complexity, $O(n2^k/\epsilon^2)$ where $\epsilon$ is the desired additive error.

\section{Experiments}

We conduct extensive experiments using disease simulations on various network topologies, seeding scenarios, and disease transmission probability regimes to investigate the following research questions:\\
1. How does \textsc{GreedyMI} perform compared to baseline methods in identifying good node sets for disease surveillance?\\
2. What are the distinguishing topological and epidemiological properties of surveillance nodes selected by \textsc{GreedyMI} versus baseline methods?\\
3. What is the minimum number of cascade samples required for \textsc{TestPrev} to converge to a stable solution?

\newcommand{\pl}{\texttt{PowLaw}}
\newcommand{\er}{\texttt{ER}}
\newcommand{\icu}{\texttt{HospICU}}
\newcommand{\ksource}{\texttt{Known-source}}
\newcommand{\rsource}{\texttt{Random-source}}

\paragraph{Datasets and Methods. }
We use both synthetic and real-world networks. See Table~\ref{tab:datasets} for a summary.
\begin{enumerate}
    \item \pl: We construct several power-law networks using the Chung-Lu random graph 
    model~\cite{chung2002average}. The power-law exponent is set to $\gamma=2.5$. 
\item \er: We generate several \erdosrenyi{} networks using the $G(n,q)$ model with 
$n=1000, q=0.05$.
\item \icu: This is a contact network based on the co-location of patients and healthcare providers in the ICU of a large hospital (name withheld for anonymity), built using Electronic Health Records (EHR) collected between Jan 1, 2018 and Jan 8, 2018.
This network is quite relevant to the considered surveillance problem in the context of hospital acquired infections~\cite{heavey2022provable,jangRisk2021}.
\end{enumerate}
In the case of synthetic networks, we use~10 replicates for each graph family. Also, 
in each case, we use the largest connected component.

\begin{table}[tb]

\begin{center}
\begin{footnotesize}
    \begin{tabular}{|p{.55in}|p{0.34in}|p{0.42in}|p{0.33in}|p{0.72in}|}
    \hline
    \textbf{Network}
    & \textbf{Nodes} 
    & \textbf{Edges}
    & \textbf{Clust. coeff.} & 
    \textbf{Avg. Shortest Path Length}\\
    \hline \hline
    \pl & 675.3 & 1118.8 & 0.052 & 4.1\\
    \hline
    \er & 1000 & 24912.0 & 0.049 & 2.03\\
    \hline
    \icu & 879 & 3575 & 0.599 & 4.31\\
    \hline
    \end{tabular}
\end{footnotesize}
\end{center}
\caption{Networks and their properties. For the synthetic networks, average values are
reported across~10 replicates.}
\label{tab:datasets}
\end{table}

\paragraph{Disease scenarios.} We evaluate our methods on a range of model parameters
for $IC(\p,d)$. In each case, we assume a homogeneous disease probability setting with 
$\p \in \{0.1, 0.2\}$ for \pl{} and \icu, and $\p\in\{0.05,0.07\}$ for \er. We set the 
maximum number of hops~$d \in \{2,4\}$. We evaluate our method for a budget up to $k=10$.
These regimes are chosen so that the $d$-hop prevalence variance is high enough for surveillance to have an effect. Each simulation 
instance is initiated with a single seed node. We consider two seeding scenarios,\\
1. \ksource: The seed node is fixed for all cascades and chosen randomly from among the nodes. We consider 
10 replicates. Accordingly, we have 10 sets of cascades. \\
2. \rsource: For each cascade, the seed node is picked uniformly at random. 
In our experiments, we sample~30,000 cascades for each disease scenario.

\paragraph{Baselines. } We compare our method against these baselines, which have been used in prior work on surveillance based on epidemic metrics~\cite{leskovec2007cost, shao2018forecasting, marathe:cacm13}:
\begin{enumerate}
\item \textsc{Degree}: The top-$k$ nodes by degree form the surveillance set.
\item \textsc{Vulnerable}: In the set of sampled cascades used to compute the conditional entropy, we find the top-$k$ most frequently infected nodes.
This quantity is indirectly tied to $P(X_A)$ in the prevalence mutual information
expression in Equation~\ref{eqn:mi}.
\end{enumerate}

\paragraph{Evaluation metrics. } For evaluating performance, we use two metrics:\\
1. Prevalence mutual information: Given a set of nodes $A$, $\obj({A}) = I(Z;X_A)$. If this is high, the chosen node subset has high mutual information with the prevalence.\\
2. Expected standard deviation of the conditional prevalence: If $A$ is the subset of nodes selected by a method, this is computed as
     $E[\sigma(Z|X_A)] = \sum_{\mathbf{x} \in \mathcal{X}_A} P(X_A = \mathbf{x})\sigma(Z|X_A = \mathbf{x})$,
where $\sigma^2(Z|X_A = \mathbf{x})$ is the variance of~$Z$ conditioned on~$X_A=
\mathbf{x}$.
This measures the expected spread around the mean of the prevalence upon querying a node
set.

\subsection{Results}
\paragraph{Performance of \textsc{GreedyMI} versus baselines.}
Figure~\ref{fig:perf_greedy_fixed} shows the performance of \textsc{GreedyMI} against  
baselines under the \ksource{} seeding. We report the average scores over~10 replicates. 
We observe that \textsc{GreedyMI} consistently 
outperforms baselines, with the performance gap widening with increasing budget. We can observe a reduction in the expected standard deviation in prevalence ranging from  5\% in \er{} to 80\% in \icu.
We also 
observe that this expected standard deviation rapidly decreases with the 
first few node selections, followed by a more gradual decrease. This ``diminishing 
returns'' effect is especially pronounced in \pl{} and \icu{} networks while it is nearly
absent in \er{} networks. This is likely due to the shape of the degree distribution in 
each network. On \pl, \textsc{Degree} is nearly as good as \textsc{GreedyMI}, which is
linked to the structure of the network where there is a core of high-degree nodes. On
\icu, we observe that \textsc{Vulnerable} is superior to \textsc{Degree}, indicating that
dynamics-based selection can outperform structure-based selection.
In Figure \ref{fig:perf_greedy_random}, we show the performance of the methods under 
\rsource{} seeding. For \pl{} and \er, we average the scores over 10 network replicates.
We observe that \textsc{GreedyMI} still performs better than the baselines, although the 
gap is smaller than before.  In the \ksource{} scenario, the source node may have low-degree neighbors--nodes that are highly vulnerable but contribute little to the overall cascade size. This structural limitation can reduce the informativeness of such nodes. Such situations rarely arise in the \rsource{} scenario, where sources are selected randomly.

\begin{figure}[htb]
    \centering
\begin{subfigure}{\columnwidth}
   \includegraphics[width=0.46\linewidth]{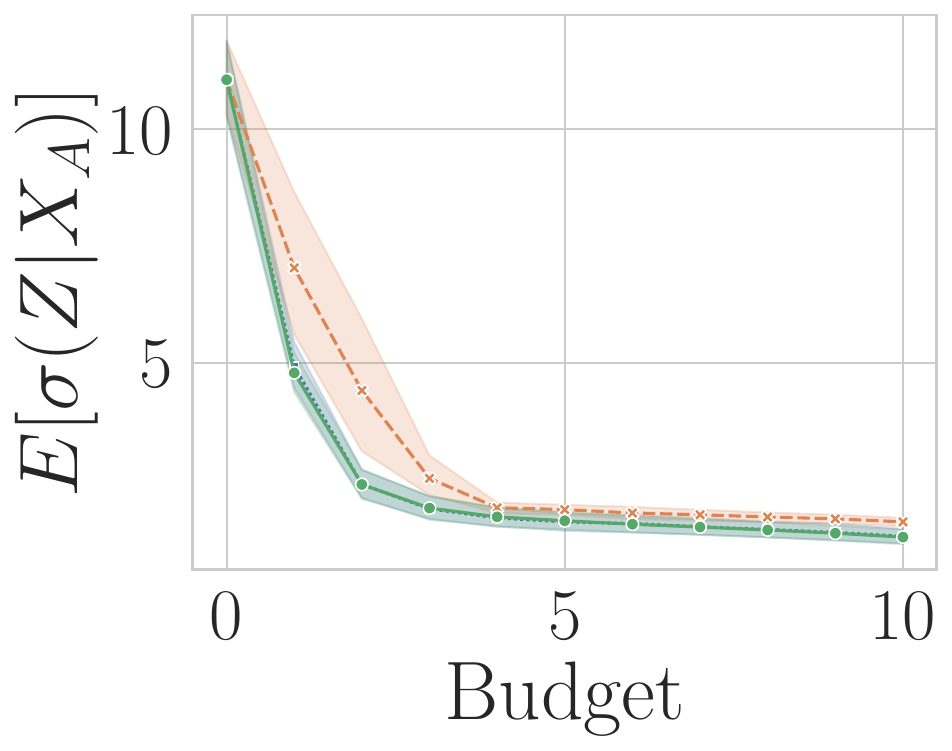}
    \includegraphics[width=0.46\linewidth]{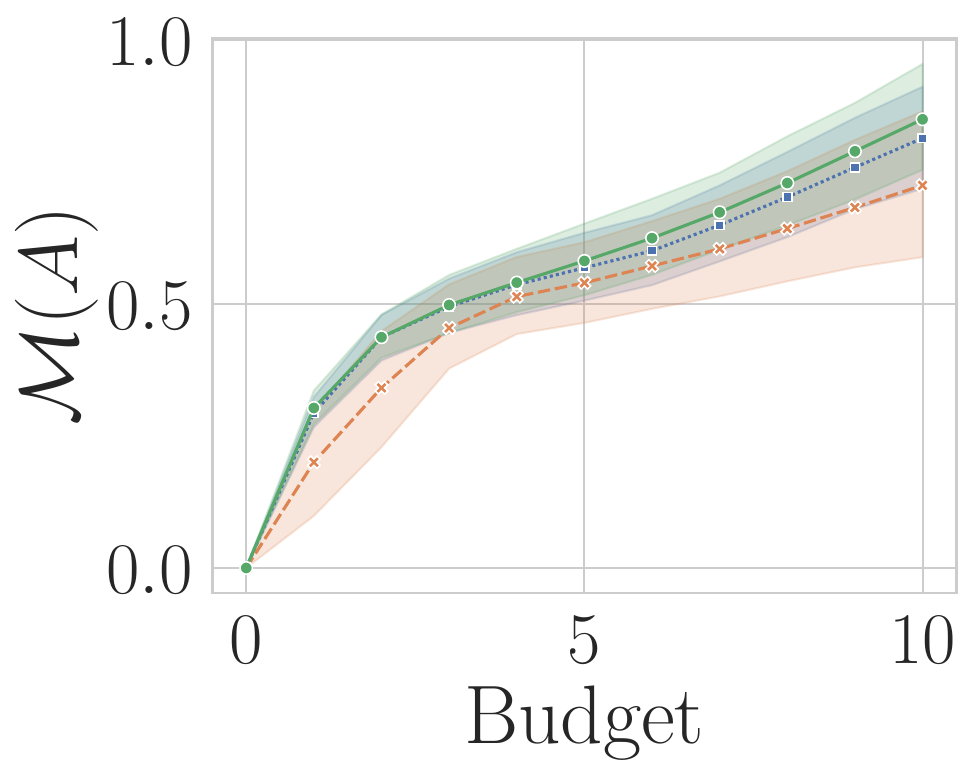}
    \caption{\pl~$IC(\p=0.1,d=4)$}
\end{subfigure}
\begin{subfigure}{\columnwidth}
    \includegraphics[width=0.46\linewidth]{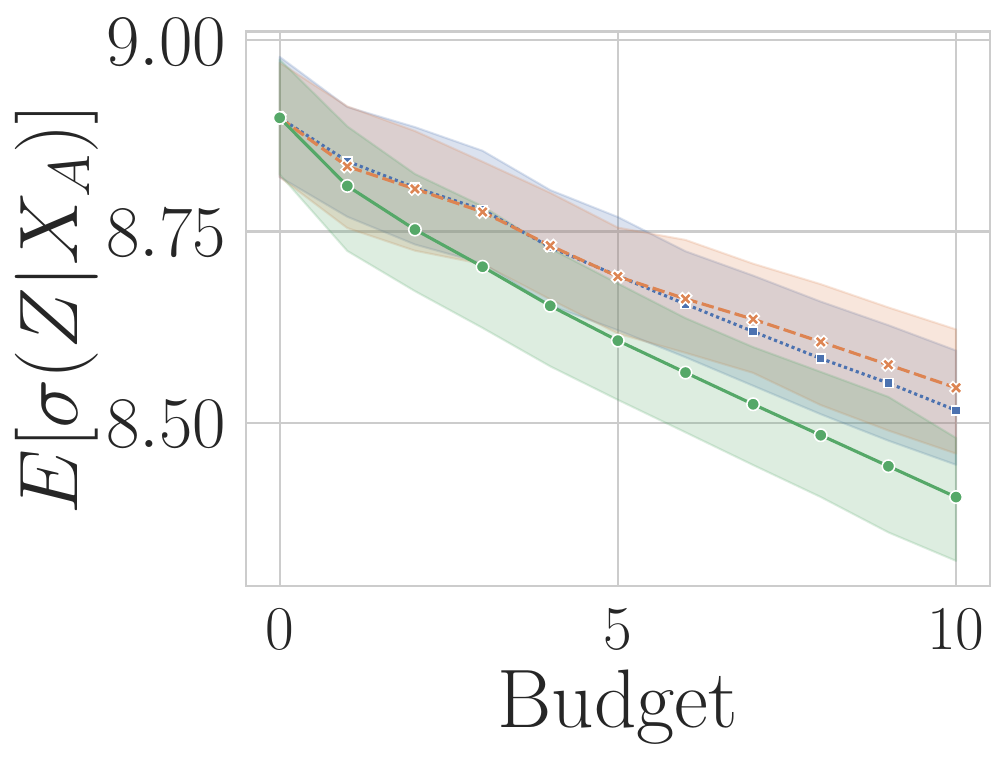}
    \includegraphics[width=0.46\linewidth]{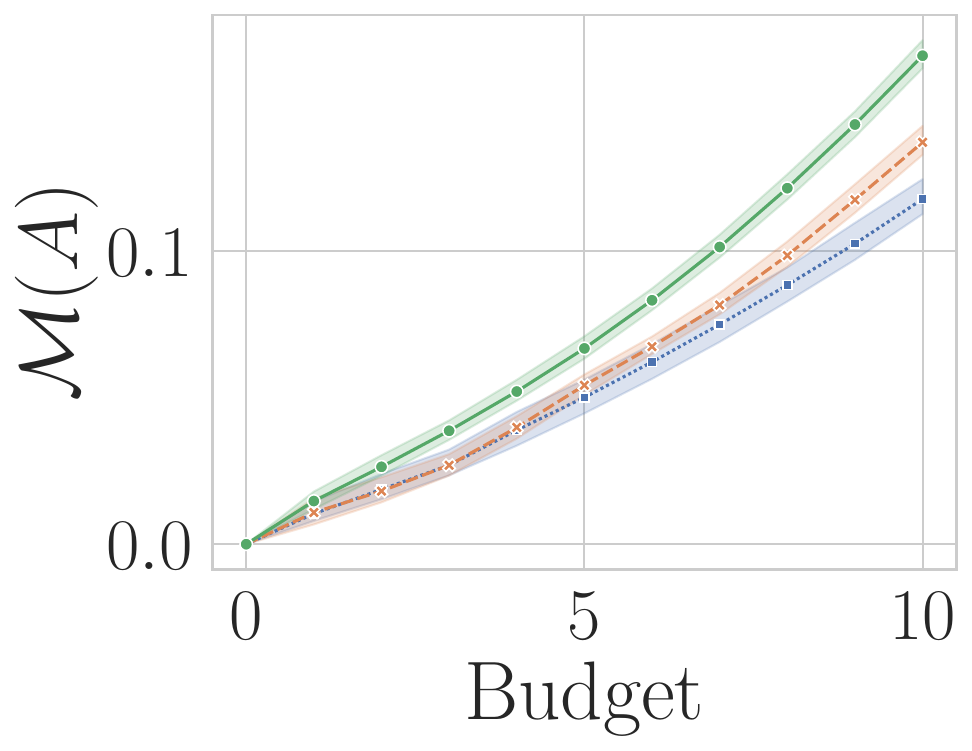}
    \caption{\er~$IC(\p=0.07,d=2)$}
\end{subfigure}
\begin{subfigure}{\columnwidth}
     \includegraphics[width=0.46\linewidth]{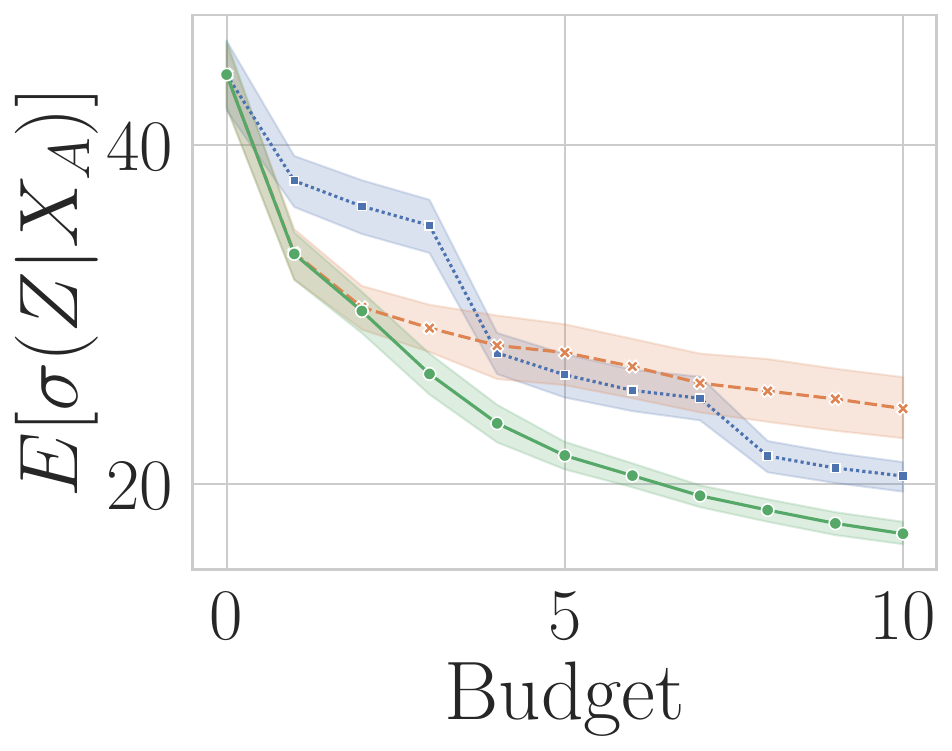}
     \includegraphics[width=0.46\linewidth]{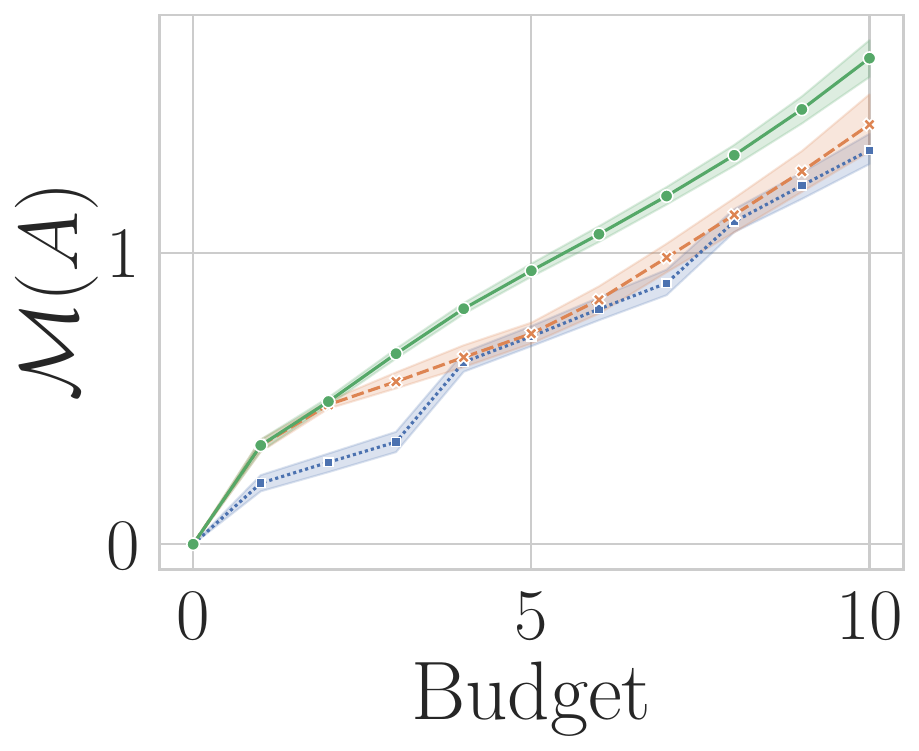}
    \caption{\icu~$IC(\p=0.2,d=4)$}
\end{subfigure}
    \includegraphics[width=0.95\linewidth]{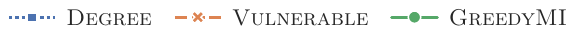}
\caption{Performance of \textsc{GreedyMI} vs baselines under \ksource~seeding (averaged over 10 runs).}
    \label{fig:perf_greedy_fixed}
\end{figure}

\begin{figure}[htb]
    \centering
    \begin{subfigure}{\columnwidth}
    \includegraphics[width=0.46\linewidth]{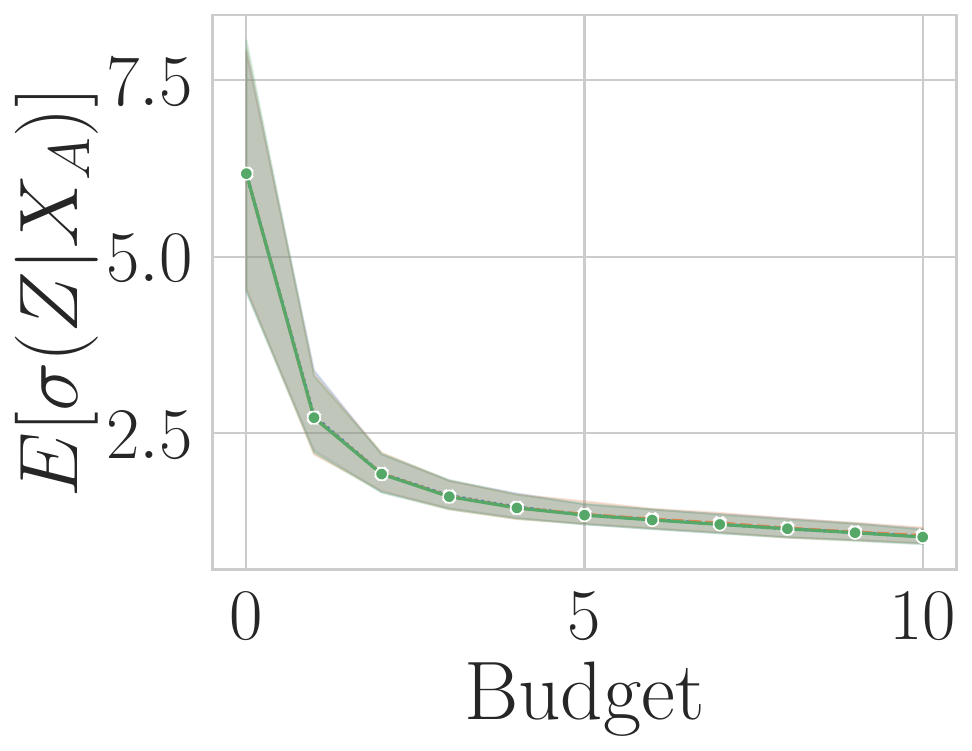}
    \includegraphics[width=0.46\linewidth]{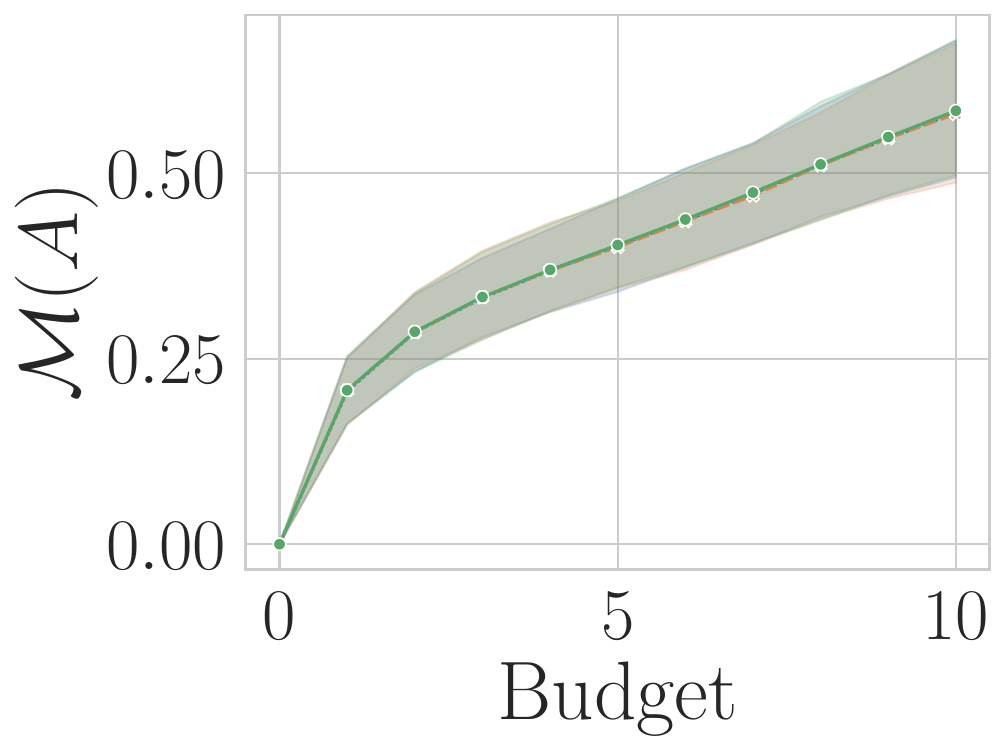}
    \caption{\pl~$IC(\p=0.1,d=4)$}
    \end{subfigure}
    \begin{subfigure}{\columnwidth}
    \includegraphics[width=0.46\linewidth]{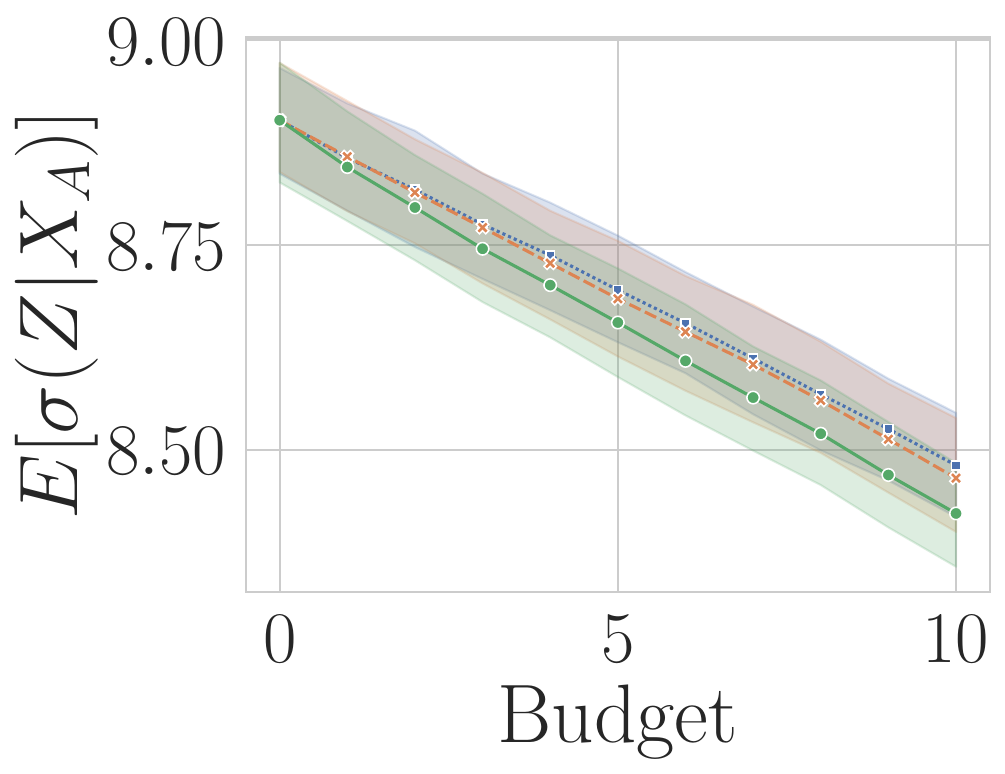}
    \includegraphics[width=0.46\linewidth]{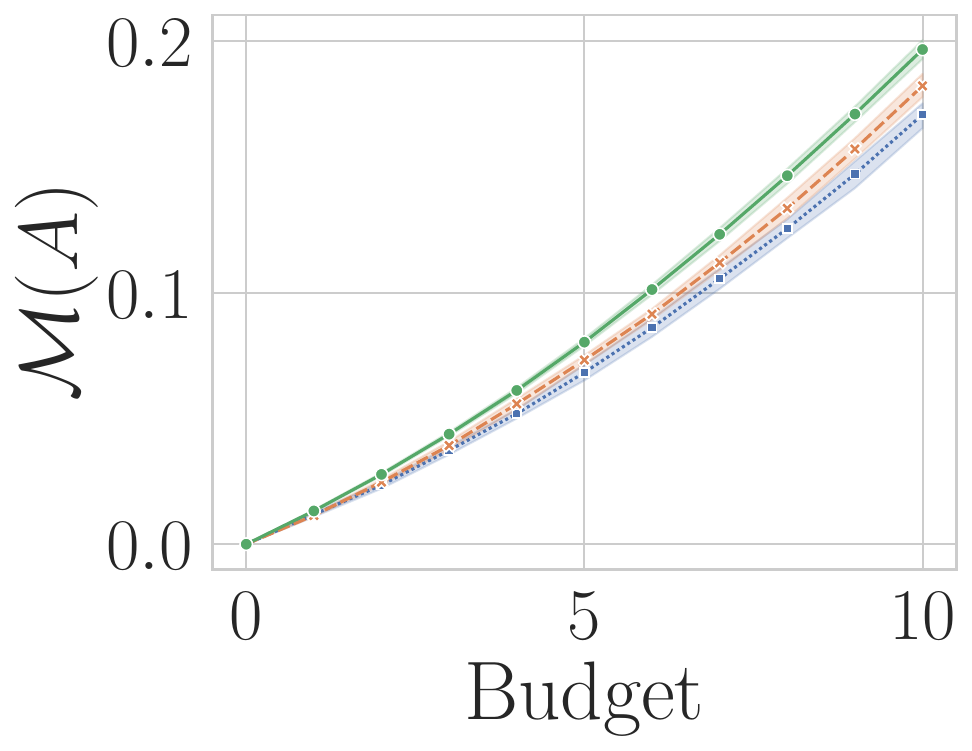}
    \caption{\er~$IC(\p=0.07,d=2)$}
    \end{subfigure}
    \begin{subfigure}{\columnwidth}
    \includegraphics[width=0.46\linewidth]{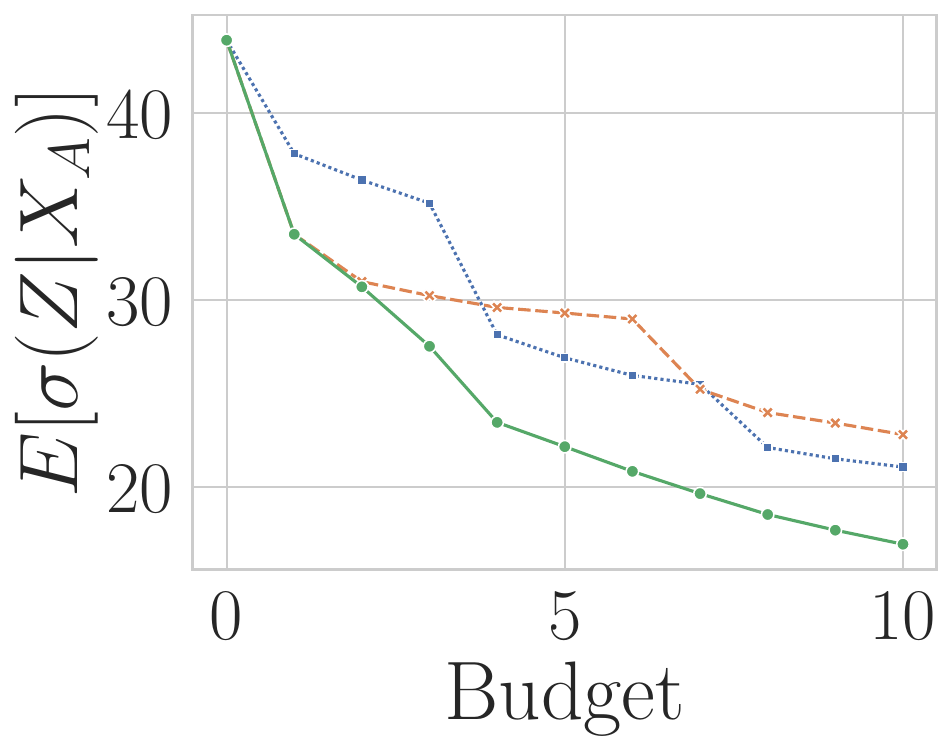}
    \includegraphics[width=0.46\linewidth]{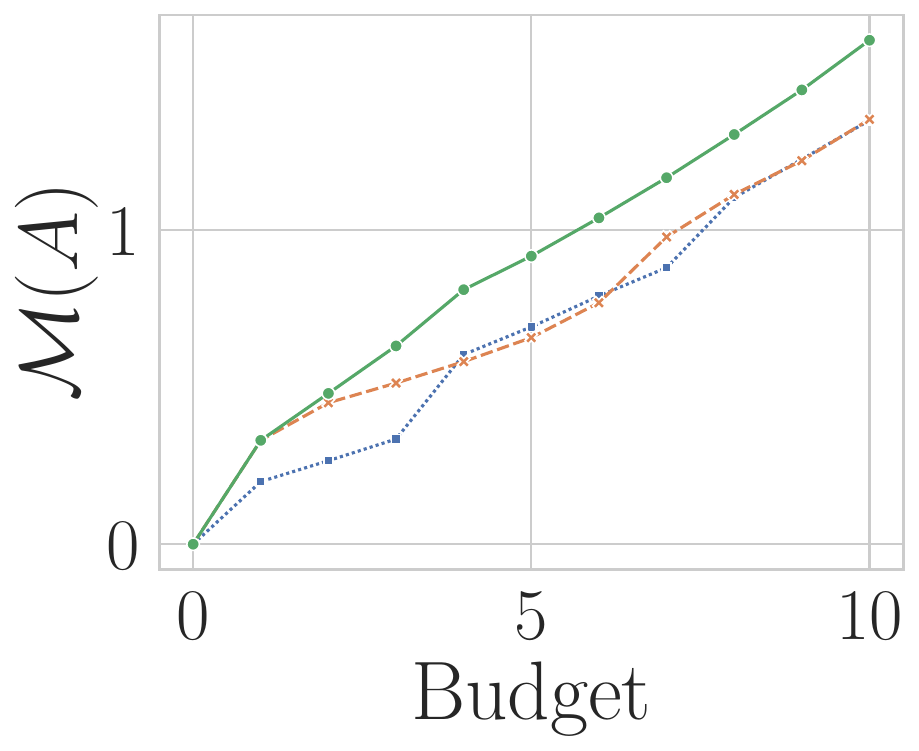}
    \caption{\icu~$IC(\p=0.2,d=4)$} 
    \includegraphics[width=0.95\linewidth]{methods_legend-2.pdf}
    \end{subfigure}
\caption{Performance of \textsc{GreedyMI} vs baselines under \rsource~seeding.}
    \label{fig:perf_greedy_random}
\end{figure}

\paragraph{Analysis of solutions.}
We analyze the solutions obtained by \textsc{GreedyMI} and the baselines to 
derive further insights into what characteristics make for a good surveillance node set. 
Specifically, the properties we consider for each node are the degree, vulnerability, 
and node-wise influence~(expected size of a cascade initiated from the node).
Figure~\ref{fig:degree_dist_random} shows the degrees of the selected nodes for different approaches along 
with those of the remaining nodes. 
Figure \ref{fig:inf_vul_scatter} shows the node-wise influence and vulnerability of the subset of nodes selected by each of the methods, with a random sample of network nodes forming the backdrop. 
We observe that although \textsc{Degree} nodes tend to have high influence, they are not
usually picked by \textsc{GreedyMI}, which has more overlap with \textsc{Vulnerable} node
sets. Any node selection algorithm must balance \emph{relevance} (i.e., information about prevalence) with \emph{redundancy} (i.e., low marginal information gain). 
Relevance depends on how vulnerable a node is and if infected, to what extent it can influence the cascade size. Very high or very low vulnerability, or low influence, all correspond to less information. 
Similarly, when states of two nodes are highly correlated, it makes sense to monitor one of the nodes, thus reducing redundancy.
Given the corresponding MI performance gaps, this suggests that \textsc{GreedyMI} balances this tradeoff better than the top-$k$ methods.

\begin{figure}[htb]
    \centering
   \begin{subfigure}{.32\linewidth} 
    \includegraphics[width=\linewidth]{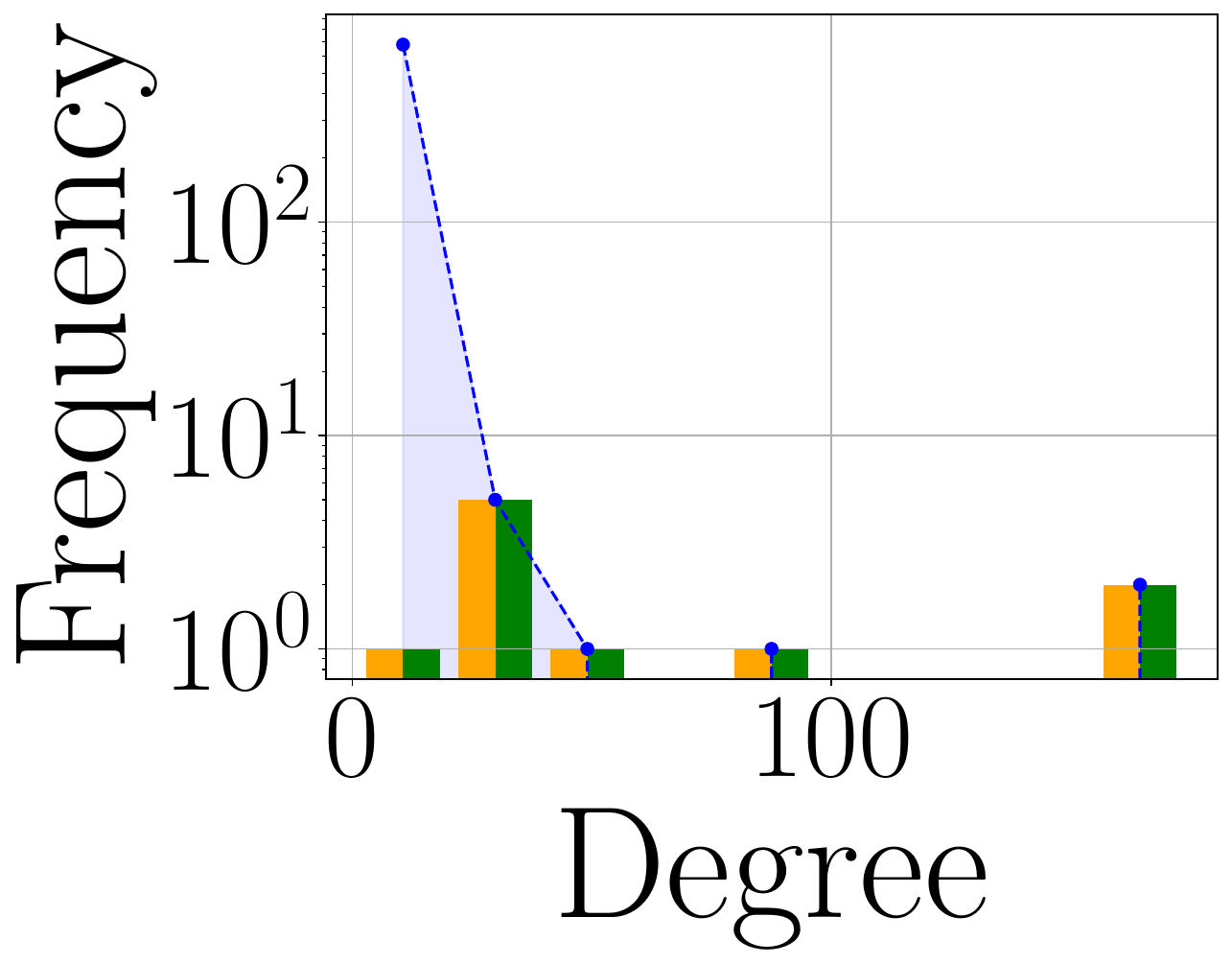}
    \caption{\pl{}\\$IC(0.1, 4)$}
    \end{subfigure}
   \begin{subfigure}{.32\linewidth} 
    \includegraphics[width=\linewidth]{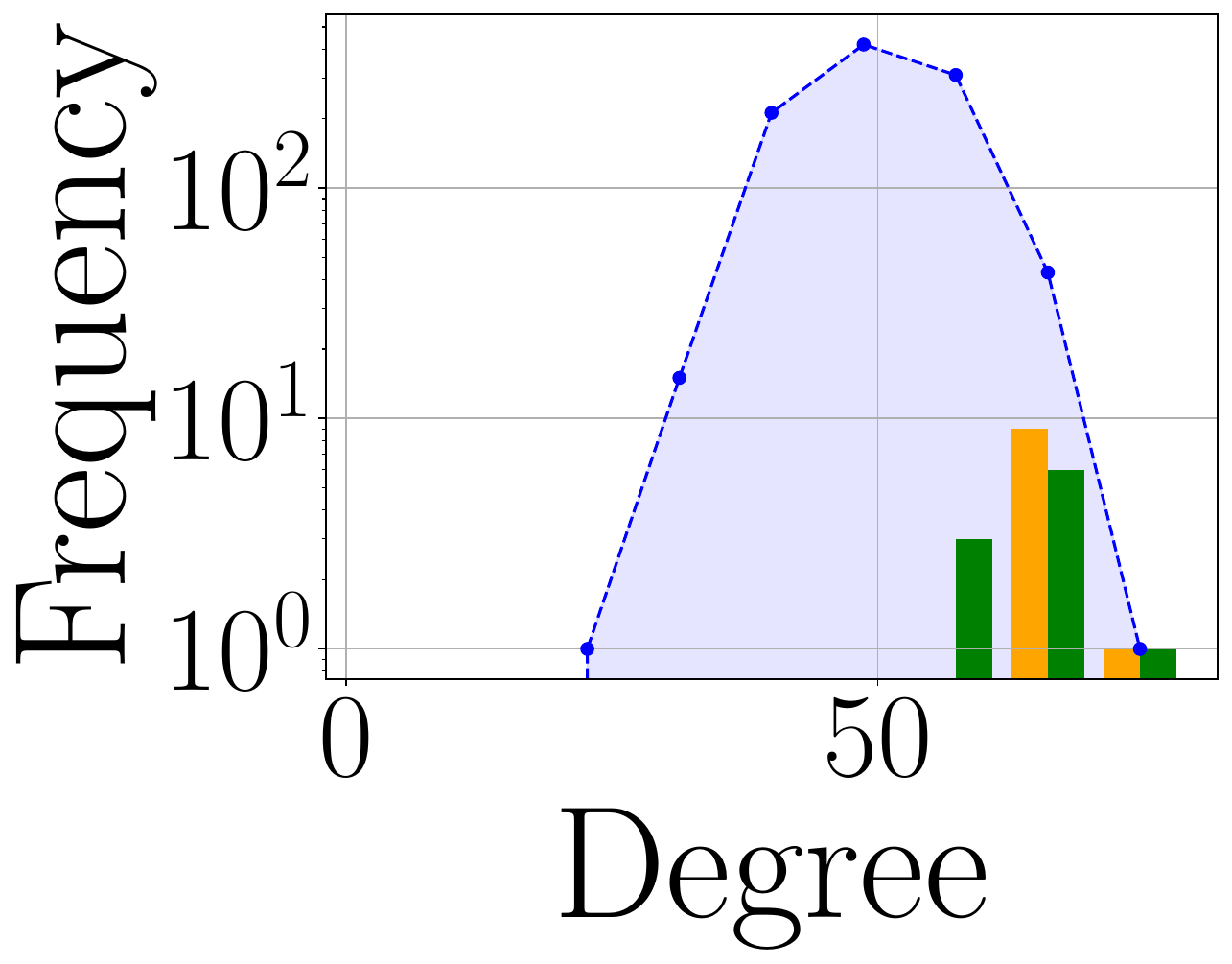}
    \caption{\er{} \\$IC(0.07,2)$}
    \end{subfigure}
   \begin{subfigure}{.32\linewidth} 
    \includegraphics[width=\linewidth]{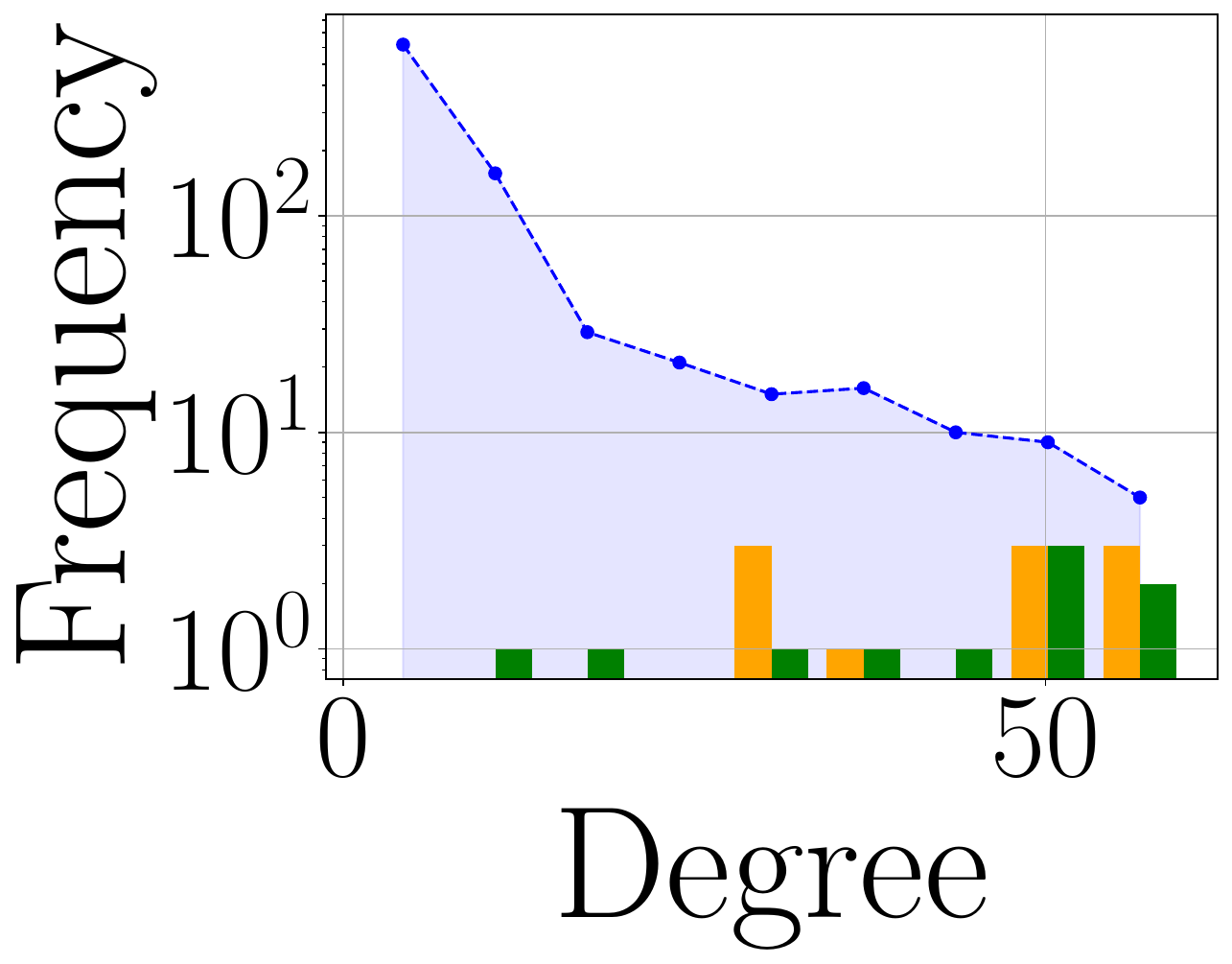}
    \caption{\icu\\$IC(0.2,4)$}
    \end{subfigure}
    \includegraphics[width=0.9\linewidth]{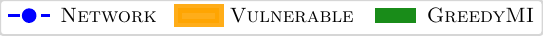}
    \caption{Comparison of the degree distributions of \textsc{GreedyMI} and \textsc{Vulnerable} in \rsource{} seeding (Y-axis in log-scale).}
    \label{fig:degree_dist_random}
\end{figure}

\begin{figure}[htb]
    \centering
    \begin{subfigure}{.32\linewidth}
    \includegraphics[width=\linewidth]{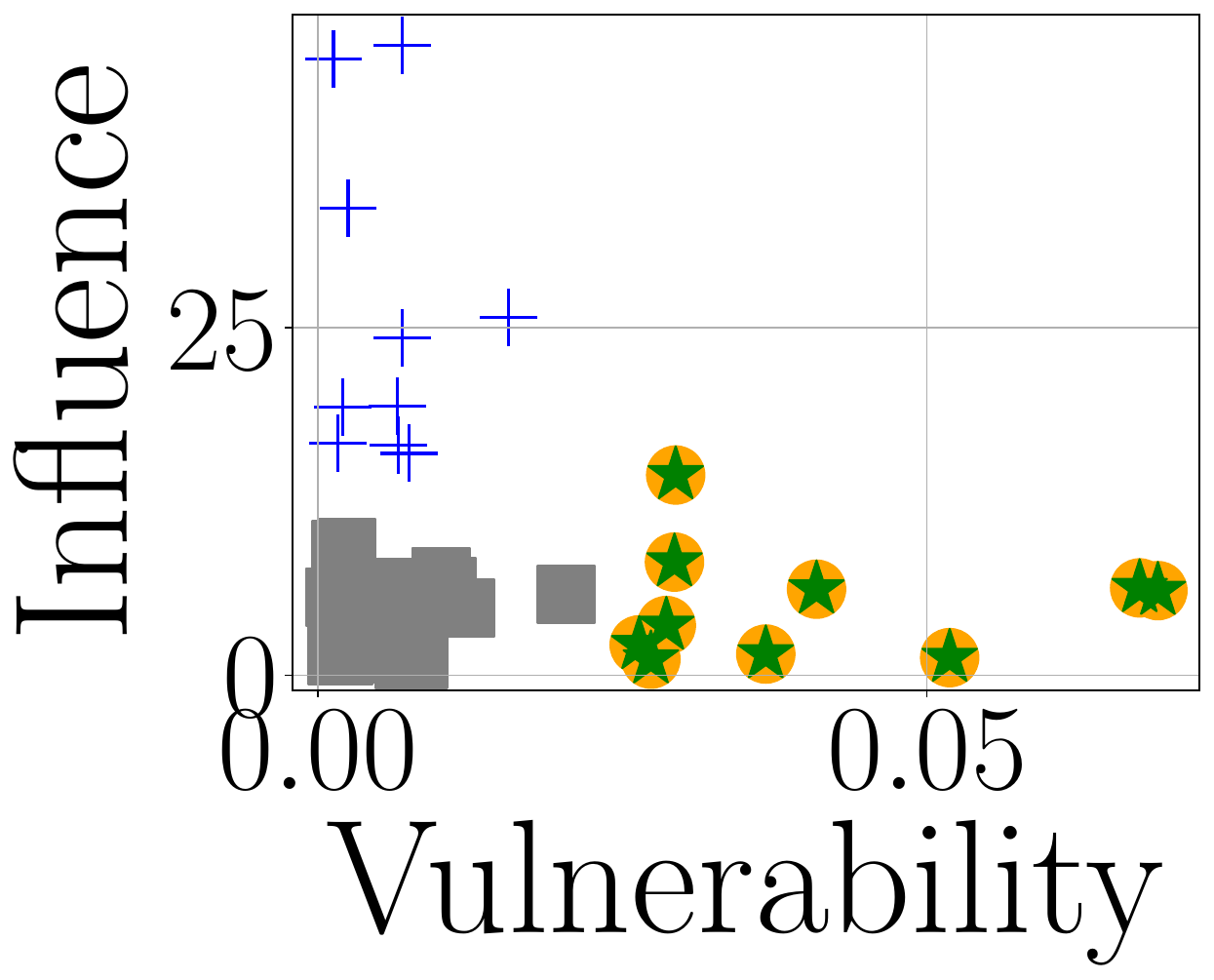}
    \caption{\pl{}\\$IC(0.1, 4)$}
    \end{subfigure}
    \begin{subfigure}{.32\linewidth}
    \includegraphics[width=\linewidth]{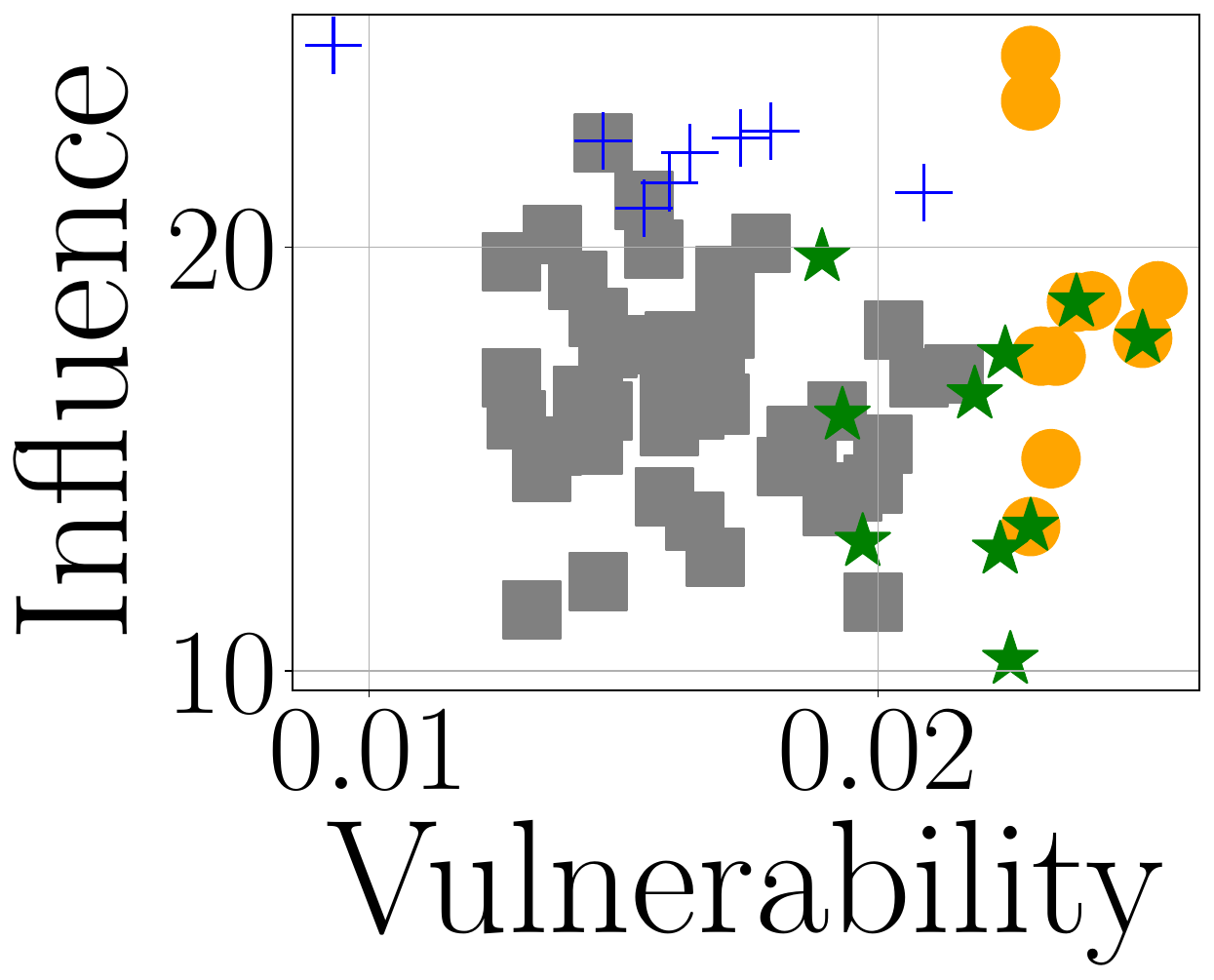}
    \caption{\er{} \\$IC(0.07,2)$}
    \end{subfigure}
    \begin{subfigure}{.32\linewidth}
    \includegraphics[width=\linewidth]{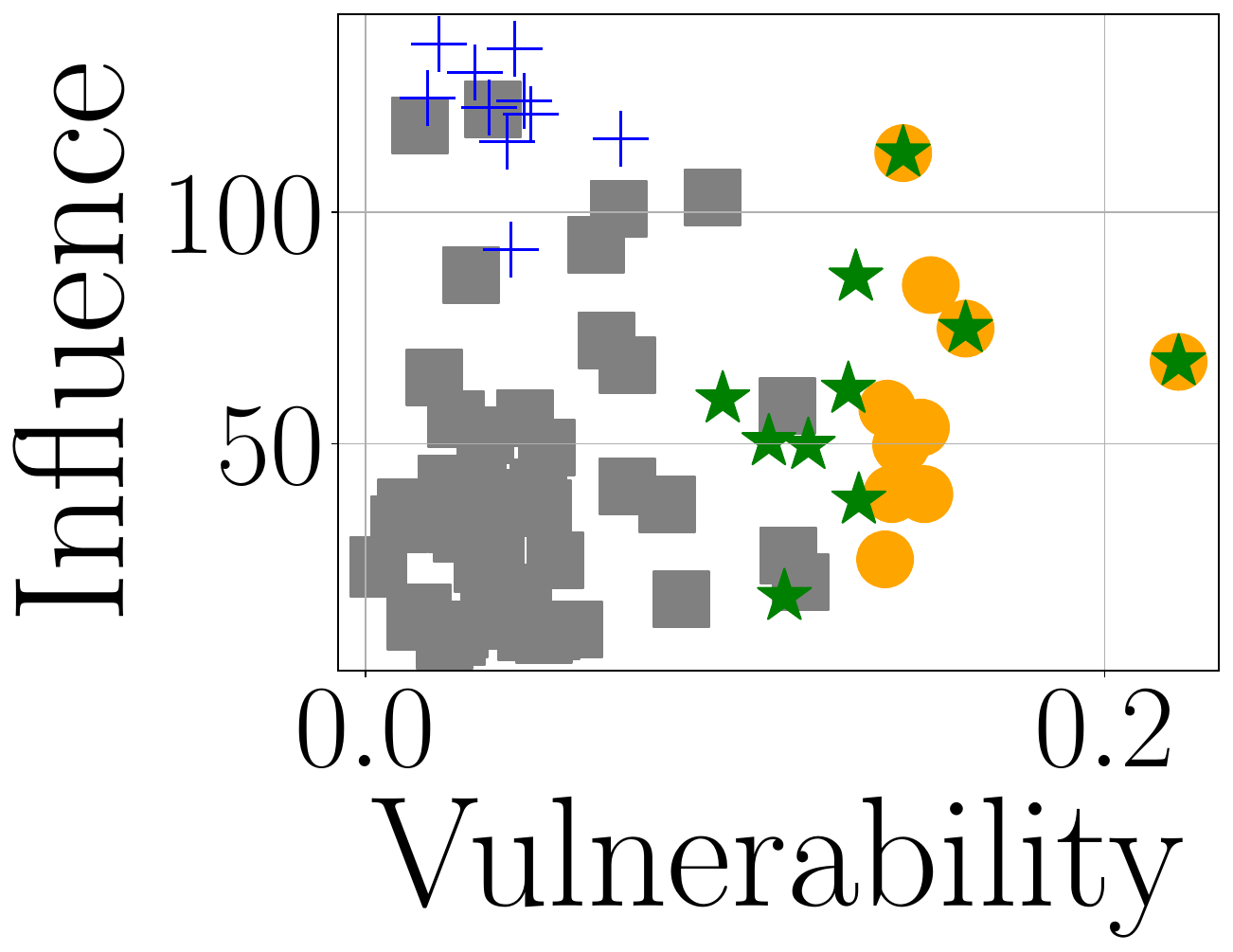}
    \caption{\icu\\$IC(0.2,4)$}
    \end{subfigure}
    \includegraphics[width=0.9\linewidth]{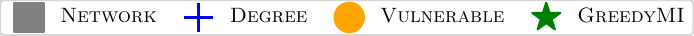}
    \caption{Vulnerability vs Influence of \textsc{GreedyMI} and \textsc{Vulnerable} in \rsource{} seeding. The ``Network'' corresponds to a sample of nodes that do not feature in any of the solution sets.}
    \label{fig:inf_vul_scatter}
\end{figure}

\paragraph{Sample size vs. performance.}
To assess the effect of sample size on the performance of \textsc{GreedyMI}, we adopt
the following progressive sampling-based approach. We choose~1000 samples in each
round. In iteration~$i$, we have a total of~$1000\cdot i$ samples. We find the
\textsc{GreedyMI} solution $A_{greedy}$ corresponding to these samples and note its 
conditional entropy score, $H(Z|X_{A_{greedy}})$.
Figure \ref{fig:num_samp} plots the empirical entropy of the prevalence conditioned on the \textsc{\textsc{GreedyMI}} solution of budget~$k=10$, $H(Z|X_{A_{greedy}})$, 
computed using the combined set of samples at the end of each round of sampling. Firstly, 
we observe that the entropy estimate increases with the number of samples as more of the
unseen joint distribution $P(Z;X_A)$ is sampled. 
We observe that the regimes in which large cascades are possible take more rounds of
sampling to converge. In each round of sampling, even a few large cascades can push up
the estimate due to the presence of the logarithm term in the entropy, triggering a new 
round of sampling.  For a network of size $n=1000$ and budget $k=10$ the maximum joint 
alphabet size $|Z|\times |X_A|= n2^k$ can be very large. Yet, the number of cascade
samples~(30,000) required for the conditional entropy to converge is about~34 times 
fewer in size. This shows that in many realistic problem setups, sampling-based methods
are not prohibitively expensive.
\begin{figure}[htb]
    \centering
    \includegraphics[width=0.94\linewidth]{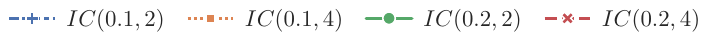}
    \begin{subfigure}{.36\linewidth}
    \includegraphics[width=\linewidth]{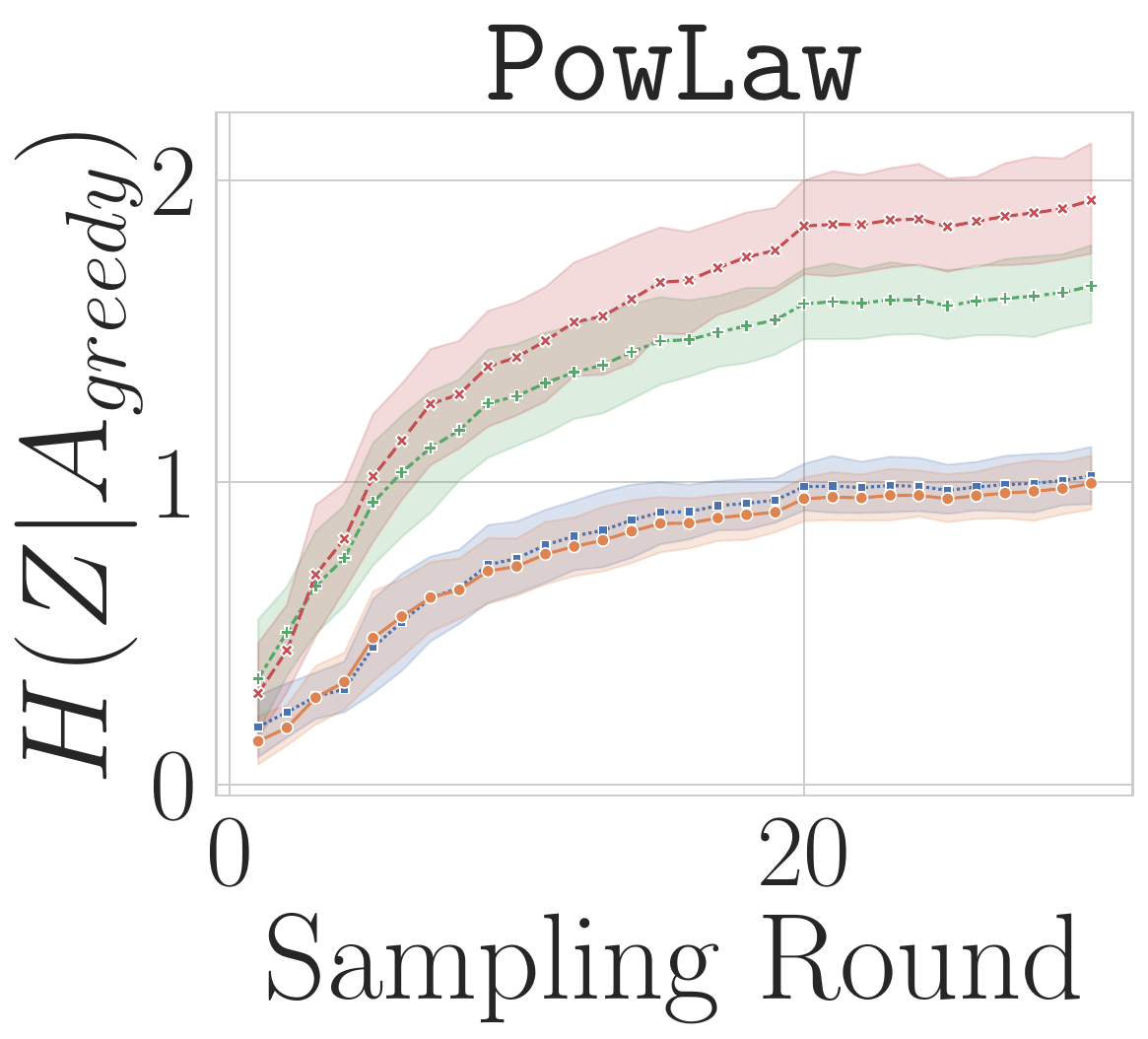}
    \end{subfigure}
    \begin{subfigure}{.36\linewidth}
    \includegraphics[width=\linewidth]{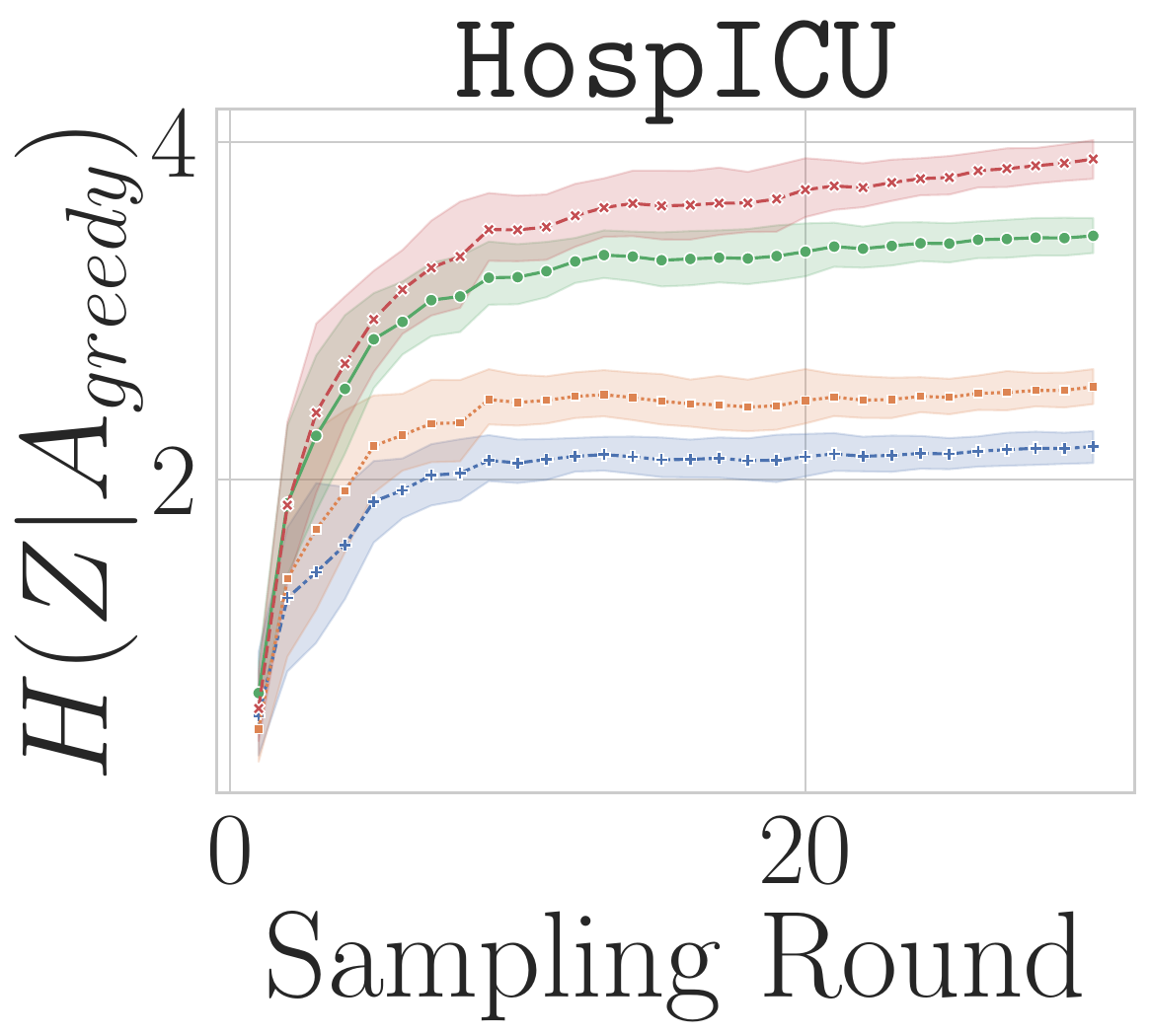}
    \end{subfigure}
    
    \caption{Convergence of the conditional entropy estimate $H(Z|A_{greedy})$. } 
    \label{fig:num_samp}
\end{figure}

\section{Conclusions}
\label{sec:concl}
Our work presents an information-theoretic framework for active disease surveillance,
offering -- for the first time in the literature -- novel, distribution-level insights
into outbreak size. As our results demonstrate, solutions to \prob{} yield results
with substantially lower variance than strategies such as node
degree, which have been frequently proposed in the literature~\cite{Browne-etal-2024,bai2017optimizing}.
At the same time, our findings underscore the challenges of adopting this approach.
First, developing algorithms with provable performance guarantees remains an open problem.
Unlike problems such as influence maximization, desirable 
properties such as submodularity may not hold, making algorithmic analysis more difficult.
Second, closed-form solutions for graph families beyond paths may lead to further insights. From a
practitioner’s perspective, there is also a need to speed-up the greedy method by, for example, reconstruction-aware methods  ~\cite{mishra2023reconstructing} for estimating $P(Z|X_A)$. Finally, extending this mutual information
framework to outbreak characteristics such as time to
peak, spatial spread offers a promising future direction.

\section*{Acknowledgments}
This research is partially supported by NSF grants CCF-1918656 and CNS-2317193, and 
DTRA award HDTRA1-24-R-0028, Cooperative Agreement number 6NU50CK000555-03-01
from the Centers for Disease Control and Prevention (CDC) and DCLS, Network
Models of Food Systems and their Application to Invasive Species Spread, grant
no. 2019-67021-29933 from the USDA National Institute of Food and Agriculture.
The work of R. Tandon is supported in part by NIH Award R01-CA261457-01A1, by the US Department of Energy, Office of Science, Office of Advanced Scientific Computing under Award Number DE-SC-ERKJ422, and US NSF under Grants CCF-2100013, CNS-2209951, CNS-2317192.

\bibliography{references}
\clearpage
\appendix
\onecolumn
\begin{center}
\fbox{{\textbf{Technical Supplement: Information Theoretic Optimal Surveillance for Epidemic Prevalence in Networks}}}
\end{center}

\bigskip 

\section*{Additional details for Section~\ref{sec:prob_def}}

\medskip 

\textbf{Lemma~\ref{lem:remain-sum}.} For any subset of variables $A \subseteq V$,
$H(Z|X_A) = H(Z_A^-| X_A)$. 
When the variables are all independent, $H(Z|X_A) = H(Z_A^-)$.

\begin{proof}

\begin{align*}
    H(Z  | X_A) &= \sum_{\mathbf{x} \in \mathcal{X}_A} P( X_A = \mathbf{x}) H(Z|X_A = \mathbf{x}) \\
    &= \sum_{\mathbf{x} \in \mathcal{X}_A} P( X_A = \mathbf{x}) H(Z_A^- + \mathbf{x}^T\mathbf{1} |X_A = \mathbf{x}) \\
    &= \sum_{\mathbf{x} \in \mathcal{X}_A} P( X_A = \mathbf{x}) H(Z_A^-|X_A = \mathbf{x})\\
    &= H(Z_A^-|X_A)
\end{align*}
When all the random variables $V$ are independent, $H(Z_A^-|X_A) = H(Z_A^-)$, as $X_A$ and the sum of the remaining variables $Z_A^-$ are also independent. 
\end{proof}

\bigskip

\section*{Additional details for Section~\ref{sec:analytical}}

\medskip 

\noindent
\textbf{Theorem~\ref{thm:gen_prob_hardness}.}
(a) \prob{} is NP-hard even when the
propagation is restricted to one hop.
(b) The problem remains
NP-hard even if the constraint 
on the weight of $A$
can be violated by a factor $(1-\epsilon)\log{n}$,
for any $\epsilon < 1$, where $n = |V|$.

\medskip 

\noindent
\begin{proof} 

\noindent
\textbf{Part (a):} We use a reduction from the
Minimum Set Cover (MSC) problem which is
defined as follows: given a base set
$T = \{t_1, t_2, \ldots, t_{\ell}\}$ with
$\ell$ elements, a collection 
$\Gamma = \{\gamma_1, \gamma_2, \ldots, \gamma_r\}$
of $r$ subsets of $T$ and an integer 
$\alpha \leq r$,
is there a subcollection $\Gamma' \subseteq \Gamma$
with $|\Gamma'| \leq \alpha$ such that the union
of the sets in $\Gamma'$ is equal to $T$?
MSC is known to be NP-complete~\cite{GJ-1979}.

Given an instance of MSC, we construct an instance
of \prob{} as follows.
\begin{description}
\item{(a)} The node set $V$ of the contact network
consists of two disjoint subsets $V_1$ and $V_2$.
For each set $\gamma_j \in \Gamma$ of MSC, we have a node
$a_j \in V_1$.
For each element $t_i \in T$ of MSC, we have a node
$b_i \in V_2$. 
Thus, $|V| = r + \ell$.
\item{(b)} The weight of each node in $V_1$ is chosen as 0 while the weight of each node in $V_2$ is chosen
as 1.
\item{(c)} The cost of each node in $V_1$ is chosen as 1 
and the cost of each node in $V_2$ is chosen as
$\alpha + 1$, where $\alpha$ is the bound from
the MSC problem.
\item{(d)} The edge set $E$ of $G$ is chosen as follows.
For each set $\gamma_j$ and element $t_i$ of MSC,
we add the directed edge $(a_j, b_i)$ to $E$.
The probability $\lambda_e$ for each 
edge $e \in E$ is set to 1.
\item{(e)} The constraint on the weight of $|A|$ is set to  $\alpha$.
\item{(f)} The marginal probability for each node in $V_1$ is $1/r$ while that for each node in $V_2$
is $1/\ell$.
\item{(g)} The required conditional entropy
$H(Z|X_A)$ is set to 0.
\end{description}
This completes the construction which can clearly
be done in polynomial time.
We note that the constructed contact network
allows only a 1-hop propagation.

Suppose there is a solution $\Gamma'$ with
$|\Gamma'| = \alpha$ to MSC.
Without loss of generality, let the solution be
$\Gamma' = \{\gamma_1, \gamma_2, \ldots, 
\gamma_{\alpha}\}$.
We choose the nodes in $V_1$ corresponding to
the sets in $\Gamma'$ as the set $A$;
that is, $A$ = $\{a_1, a_2, \ldots, a_{\alpha}\}$.
Thus, the cost of $A$ = $\alpha$.
Since the sets in $\Gamma'$ cover all the elements
of $T$ and all the edge probabilities in $G$ are 1,
the chosen set $A$ causes all the nodes $V_2$
to be infected in one step.
In other words, the only possible value for the weighted prevalence in this case is $\ell$.
As we have complete information about
the system, the conditional entropy $H(Z|X_A)$
is 0. Thus, we have a solution to the constructed
\prob{} instance.

For the converse, suppose there is a subset 
$A \subseteq V$
with cost at most $\alpha$ such that 
$H(Z|X_A)$ = 0.
We first note that $A$ cannot contain any node
from $V_2$ since the cost of each node in $V_2$
is $\alpha+1$.
Since the cost of each node in $V_1$ is 1, 
$A$ contains $\beta{} \leq \alpha$
nodes from $V_1$.
Without loss of generality, let 
$A = \{a_1, a_2, \ldots, a_{\beta}\}$.
Consider the subcollection 
$\Gamma' = \{\gamma_1, \gamma_2, \ldots,
\gamma_{\beta}\}$ containing the sets
corresponding to the nodes in $A$.
For the sake of contradiction,
suppose $\Gamma'$ does not cover the base set $T$; that is, the union of all the sets in $\Gamma'$
does not include all the elements 
of $T$. 
Specifically, let $t_j \in T$ be an element
which is not covered by $\Gamma'$.
This implies that in the network $G$,
there is no directed edge from any of the nodes
in $A$ to the node $b_j \in V_2$.
In other words, the random variable $X_{b_j}$ associated with $b_j$ is \emph{independent} of the random variables associated with any of the nodes in $A$.
Since $b_j$ has an infection probability of $1/\ell$, where $0 < 1/\ell < 1$,
the entropy $H(X_{b_j})$ of
the binary random variable $X_{b_j}$ is
\emph{greater than} 0.
As a consequence, the conditional entropy
$H(Z|X_A)$ is also \emph{greater than} 0.
This contradicts the assumption that
$H(Z|X_A) = 0$. Hence $\Gamma'$ is a solution
to the MSC instance, and this completes
our proof of Part~(a) of 
Theorem~\ref{thm:gen_prob_hardness}. 

\medskip 

\noindent
\textbf{Part (b):} We note that the reduction
presented in Part~(a) preserves approximations;
that is, from any solution $A$ with cost $\alpha$ to the instance of the \prob{} constructed above, a solution with $\alpha$ sets can be
constructed for MSC.
Hence, if there is a polynomial time algorithm that produces a test set $A$ whose cost is within a factor $\rho \geq 1$ of the minimum for \prob{}, then the algorithm can be used to obtain a similar
approximation for the MSC problem.
It is known that it is NP-hard
to approximate MSC to within a factor 
$(1-\epsilon)\log{n}$ for 
any $\epsilon < 1$~\cite{Feige-1998}. 
Hence, the same
inapproximability result also holds 
for \prob{}.
\end{proof} 

\bigskip 

\noindent
\textbf{Observation~\ref{obs:CZK}.}
    There exist instances in which optimizing the Caselton-Zidek criterion $\mathcal{K}(A) = I(X_A; X_{V\setminus A})$, does not optimize for mutual information with prevalence $\obj(A) = I(X_A;Z)$.

\begin{proof}
Consider the graph in Figure \ref{fig:ex_indirect}. Node $a$ is the only source and the disease probability is $\p = 0.5$ over all edges. First, we will find the optimal node for maximizing the CZK objective. Recall that $h(.)$ is the binary entropy function. For simplicity, we use the node id as the variable for its status as well; that is, $u$ refers to $X_u$ in the following expressions.
\begin{align*}
    H(b) &= h(\p) = 1 \\
    H(b|c,d,e) &= P(c,d,e=0,0,0) H(b| c,d,e=0,0,0) \\
    &= (1-\p^2)h\left(\frac{\p(1-\p)}{1-\p^2}\right)\\
    &= 0.75*h(1/3) = 0.689\\
    I(b;\{c,d,e\}) &= H(b) - H(b|c,d,e) = 1 - 0.689 =  0.311  \\
    H(c) &= h(\p^2) = h(0.25) = 0.811 \\
    H(c|b,d,e) &= P(b,d,e=1,0,0) H(c|b,d,e=1,0,0) \\
    &= [\p^2(1-\p)^2 + \p(1-\p)]\\
    & \quad h\left(\frac{\p^2(1-\p)^2)}{\p^2(1-\p)^2 + \p(1-\p)}\right) \\
    &= 0.3125 \times h(1/5) =  0.226\\
    I(c;\{b,d,e\}) &= H(c) - H(c|b,d,e) = 0.585
\end{align*}
Hence $c$ maximizes the objective of the form $I(u; V\setminus \{u\})$. 

Now, we will find the node which maximizes $\obj$ which targets prevalence. We use the fact that on this network with the source at $a$, knowing $Z$ makes the states of nodes $b$ and $c$ deterministic.
\begin{align*}
    I(Z;b) &= H(b) - H(b| Z) = H(b) - 0 = 1 \\
    I(Z;c) &= H(c) - H(c| Z) = H(c) - 0 = 0.811
\end{align*}
Here, $b$ maximizes mutual information with prevalence of the form $I(\{u\}; Z)$. This shows that the optimal node to query can be different under these two objectives.

\end{proof}

\begin{observation}
    There exist instances where the CZK criterion is not informative whereas the mutual information with prevalence  $\obj(.)$ can inform the choice of a surveillance set.
\end{observation}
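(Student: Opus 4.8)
The plan is to exhibit a single instance on which the CZK objective $\mathcal{K}(\cdot)$ is \emph{constant} (identically zero), so it provides no basis whatsoever for preferring one candidate surveillance set over another, while $\obj(\cdot)$ strictly separates candidate sets of the same size. I would use a $1$-hop bipartite instance of the kind in Figure~\ref{fig:1hop-ex}: a graph $G=(U\cup W,E)$ with all nodes of $U$ infected seeds and $W$ the susceptible targets, where the edge probabilities are chosen so that the targets receive distinct infection probabilities (e.g.\ $p_1=1/2$ and $p_2=\epsilon$ for two targets $w_1,w_2$, with budget $k=1$). Give the targets weight $1$ and the seeds weight $0$. As the paper already notes for this model, the target states $\{X_i\}_{i\in W}$ are mutually independent, and the seed states are constants; hence the whole vector $X_V$ has mutually independent coordinates.

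Step~1 (CZK is uninformative): because $X_V$ has mutually independent coordinates, $\mathcal{K}(A)=I(X_A;X_{V\setminus A})=0$ for every $A\subseteq V$. Thus every budget-$k$ set is CZK-optimal --- including ones that are poor for \prob{} --- so the CZK criterion by itself cannot single out a good surveillance set. Step~2 ($\obj$ is informative): by independence and Lemma~\ref{lem:remain-sum}, $\obj(A)=I(X_A;Z)=H(Z)-H(Z_A^-)$, where $Z_A^-$ is a Poisson-binomial random variable over the targets not in $A$. For the two singleton choices this gives $\obj(\{w_1\})=H(Z)-h(\epsilon)$ and $\obj(\{w_2\})=H(Z)-h(1/2)=H(Z)-1$; since $h(\epsilon)<1$ for $\epsilon\in(0,1/2)$, we get $\obj(\{w_1\})>\obj(\{w_2\})$ strictly, so the prevalence mutual information correctly tells the practitioner to query $w_1$. (This is also consistent with the supermodularity of $\obj$ under independence, Observation~\ref{obs:ind-supmod}: under $\obj$ one wants to ``remove'' the highest-entropy targets.) Adding further targets with probabilities near $1/2$ versus near $0$ makes the gap between the best and worst budget-$k$ sets $\Theta(k)$, reinforcing that the constancy of $\mathcal{K}$ carries no actionable information.

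I do not expect a deep obstacle here; the only care needed is (i) presenting the collection of independent Bernoulli targets as a legitimate $IC(\boldsymbol{\p},d)$ instance --- handled by the $1$-hop bipartite construction with all of $U$ seeded --- and (ii) ensuring the separation under $\obj$ is \emph{strict}. I would secure (ii) with the minimal two-target instance and the explicit comparison of $h(\epsilon)$ with $h(1/2)$, rather than appealing to any (false-in-general) monotonicity of the entropy of a sum of independent variables; with the explicit values the inequality is immediate. The remaining details (computing $H(Z)$, checking $\obj(\{w_2\})>0$ if one wants a nontrivial positive value, and the $\Theta(k)$ amplification) are routine.
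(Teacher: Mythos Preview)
Your proposal is correct and follows essentially the same approach as the paper: construct an instance in which all node states are mutually independent so that $\mathcal{K}(A)=I(X_A;X_{V\setminus A})=0$ for every $A$, while $\obj(A)=H(Z)-H(Z_A^-)$ still strictly separates candidate sets. The paper uses an edgeless network with two node classes ($p_i=0$ versus $p_i=1/2$) rather than your $1$-hop bipartite instance with heterogeneous target probabilities, but the mechanism and the verification are the same.
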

\begin{proof}
     Consider a network with a set of nodes $V= V_1 \cup V_2$ where $V_1$ and $V_2$ are disjoint and $|V_1|=|V_2| = n$ and no edges. Each of the nodes has a probability of being infected independent of others. The probability
     values are chosen as follows:
     \begin{align*}
     p_i =
        \begin{cases}
            0 & \text{ if } i\in V_1 \\
            0.5 & \text{ if } i\in V_2
        \end{cases}
     \end{align*}
      Under the CZK criterion, we want to find a subset $A\subseteq V$ within budget $k$ which maximizes $I(X_A; X_{V\setminus A})$. For any set $A$,
     \begin{align*}
         &I(X_A;X_{V\setminus A}) = H(X_A) - H(X_A|X_{V\setminus A}) \\
         &= H(X_A) - H(X_A) = 0
     \end{align*}
     due to independence between $X_A$ and $X_{V\setminus A}$. Thus the CZK criterion makes no recommendation regarding how to choose nodes from among $V_1$ and $V_2$ even though $X_{V_1}$ is deterministic.

     Under the prevalence information criterion, we want to maximize $I(Z;X_A) = H(Z) - H(Z|X_A)$ which is equivalent to minimizing $H(Z|X_A)$. By Lemma \ref{lem:remain-sum}, $H(Z|X_A) = H(Z_A^-)$. 
     Let us suppose we choose $A_1$ and $A_2$ to query among $V_1$ and $V_2$ respectively, and $|A_1|+|A_2|=k$. 
     \begin{align*}
        &H(Z_{A_1\cup A_2}^-) = H(\sum_{i\in V_1\setminus A_1} 0 + \sum_{i\in V_2\setminus A_2} X_i )\\
        &= H(\sum_{i\in V_2\setminus A_2}X_i) \\
        &= H(\sum_{i\in V_2\setminus A_2}X_i) \\
        &= H(\text{Binomial}(n-|A_2|, 0.5))
     \end{align*}
     Here, Binomial$(n,p)$ denotes the binomial distribution with parameters $n$ and $p$. 
    Thus, $\obj(A)$ is non-zero allowing the comparison of different surveillance sets $A$.
    Since we would like to choose $A_1,A_2$  within budget $k$ so as to minimize $H(Z|X_A)$, the prevalence information criterion suggests maximizing the size of $A_2$; that is, $|A_2|=k$.
\end{proof}
\begin{figure}
    \centering
    \includegraphics[width=0.4\linewidth]{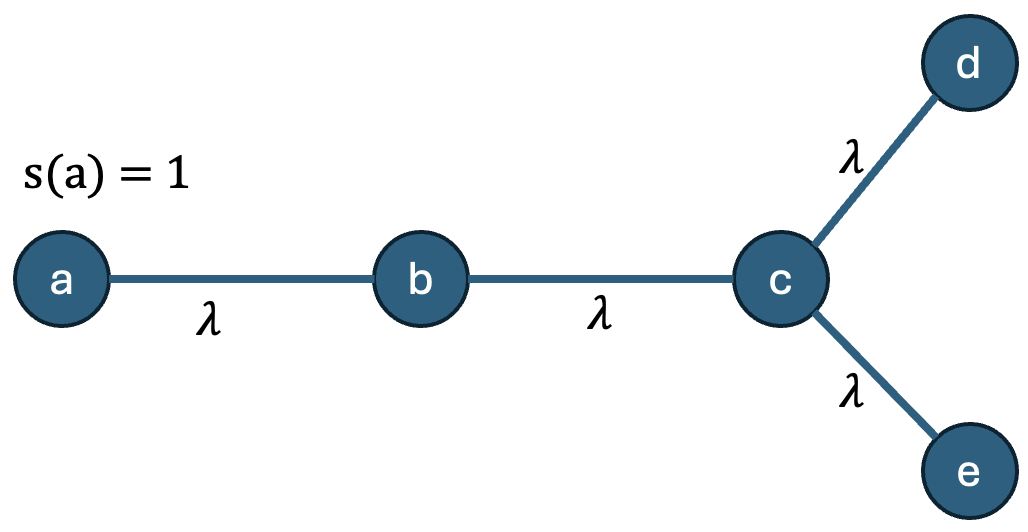}
    \caption{Example showing the CZK solution is not optimal for \prob{}}
    \label{fig:ex_indirect}
\end{figure}

\begin{figure}
    \centering
    \includegraphics[width=0.3\linewidth]{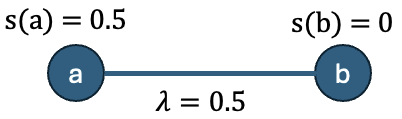}
    \caption{Example showing $\obj(A) $ is submodular}
    \label{fig:ex_submod}
\end{figure}

\bigskip

\noindent
\textbf{Observation~\ref{obs:ind-supmod}:}
  Given a set of nodes $V$ whose states are mutually independent random variables, the function $\obj{}(A) = I(Z_V;X_A)$ is supermodular in $A \subseteq V$.

\begin{proof}
     We would like to compare the marginal gain in the function $\obj(A)$ upon adding a new node to $A$. Let $A\subset B \subset V$ be any subsets and let $w \in V\setminus B$ whose state is $X_w$. Let $Z_{A}^-$ represent the sum of node state variables excluding those in $A$.
    \begin{align}
        \obj(A\cup \{w\}) - \obj(A)
        &= I(Z;X_{A,w}) - I(Z;X_A) \\
        &= H(Z|X_A) - H(Z|X_{A,w}) \\
        &= H(Z_{A}^-) - H(Z_{A,w}^-) \\
        \obj(B\cup \{w\}) - \obj(B) &= H(Z_{B}^-) - H(Z_{B,w}^-)
    \end{align}
    From \cite{madiman2008entropy}, the following holds for any variables $Y_1, Y_2, Y_3$ which are mutually independent.
    \begin{align}
        H(Y_1 + Y_2) + H(Y_2 + Y_3) \geq H(Y_1+Y_2+Y_3) + H(Y_2) \label{eqn:madiman}
    \end{align}
    Set $Y_1 = Z_{B\setminus A}$, $Y_2 = Z_{B,w}^-$, $Y_3= X_w$. $Y_1, Y_2, Y_3$ are independent as they are sums of non-overlapping sets of random variables. 
    \begin{align}
        Y_1 + Y_2 &= Z_{B\setminus A} + Z_{B,w}^- = Z_{A,w}^- \\
        Y_2 + Y_3 &= Z_{B,w}^- + X_w = Z_B^- \\
        Y_1 + Y_2 + Y_3 &= Z_A^-\\
    \end{align}
    Using Inequality~\eqref{eqn:madiman},
    \begin{align}
        H(Z_{A,w}^-) + H(Z_B^-) &\geq H(Z_A^-) + H(Z_{B,w}^-) \\
        H(Z_B^-) - H(Z_{B,w}^-) &\geq H(Z_A^-) - H(Z_{A,w}^-) \\
        \obj(B\cup \{w\}) - \obj(B) &\geq \obj(A\cup \{w\}) - \obj(A)
    \end{align}
    Hence, the marginal gain from adding $w$ to a larger set $B$ is higher than adding to the smaller subset $A$, proving supermodularity.
\end{proof}

\bigskip

\noindent
\textbf{Observation~\ref{obs:submodular}.}
There exist instances in which the function $\obj$ is submodular.

\begin{proof}
    In Figure \ref{fig:ex_submod}, we have a graph with two nodes $\{a,b\}$ and seeding probabilities $(s_a=0.5, s_b=0)$, disease probability $\lambda=0.5$. The marginal gain $\mathcal{G}$ upon adding a node $w$ to a subset $A$ is
    \begin{align*}
        & \mathcal{G}(A, w) = \obj(A\cup \{w\}) - \obj(A) \\
        &= I(Z; X_{A\cup\{w\}}) - I(Z; X_A) \\
        &= H(Z) - H(Z|X_{A\cup\{w\}}) - H(Z) + H(Z|X_A) \\
        & = H(Z|X_A) - H(Z|X_{A\cup\{w\}})
    \end{align*}
    We have,
    \begin{align*}
        &H(Z) = -(1-s_a)\log (1-s_a) - s_a(1-\lambda)\log s_a(1-\lambda)  \\
        &\quad - s_a\lambda \log s_a\lambda \\
        &H(Z|X_a) = -s_a[(1-\lambda)\log(1-\lambda) + \lambda \log \lambda]\\
        &H(Z|X_b) = -(1-s_a)\log \frac{1-s_a}{1-s_a\lambda} \\
        &\quad - s_a(1-\lambda)\log \frac{s_a(1-\lambda)}{1-s_a\lambda} \\
        &H(Z|X_a, X_b) = 0 \\
        &\mathcal{G}(\phi, a) = H(Z) - H(Z|X_a) \\
         &\mathcal{G}(\phi, b) = H(Z) - H(Z|X_b) \\
          &\mathcal{G}(\{a\}, b) = H(Z|X_a) - H(Z|X_a, X_b) = H(Z|X_a) \\
          &\mathcal{G}(\{b\}, a) = H(Z|X_b) - H(Z|X_a, X_b) = H(Z|X_b)
    \end{align*}
    For submodularity, we need,
    \begin{align*}
        &\mathcal{G}(\phi, a) - \mathcal{G}(\{a\}, b) \geq 0 \\
        &\Rightarrow H(Z) - 2H(Z|X_a) \geq 0 \\
        &\text{ and } \\
        &\mathcal{G}(\phi, b) - \mathcal{G}(\{b\}, a) \geq 0  \\
        &\Rightarrow H(Z) - 2H(Z|X_b) \geq 0
    \end{align*}
    With $s_a=0.5$ and $\lambda =0.5)$, $H(Z) = 1.5, H(Z|X_a) = 0.5, H(Z|X_b) = 0.688$, which satisfies the above conditions for submodularity.
    
\end{proof}

\noindent
\textbf{Observation~\ref{obs:detprob}.}
There exist instances in which node selection for maximizing detection likelihood can lead to solutions with objective value for \prob{} less than the optimal by $\Theta(n)$.

\begin{proof}
    Consider a network with a set of nodes $V=(V_1,V_2), |V_1|=|V_2|=n$ and no edges. Each of the nodes has an independent probability of being infected, $p_i = 1$ if $i \in V_1$, $p_i=0.5$ if $i \in V_2$. Let the set of nodes to query be $A=(A_1,A_2)$ such that $A_1\subseteq V_1, A_2 \subseteq V_2$. Given a budget $k < n$, let's say we want to maximize detection likelihood $P(X_A =1) = \prod_{i\in A} P(X_i = 1)$. The objective is maximized when we choose $A = A_1$ such that $P(X_A) = 1$. The mutual information of this subset with the prevalence is, 
    \begin{align*}
    &I(X_{A_1};Z) = H(Z) - H(Z|X_{A_1}) \\
    &=H(Z) - H(Z_{A_1}^-) = H(Z) - H(n-k + \sum_{i\in V_2} X_i) \\
    &=H(Z) - H(\sum_{i\in V_2} X_2) = H(Z) - H(\text{Binomial}(n,0.5))
    \end{align*}
    However, if we query with the goal of maximizing mutual information with prevalence we have,
    \begin{align*}
        &I(X_A;Z) = H(Z) - H(Z|X_A) \\
        &= H(Z) - H(Z_{A}^-) = H(Z) - H(\sum_{i\in V_1\setminus A_1} X_i + \sum_{j\in V_2\setminus A_2}X_j) \\
        &= H(Z) - H(n-|A_1| + \sum_{j\in V_2\setminus A_2}X_j) \\
        &= H(Z) - H(\sum_{j\in V_2\setminus A_2}X_j) \\
        &= H(Z) - H(\text{Binomial}(n-|A_2|, 0.5))\\
    \end{align*}
    Here, Binomial($n,p$) denotes the binomial distribution with parameters $n$ and $p$.
    This is maximized when the size of $A_2$ is maximized, i.e., $|A_2|=k$ such that $A=A_2$.
    Since the entropy of Binomial($n,0.5$) $>$ Binomial($n-k, 0.5$), sensor placement for maximizing detection likelihood is suboptimal for \prob{}.
\end{proof}

\section*{Additional details for Section 5: Our Approach}

\medskip 

\begin{algorithm}
\textbf{Input:} A network $G=(U\cup W, E)$, a budget $k$, $IC(\boldsymbol{\p}, 1)$\\
\textbf{Output:} Query-set $A$
\begin{algorithmic}[1]
\State Initialize $A$ to $\emptyset$. 
\State Compute the probability of infection for each node, $\{p_i\}_{i\in W}$.
\For{$j=1$ to $k$}
\For{each $v\in W\setminus A$}
\State $ \delta_v \gets H(Z|X_{A\cup \{v\}})$
\EndFor
\State $v^* \gets \argmin_v \delta_v$
\State $A \gets A \cup \{v^*\}$ 
\EndFor
\State \textbf{return} A
\end{algorithmic}

\caption{\textsc{$1$-hopGreedySelection}}
\label{alg:1hop_greedy}
\end{algorithm}

\begin{algorithm}
    \caption{\textsc{EntropyOnTree}}
    \label{alg:entr_tree}
    \textbf{Input:} Rooted tree $T_r$, subset of nodes $A$, $IC(\boldsymbol{\p}, \infty)$ \\
    \textbf{Output:} Conditional entropy of the prevalence $H(Z|X_A)$
    \begin{algorithmic}[1]
        \State Initialize arrays $S,H$ set to zeros
        \State $\mathcal{X}_A \gets \{0,1\}^{|A|}$
        \For{$i=0$ to $2^k-1$}
            \State $\mathbf{x} = \mathcal{X}_A[i]$
            \If{\Call{Feasible}{$T_r,\mathbf{x}$}}

            \State $\tilde{T}_r \gets T_r$ 
            \State $S[i] \gets 1$
            \For{$j=1$ to $k$}
            \State $x_v = \mathbf{x}[j]$
            \State $p(v) \gets \prod_{e \in \Gamma_{rv}} \p_{e}$
            \If{$x_v == 1$}
            \State $S[i] \gets S[i] p(v)$
            \State $\tilde{T}_r \gets$ \Call{Contract}{$\tilde{T}_r,\Gamma_{rv}$}
            \Else
            \State $S[i] \gets S[i](1-p(v))$
            \State $\tilde{T}_r \gets$ \Call{Remove}{$\tilde{T}_r,v$}
            \EndIf
            \EndFor
            \State $R \gets$ \Call{MessagePassing}{$\tilde{T}_r, \{\p_e\}, r$}
            \State $H[i] \gets -R \cdot \log(R)$
        \EndIf
        \EndFor
        \State dot\_prod $\gets 0$
        \For{$j=0$ to $len(S)-1$}
            \State dot\_prod $\gets$ dot\_prod + $S[j]*H[j]$
        \EndFor
        \State \textbf{return} dot\_prod
    \end{algorithmic}
\end{algorithm}

\begin{algorithm}

\textbf{Input:} Rooted tree $T_r$, $IC(\boldsymbol{\p}, \infty)$, node $v$\\
\textbf{Output:} Probability distribution $S=[s_0, s_1, \dots, s_N], s_i = P_{Z_v}(i)$ where $Z_v$ is the subtree prevalence.
\caption{\textsc{MessagePassing}($T_r, \boldsymbol{\p}, v)$}
\label{alg:tree_mp}
\begin{algorithmic}[1]
    \State Initialize messages array $M$
    \For{each $u$ in $v$.children}
        \State $R$ = \Call{MessagePassing}{$T_r, \boldsymbol{\p}, v)$}
        \State $ M_u \gets R$
        \State $M.append(M_u)$
    \EndFor
    \If{$v$.parent $\neq$ NULL}
        \State $p(v) \gets \prod_{e\in \Gamma_{rv}} \p_e$
        \State $S[0] \gets 1 - p(v)$
    \Else
        \State $S[0] \gets 0$
        \State $p(v) \gets 1$
    \EndIf
    \For{$j=1$ to $n$}
        \State $S[j] \gets M_1* M_1*\dots* M_{c_l(v)} [j-1]$
        \State $S[j] \gets p(v)S[j]$
    \EndFor
    \State \textbf{return} $S$
\end{algorithmic}
\end{algorithm}

\noindent
\textbf{Lemma~\ref{lem:tree}.}
For any subset $A\subseteq T_r$,~\textsc{EntropyOnTree} exactly computes the conditional entropy of the prevalence $H(Z|X_A)$.

\begin{proof}
By definition,
\begin{align*}
    H(Z|X_A) = \sum_{\mathbf{x} \in \mathcal{X}_A}P(X_A = \mathbf{x})H(Z|X_A = \mathbf{x})
\end{align*}
On a tree with the source at the root, the probability of a node being infected is $p(v) = \prod_{e\in \Gamma_{rv}} \p_e$ where $\Gamma_{rv}$ is the path from $r$ to $v$. Given an infection vector $\mathbf{x}$ over nodes $A$, let the infected subset be $A_1(\mathbf{x})$ and the uninfected subset be $A_0(\mathbf{x})$. The probability of $\mathbf{x}$ is $P(X_A = \mathbf{x})= \prod_{v\in A_1(\mathbf{x})}p(v) \prod_{u \in A_0(\mathbf{x})}(1-p(v))$. 

\noindent
Next, we show how to use \textsc{MessagePassing} to exactly compute  $H(Z|X_A = \mathbf{x})$ for each $\mathbf{x}$. Note that this algorithm is a special case of the Subtree Distribution Propagation algorithm in \cite{burkholz2021cascade}. Given $\mathbf{x}$ over nodes $A$, we transform $T_r$ by contracting the known live-edge paths and removing the uninfected nodes to obtain $\tilde{T}_r$. From $Z_{\tilde{T}_r} = Z_{T_r} + \sum_{u \in A_1(\mathbf{x})} |\Gamma_{ru}|$, it follows that $H(Z_{\tilde{T}_r}) = H(Z_{T_r}|X_A)$. Now, we need to compute the unconditional entropy of the prevalence on the transformed tree.

\noindent
Consider a node $v\neq r$ and its subtree $\tilde{T}_v$ with $l$ children $c_1, \dots, c_l$. The prevalence in this subtree is $Z_v = X_v + \sum_{i\in c(v)} Z_{i}$. Here, the prevalences in the children subtrees are independent given that $v$ is infected. Hence the distribution of their sum, $Z_v$ can be computed by convolution. 
We have $P_{Z_v}(0) = 1-p(v)$. For $n > 0$,
\begin{align*}
    P_{Z_v}(n) &= P(X_v = 1) P(Z_v = n | X_v = 1) \\
    &= p(v)\left(P_{Z_{c_1}} * \dots * P_{Z_{c_l}}[n-1]\right)
\end{align*}
Here, $*$ denotes convolution. The subtree prevalence probability vector $P_{Z_v}$ is thus passed as messages from each node to its parent starting from the leaves and ending at the root. At the root, the messages are combined slightly differently as $P_{Z_r}(0)= 0$ to obtain the prevalence probability distribution for the entire tree. The final step is to compute the entropy of this distribution $H(Z_{\tilde{T}_r}) = -\sum_n P_{Z_r}(n)\log P_{Z_r}(n)$.

\end{proof}

\begin{algorithm}
\caption{\textsc{Feasible}($T_r, \mathbf{x}$)}
\label{alg:feasible}
    \begin{algorithmic}[1]
        \State let $X[1..n]$ be a binary array denoting infection status
        \For{$j = 1$ to $k$}
            \State $x_v \gets \mathbf{x}[j]$
            \If{$x_v == 1$}
            \For{$u$ in $\Gamma_{rv}$}
                \If{$X[u] == 0$}
                    \State \textbf{return} False
                \Else
                    \State $X[u] \gets 1$
                \EndIf
            \EndFor
            \Else
                \State $X[v] \gets 0$
            \EndIf
        \EndFor
\end{algorithmic}
\end{algorithm}

\begin{algorithm}
\textbf{Input:} A tree network $T_r$ with root $r$, a budget $k$, disease parameters $\{\p_e\}$\\
\textbf{Output:} Query-set $A$
\begin{algorithmic}[1]
\State Initialize $A$ to $\emptyset$
\For{$j=1$ to $k$}
\For{each $v\in V(T_r)\setminus (A\cup \{r\})$}
\State $ \delta_v \gets$ \Call{EntropyOnTree}{$T_r,A\cup \{v\}, \{\p_e\}$}
\EndFor
\State $v^* \gets \argmin_v \delta_v$
\State $A \gets A \cup \{v^*\}$
\EndFor
\State \textbf{return} A
\end{algorithmic}

\caption{\textsc{TreeGreedySelection}}
\label{alg:tree_greedy}
\end{algorithm}

\textbf{Theorem~\ref{lem:path}. }
For budget $k$, a path of sufficiently large length $n > -\log (k+1)/\log \p$, and $IC(\p,\infty)$ with homogeneous disease probability $\p\in (0,1)$, the \prob{}-optimal separation  without integrality constraints is $\{g_j = \log (\frac{k+1-j}{k+2-j}) / \log \lambda; j=1,\dots, k\}$. 

\begin{proof}
    We want to find  the optimal separations $\{g_j; j=1, \dots k\}$  such that $\obj(\{i_1, \dots, i_k\}) = \obj({\{g_1,\dots, g_k\}})$ is maximized. Recall that $h$ is the binary entropy function, defined as $h(p) = -p\log p - (1-p)\log (1-p)$. We will use two simplifying properties of this path setting: (a) once prevalence is given, all the nodes states are deterministic, (b) the node states on the path form a Markov chain.
    \begin{align*}
    &\obj(\{i_1, \dots, i_k\}) = I(Z; X_{i_1}, X_{i_2}, \dots, X_{i_k}) \\
    &= H(X_{i_1}, X_{i_2}, \dots, X_{i_k}) - H(X_{i_1}, X_{i_2}, \dots, X_{i_k} | Z) \\
    &= H(X_{i_1}, X_{i_2}, \dots, X_{i_k}) \\
    &= H(X_{i_1}) + H(X_{i_2} | X_{i_1}) + \dots  + H(X_{i_k} | X_{i_{k-1}}, \dots, X_{i_2}, X_{i_1}) \\
    &= H(X_{i_1}) + H(X_{i_2} | X_{i_1}) + \dots + H(X_{i_k} | X_{i_{k-1}}) \\
    &= H(X_{i_1}) + P(X_{i_1}=1) H(X_{i_2} | X_{i_1} = 1) + P(X_{i_2} = 1) H(X_{i_3} | X_{i_2} = 1)  + \dots + P(X_{i_{k-1}} = 1) H(X_{i_k} | X_{i_{k-1}} = 1) \\
    &= h(\lambda^{g_1}) + \lambda^{g_1}h(\lambda^{g_2}) + \lambda^{g_1}\lambda^{g_2}h(\lambda^{g_3}) + \dots + \lambda^{g_1+g_2\dots+g_{k-1}} h(\lambda^{g^k}) \\
    &= \obj(\{g_1, g_2, \dots, g_k\})
    \end{align*}

\paragraph{Base case $j = k$. } 
Taking first derivative w.r.t. $g_k$ and setting to zero,
\begin{align*}
\diffp{\obj}{{g_k}} &= \lambda^{g_1+g_2\dots+g_{k-1}} h'(\lambda^{g_k})\lambda^{g_k}\ln \lambda  = 0 \\
\Rightarrow h'(\lambda^{g_k}) &= 0 \\
\Rightarrow \log \frac{1-\lambda^{g_k}}{\lambda^{g_k}} &= \log 1 \\
\Rightarrow \lambda^{g_k} &= \frac{1}{2} \\
\Rightarrow g^k &= \log \frac{1}{2}/\log \lambda
\end{align*}

\paragraph{Induction hypothesis. } We suppose that the claim holds for $g_{m}, g_{m+1}, \dots, g_k$ where $1< m \leq k$. We will prove for the case $g_{m-1}$.

\paragraph{Induction step. } Taking first derivative w.r.t. $g_{m-1}$,
\begin{align*}
    &\diffp{\obj}{{g_{m-1}}}  = 0 \\
    &\Rightarrow h'(\lambda^{g_{m-1}}) + h(\lambda^{g_m})
     + \lambda^{g_m}h(\lambda^{g_{m+1}}) \\
     &\quad + \lambda^{g_m+g_{m+1}} h(\lambda^{g_{m+2}}) + \dots = 0\\
    &\quad + \lambda^{g_m+\dots+g_{k-1}} h(\lambda^{g_k}) \\
    &\Rightarrow 0 = h'(\lambda^{g_{m-1}}) - \frac{k-m+1}{k-m+2}\log (k-m+1) \\
    &\quad +\log ({k-m+2})
     + \frac{k-m+1}{k-m+2}[-\frac{k-m}{k-m+1}\log(k-m) \\
     &\quad + \log(k-m+1)]
     + \frac{k - m + 1}{k-m+2}\frac{k-m}{k-m+1} \\
     &\quad [- \frac{k-m-1}{k-m} \log (k-m-1) + \log(k-m)] + \dots \\
    &\Rightarrow 0 = h'(\lambda^{g_{m-1}}) + \log (k - m +2 ) \\
    &\Rightarrow \frac{1-\lambda^{g_{m-1}}}{\lambda^{g_{m-1}}} = \frac{1}{k-m+2} \\
    &\Rightarrow \lambda^{g_{m-1}} = \frac{k-m+2}{k-m+3} \\
    &\Rightarrow g_{m-1} = \log \frac{k-m+2}{k-m+3} / \log \lambda
\end{align*}

\end{proof}

\begin{corollary}
    The maximum value of the the objective $\obj$ is $\log (k+1)$ where $k$ is the budget.
\end{corollary}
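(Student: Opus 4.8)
The plan is to substitute the optimal separations supplied by Theorem~\ref{lem:path} into the chain-rule expansion of $\obj$ that appears in its proof, and to show that the resulting expression telescopes to $\log(k+1)$. Since Theorem~\ref{lem:path} already establishes that the separation $\{g_j = \log(\tfrac{k+1-j}{k+2-j})/\log\lambda\}$ maximizes $\obj$ over the relaxed domain (under the stated hypotheses, in particular a sufficiently long path), it suffices to evaluate $\obj$ at this point.

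First I would record the two elementary facts that drive the computation. From $g_j = \log(\tfrac{k+1-j}{k+2-j})/\log\lambda$ we get $\lambda^{g_j} = \tfrac{k+1-j}{k+2-j}$, and hence the partial products telescope:
\[
\lambda^{g_1 + g_2 + \dots + g_{j}} \;=\; \prod_{i=1}^{j} \frac{k+1-i}{k+2-i} \;=\; \frac{k+1-j}{k+1}.
\]
Consequently, in the expansion $\obj = h(\lambda^{g_1}) + \sum_{j=2}^{k} \lambda^{g_1+\dots+g_{j-1}}\, h(\lambda^{g_j})$ from the proof of Theorem~\ref{lem:path}, the coefficient of $h(\lambda^{g_j})$ equals $\tfrac{k+2-j}{k+1}$ for every $j = 1,\dots,k$ (for $j=1$ this is $1 = \tfrac{k+1}{k+1}$).

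Next I would simplify a single term. Writing $p = \lambda^{g_j} = \tfrac{k+1-j}{k+2-j}$, so that $1-p = \tfrac{1}{k+2-j}$, a short calculation with the binary entropy function gives
\[
h(p) \;=\; -\,p\log p - (1-p)\log(1-p) \;=\; \log(k+2-j) \;-\; \frac{k+1-j}{k+2-j}\,\log(k+1-j),
\]
so the $j$-th term of $\obj$ is
\[
\frac{k+2-j}{k+1}\,h(\lambda^{g_j}) \;=\; \frac{1}{k+1}\Bigl[(k+2-j)\log(k+2-j) \;-\; (k+1-j)\log(k+1-j)\Bigr].
\]
Summing over $j = 1,\dots,k$, the bracketed quantities form a telescoping sum: as $j$ increases, the leading argument $k+2-j$ decreases from $k+1$ down to $2$, and the subtracted argument $k+1-j$ is exactly the leading argument of the next term. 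The sum therefore collapses to $(k+1)\log(k+1) - 1\cdot\log 1 = (k+1)\log(k+1)$, and dividing by $k+1$ yields $\obj = \log(k+1)$.

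The argument is essentially bookkeeping; the only place requiring care is index alignment — ensuring the coefficient $\tfrac{k+2-j}{k+1}$ is paired with the correct $h(\lambda^{g_j})$ (including the boundary case $j=1$) and that the endpoints of the telescoping sum are read off correctly (the $j=k$ term contributes $2\log 2$ and the term $\log 1$ vanishes). I do not anticipate a deeper obstacle, since optimality of the separations is inherited directly from Theorem~\ref{lem:path}.
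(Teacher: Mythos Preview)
Your proposal is correct and follows essentially the same approach as the paper: substitute the optimal separations from Theorem~\ref{lem:path} into the chain-rule expansion of $\obj$ and observe that the terms telescope to $\log(k+1)$. Your version is somewhat more explicit about the bookkeeping (writing the general $j$-th term as $\tfrac{1}{k+1}[(k{+}2{-}j)\log(k{+}2{-}j) - (k{+}1{-}j)\log(k{+}1{-}j)]$), whereas the paper simply expands the first few terms and reads off the cancellation, but there is no substantive difference.
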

\begin{proof}
\begin{align*}
    &\obj_{max} = h(\lambda^{g_1}) + \lambda^{g_1}h(\lambda^{g_2}) + \dots + \lambda^{g_1 + g_2 +\dots+g_{k-1}}h(\lambda^{g^k}) \\
    &= -\frac{k}{k+1}\log k + \log (k+1) + \frac{k}{k+1} ( - \frac{k-1}{k}\log (k-1) + \log k) \\
    &\quad+ \frac{k}{k+1}\frac{k-1}{k}(-\frac{k-2}{k-1}\log (k-2) + \log (k-1)) + \dots \\
    &= \log (k+1)
\end{align*}
\end{proof}

\begin{algorithm}
    \caption{PathSelection}
    \label{alg:path}
    \textbf{Input: }A path network with source at endpoint, disease parameter $\p$, budget $k$ \\
    \textbf{Output: }Optimal separation of sensors $(g_1, g_2, \dots, g_k)$
    \begin{algorithmic}[1]
    \State Initialize $G$
        \For{each $i = 1$ to $k$}
            \State $j^* \leftarrow \log \frac{k}{k+1}/\log \p$
            \State $j_{floor} \leftarrow \max(1, \lfloor j^*\rfloor)$
            \State $j_{ceil} \leftarrow \lceil j^*\rceil$
            \If{$|\p^{j_{floor}} - \frac{k}{k+1} | < |\p^{j_{ceil}} - \frac{k}{k+1}|$}
                \State $G[i] \leftarrow j_{floor}$
            \Else
                \State $G[i] \leftarrow j_{ceil}$
            \EndIf
        \EndFor
        \State \textbf{return} $G$
    \end{algorithmic}
\end{algorithm}

\noindent
\textbf{Experimental verification of Theorem~\ref{lem:path}. }
For various disease probabilities $\p$, we conduct an integer grid search over $[1,2,\dots]$ for the optimal separations (denoted by \texttt{Grid}) and compare it to the solution given by Theorem~\ref{lem:path} (denoted by \texttt{Theorem}) in Table \ref{tab:path_exp1}. We observe that rounding a fractional solution to the nearest integers is often optimal.

\begin{table}[h]
    \centering
    \begin{tabular}{|c|c|c|c|c|c|c|}
    \hline
         \multicolumn{1}{|c|}{$k$} & \multicolumn{2}{|c|}{$\lambda = $ 0.5} & \multicolumn{2}{|c|}{$\p=0.7$} & \multicolumn{2}{|c|}{$\p=0.8$}  \\
         \hline
         & \texttt{Grid} & \texttt{Theorem} & \texttt{Grid} & \texttt{Theorem} & \texttt{Grid} & \texttt{Theorem}\\
         \hline
         1 & 1 & 1 & 2 & 1.94 & 3 & 3.1 \\
         \hline
         2 & (1,1) & (0.58, 1) & (1,2) & (1.14, 1.94) & (2,3) & (1.82, 3.1) \\
         \hline
         3 & (1,1,1) & (0.42, 0.58, 1) & (1,1,2) & (0.8, 1.14, 1.94) & (1,2,3) & (1.29, 1.82, 3.1)  \\
         \hline
         4 & (1,1,1,1) & (0.32, 0.42, 0.58, 1) & (1,1,1,2) & (0.63, 0.8, 1.14, 1.94) & (1,1,2,3) & (1, 1.29, 1.82, 3.1) \\
         \hline
    \end{tabular}
    \begin{tabular}{|c|c|c|c|c|}
    \hline
         \multicolumn{1}{|c|}{$k$} & \multicolumn{2}{|c|}{$\lambda = 0.9$} & \multicolumn{2}{|c|}{$\p=0.95$}  \\
         \hline
         & \texttt{Grid} & \texttt{Theorem} & \texttt{Grid} & \texttt{Theorem} \\
         \hline
         1 & 7 & 6.58 & 14 & 13.51  \\
         \hline
         2 & (4,7) & (3.85, 6.58) & (8,14) & (7.9, 13.51)\\
         \hline
         3 & (3,4,7) & (2.73, 3.85, 6.58)  & (6,8,14) & (5.61, 7.9, 13.51) \\
         \hline
         4 & (2,3,4,7) & (2.11, 2.73, 3.84, 6.58) & (4,6,8,14) & (4.35, 5.61, 7.9, 13.51) \\
         \hline
    \end{tabular}
    \caption{Experimental verification of Theorem \ref{lem:path}}
    \label{tab:path_exp1}
\end{table}

\section*{Additional details for Section 6: Experiments}

\begin{figure}
    \centering
    \begin{subfigure}{0.5\columnwidth}
    \includegraphics[width=0.48\linewidth]{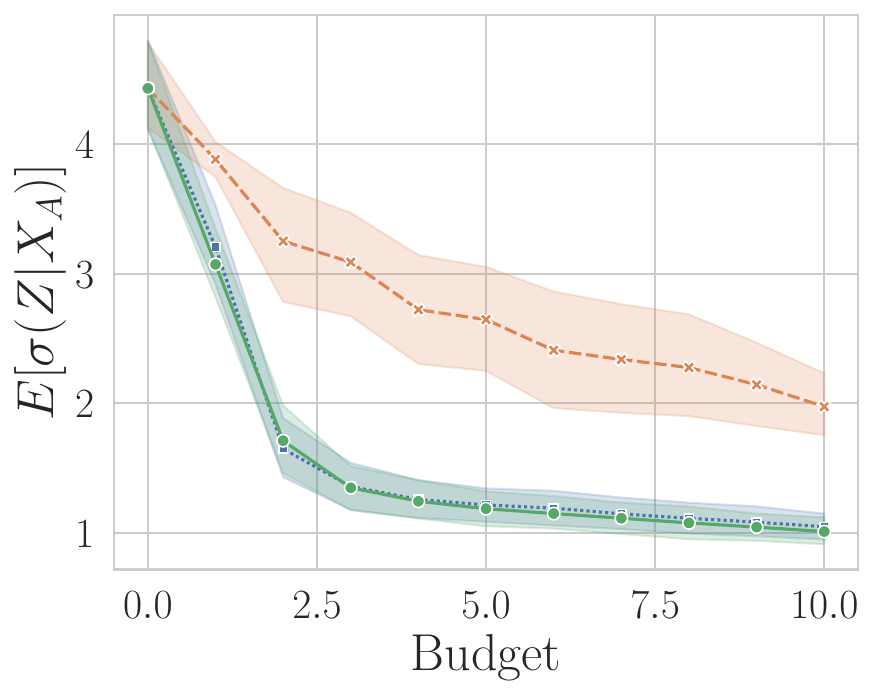}
    \includegraphics[width=0.48\linewidth]{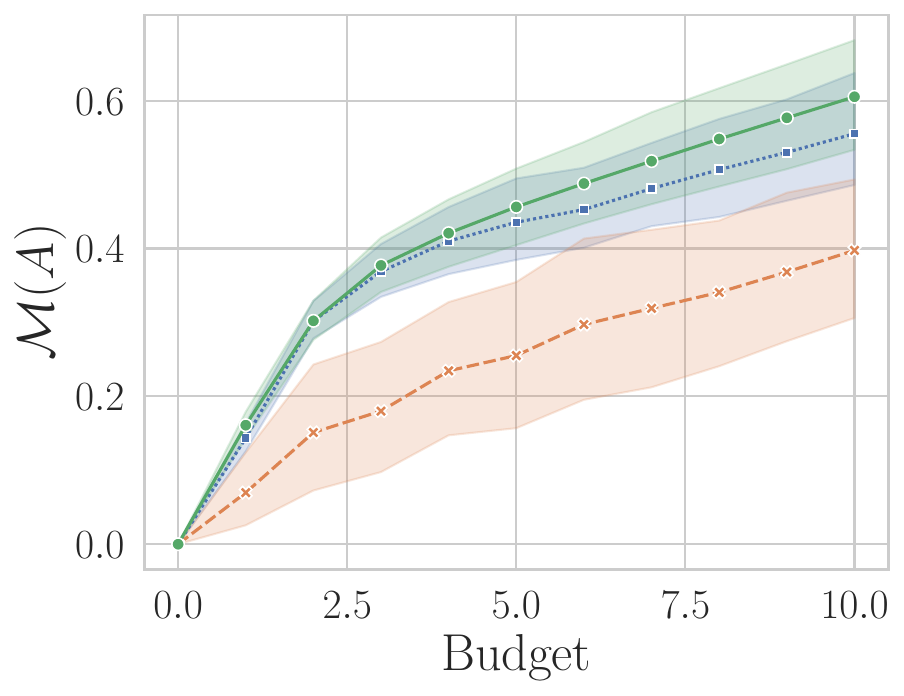}
    \caption{$IC(0.1,2)$}
    \end{subfigure}
    \begin{subfigure}{0.48\columnwidth}
        \includegraphics[width=0.48\linewidth]{performance_networkcl_n1000_g2.5_seed0_regime0.1_4_fixed_source_vary_budget_std_lineplot.pdf}
        \includegraphics[width=0.48\linewidth]{performance_networkcl_n1000_g2.5_seed0_regime0.1_4_fixed_source_vary_budget_mi_lineplot.pdf}
        \caption{$IC(0.1,4)$}
    \end{subfigure}
    \begin{subfigure}{0.48\columnwidth}
         \includegraphics[width=0.48\linewidth]{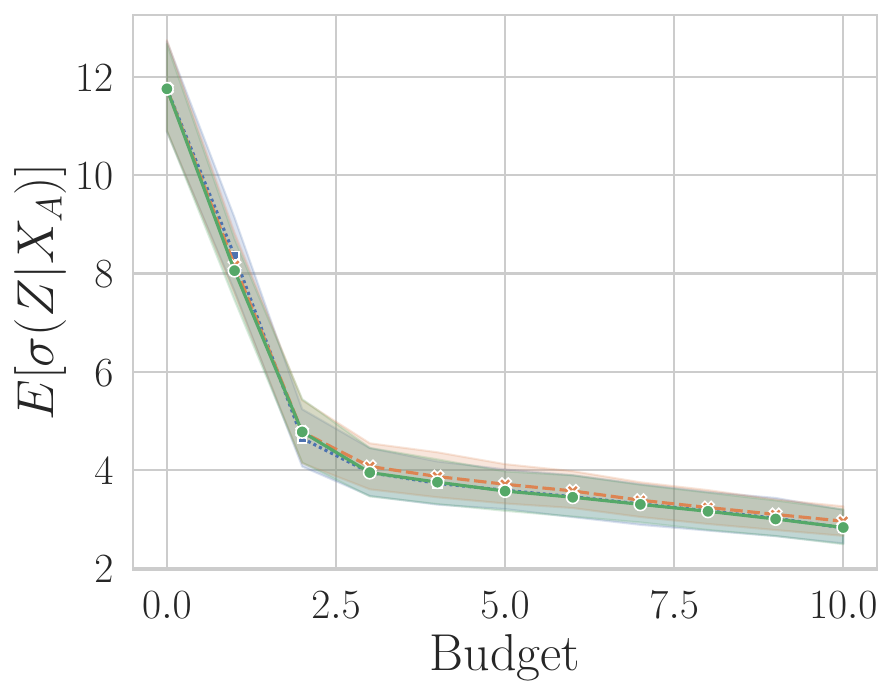}
        \includegraphics[width=0.48\linewidth]{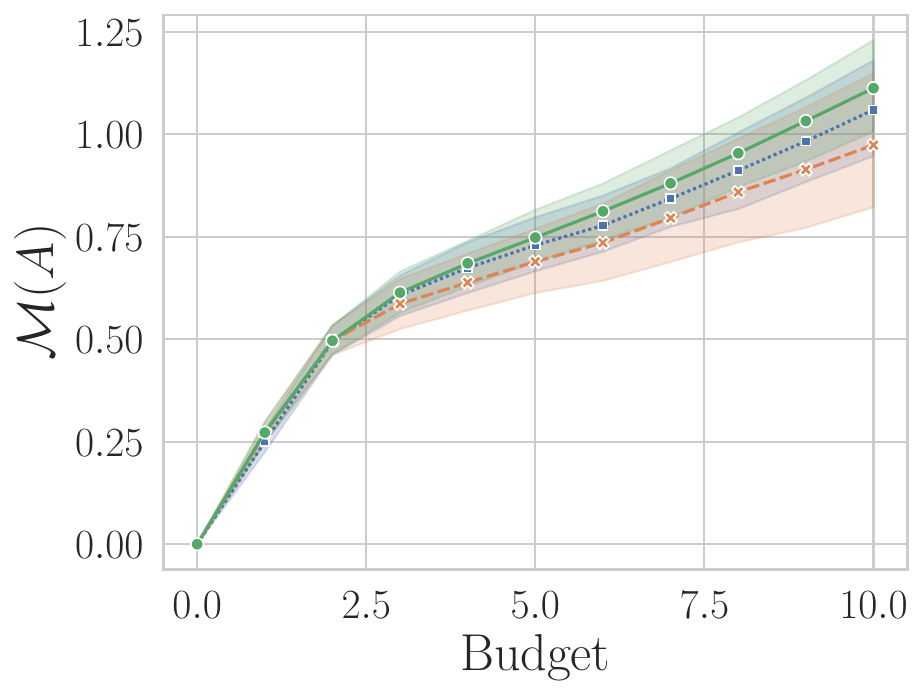}
        \caption{$IC(0.2,2)$}
    \end{subfigure}
    \begin{subfigure}{0.48\columnwidth}
        \includegraphics[width=0.48\linewidth]{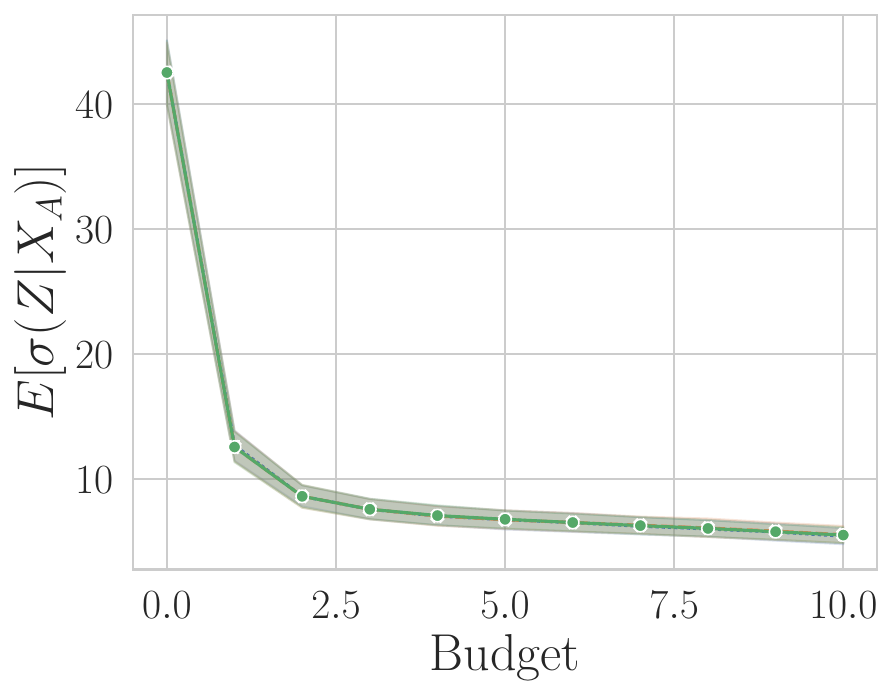}
        \includegraphics[width=0.48\linewidth]{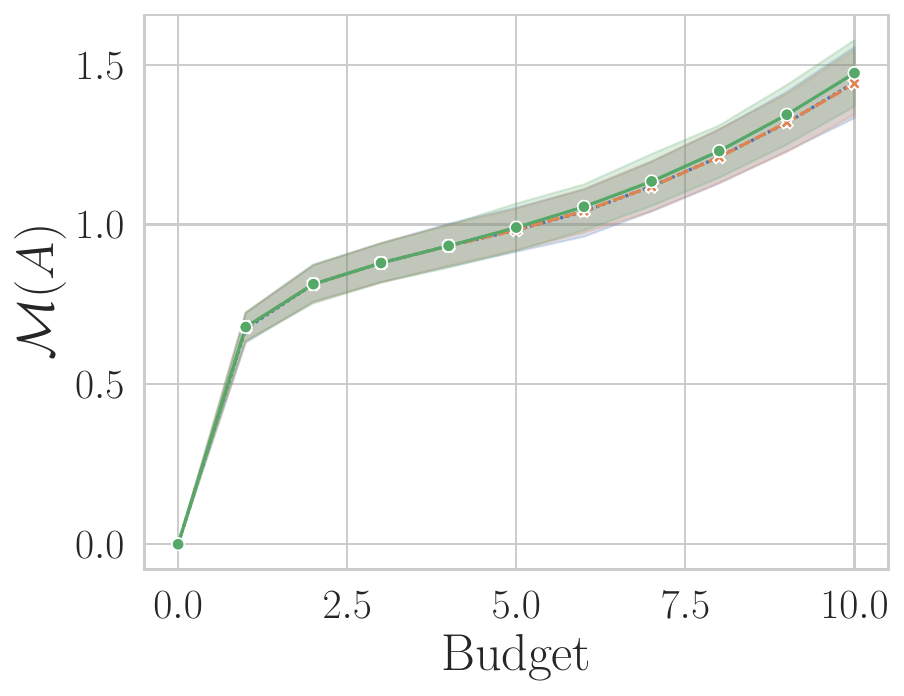}
        \caption{$IC(0.2,4)$}
    \end{subfigure}
    
    \includegraphics[width=0.5\linewidth]{methods_legend-2.pdf}
    
    \caption{Performance of \textsc{GreedyMI} vs baselines under  \texttt{known-source} seeding (averaged over 10 runs) on \pl{}}
    \label{fig:perf_greedy_fixed_addn_pl}
\end{figure}

\begin{figure}
    \centering
    \begin{subfigure}{0.48\columnwidth}
        \includegraphics[width=0.48\linewidth]{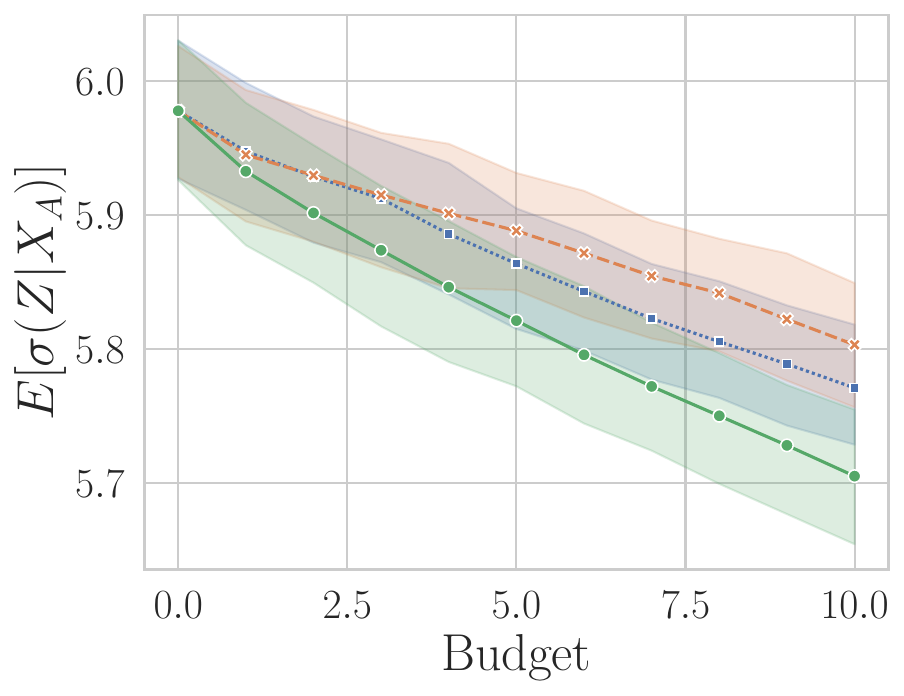}
        \includegraphics[width=0.48\linewidth]{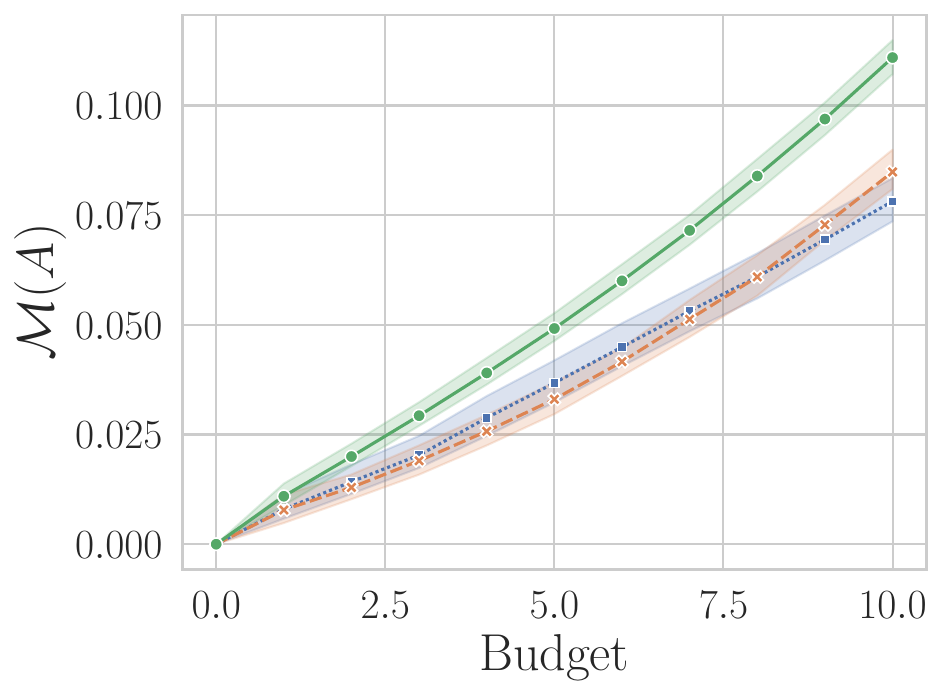}
        \caption{$IC(0.05,2)$}
    \end{subfigure}
    \begin{subfigure}{0.48\columnwidth}
        \includegraphics[width=0.48\linewidth]{performance_networker_n1000_q0.05_seed0_regime0.07_2_fixed_source_vary_budget_std_lineplot.pdf}
        \includegraphics[width=0.48\linewidth]{performance_networker_n1000_q0.05_seed0_regime0.07_2_fixed_source_vary_budget_mi_lineplot.pdf}
    \caption{$IC(0.07,2)$}
    \end{subfigure}
    
     \includegraphics[width=0.5\linewidth]{methods_legend-2.pdf}
    \caption{Performance of \textsc{GreedyMI} vs baselines under \texttt{known-source} seeding (averaged over 10 runs) on \er{}}
    \label{fig:perf_greedy_fixed_addn_er}
\end{figure}

\begin{figure}
    \centering
    \begin{subfigure}{0.48\columnwidth}
        \includegraphics[width=0.48\linewidth]{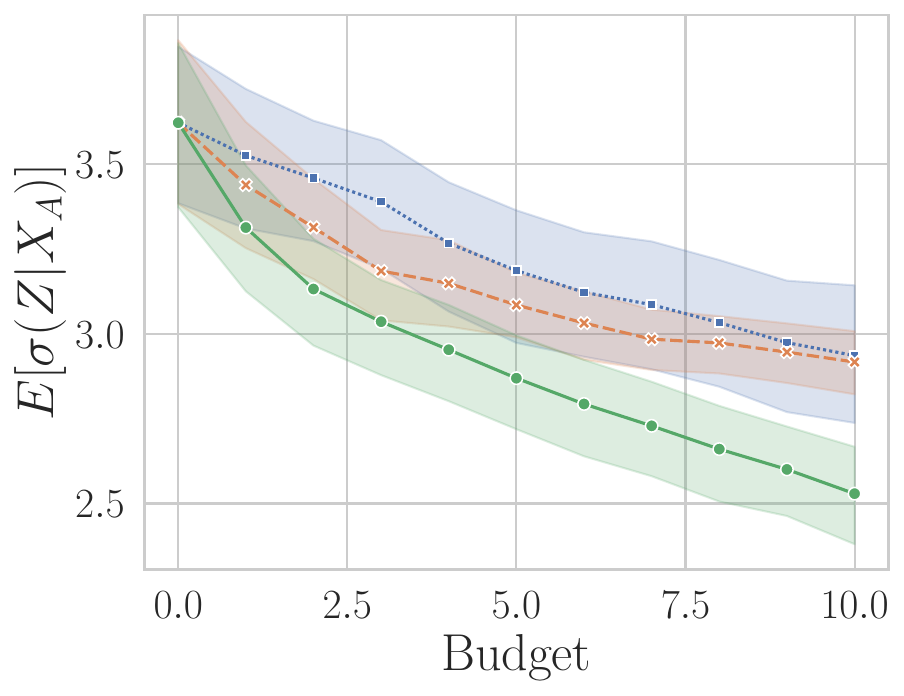}
        \includegraphics[width=0.48\linewidth]{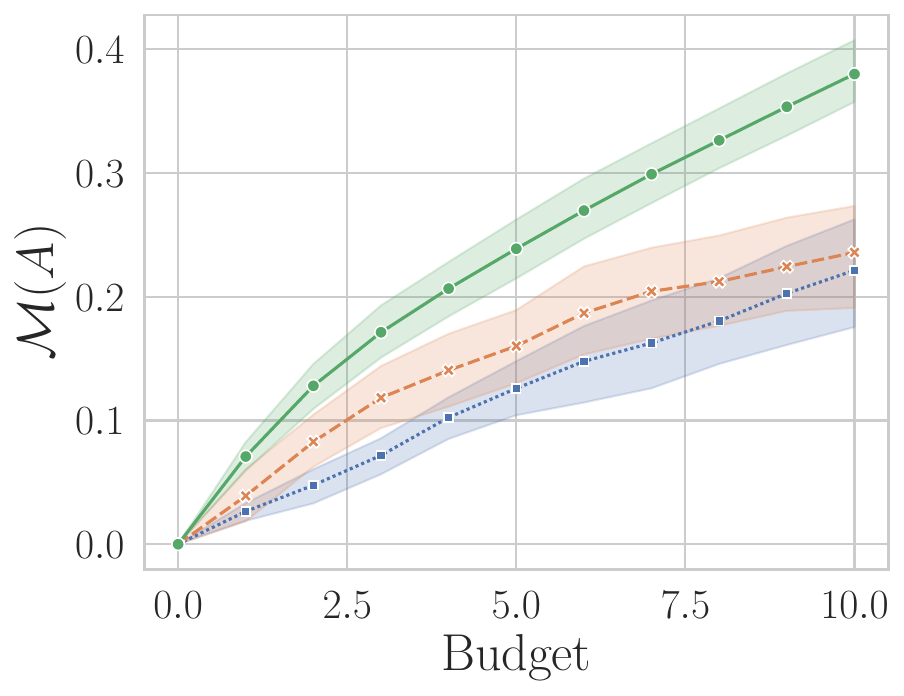}
        \caption{$IC(0.1,2$}
    \end{subfigure}
    \begin{subfigure}{0.48\columnwidth}
        \includegraphics[width=0.48\linewidth]{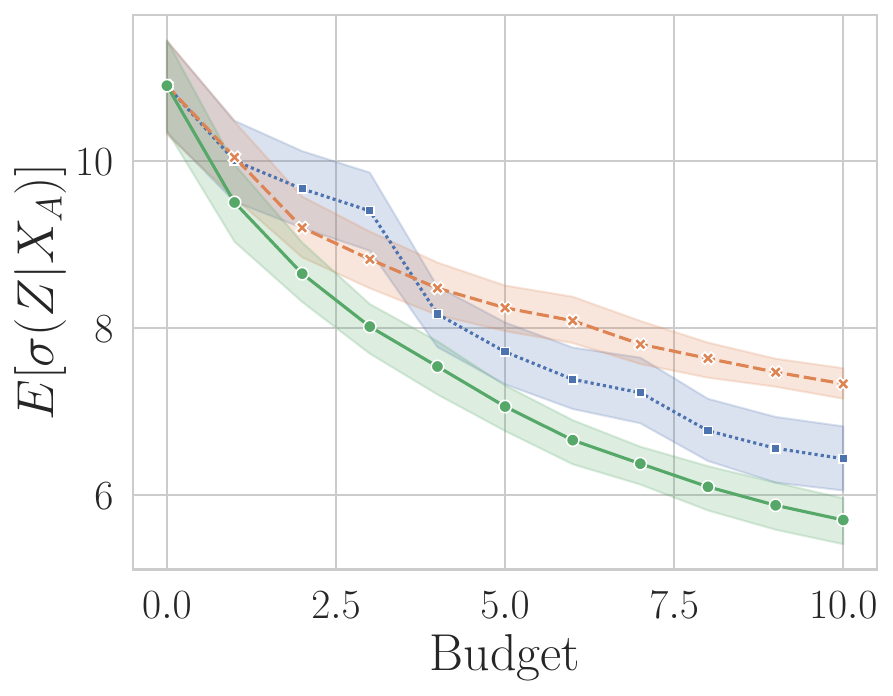}
        \includegraphics[width=0.48\linewidth]{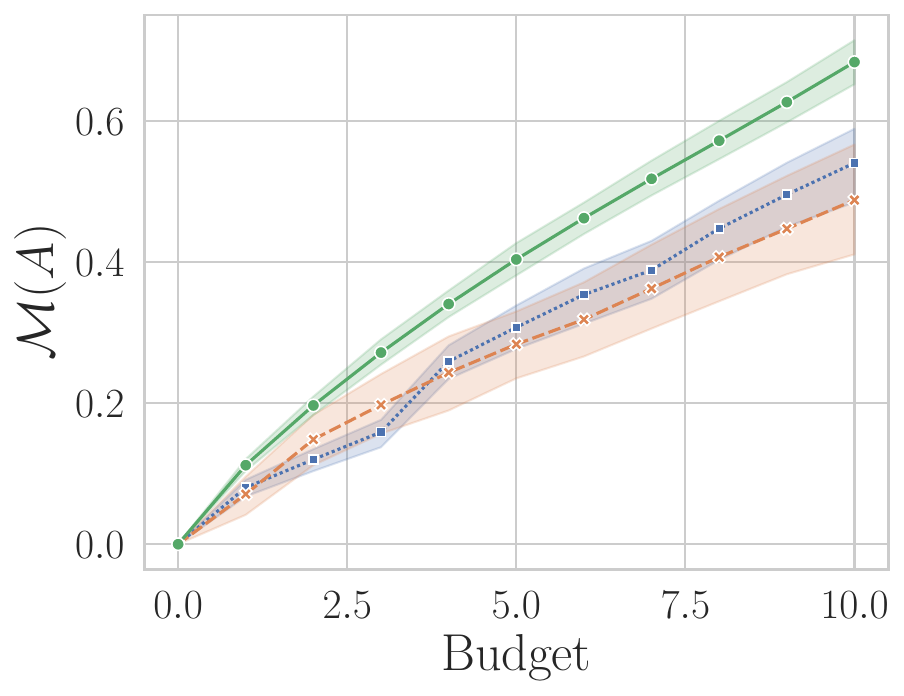}
        \caption{$IC(0.1,4)$}
    \end{subfigure}
    \begin{subfigure}{0.48\columnwidth}
        \includegraphics[width=0.48\linewidth]{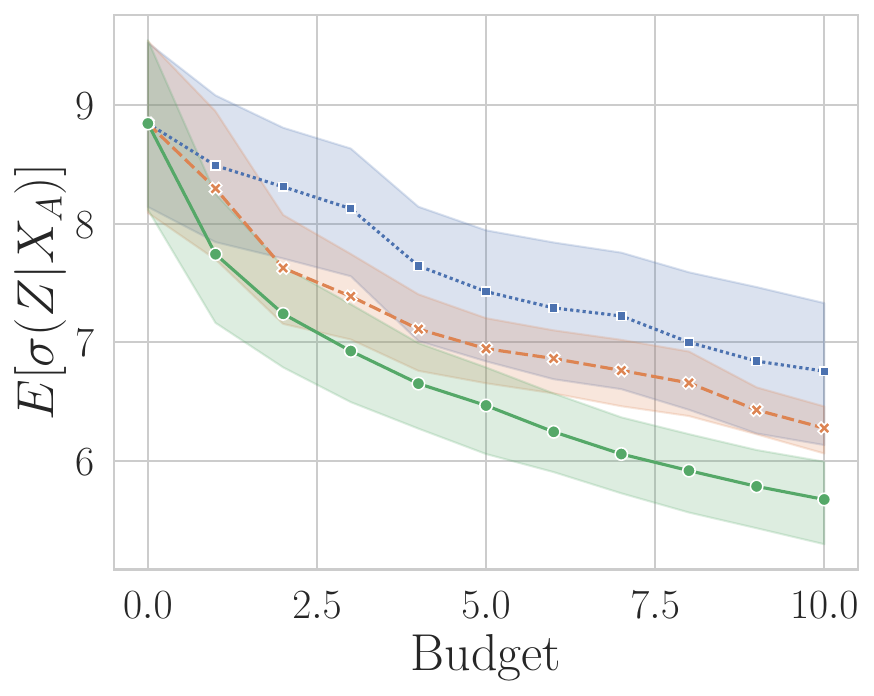}
        \includegraphics[width=0.48\linewidth]{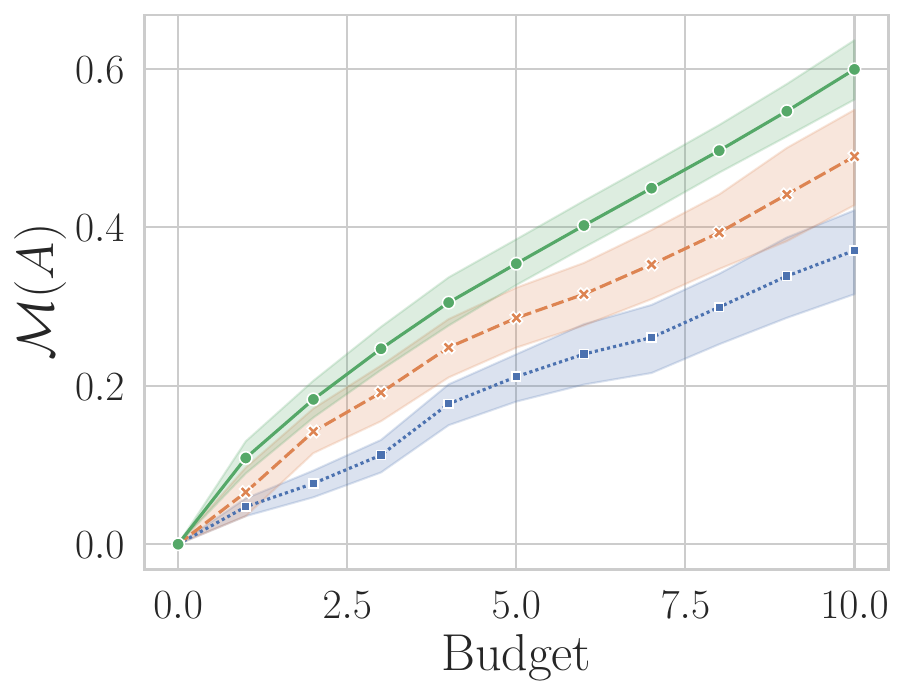}
        \caption{$IC(0.2,2)$}
    \end{subfigure}
    \begin{subfigure}{0.48\columnwidth}
        \includegraphics[width=0.48\linewidth]{performance_networkhospital_icu_contact_regime0.2_4_fixed_source_vary_budget_std_lineplot.pdf}
       \includegraphics[width=0.48\linewidth]{performance_networkhospital_icu_contact_regime0.2_4_fixed_source_vary_budget_mi_lineplot.pdf}
       \caption{$IC(0.2,4)$}
    \end{subfigure}
      
        \includegraphics[width=0.5\linewidth]{methods_legend-2.pdf}
    \caption{Performance of \textsc{GreedyMI} vs baselines under \texttt{known-source} seeding (averaged over 10 runs) on \icu{}}
    \label{fig:perf_greedy_fixed_addn_icu}
\end{figure}

\begin{figure}
    \centering
    \begin{subfigure}{0.48\columnwidth}
    \includegraphics[width=0.48\linewidth]{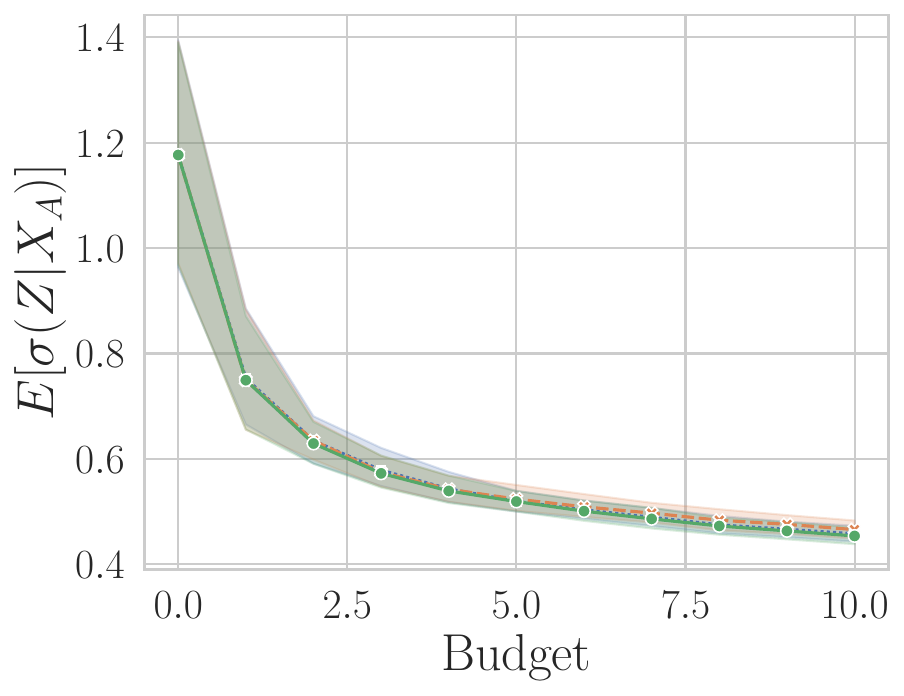}
    \includegraphics[width=0.48\linewidth]{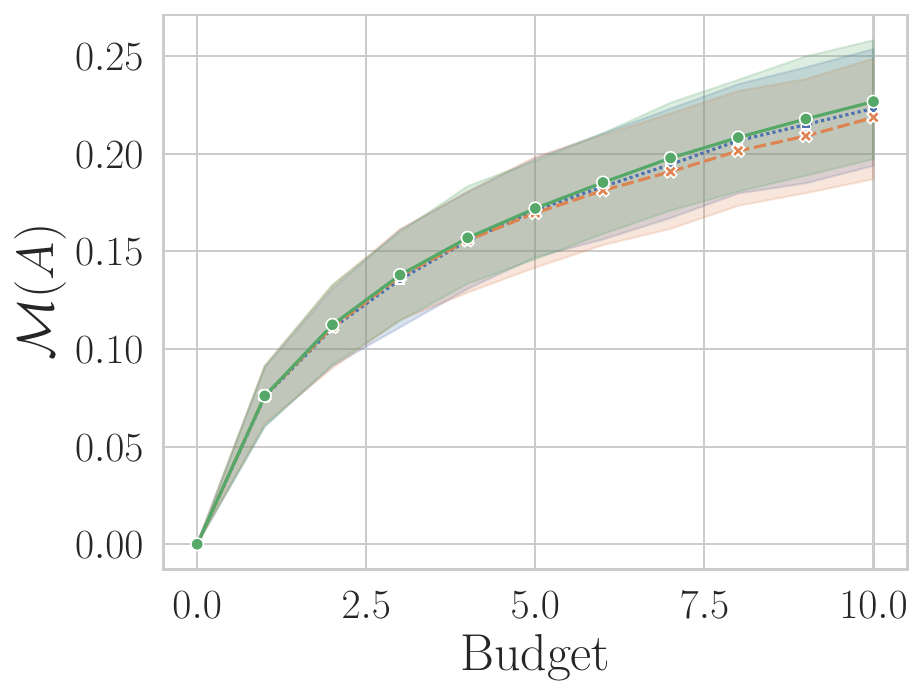}
    \caption{$IC(0.05,2)$}
    \end{subfigure}
    \begin{subfigure}{0.48\columnwidth}
    \includegraphics[width=0.48\linewidth]{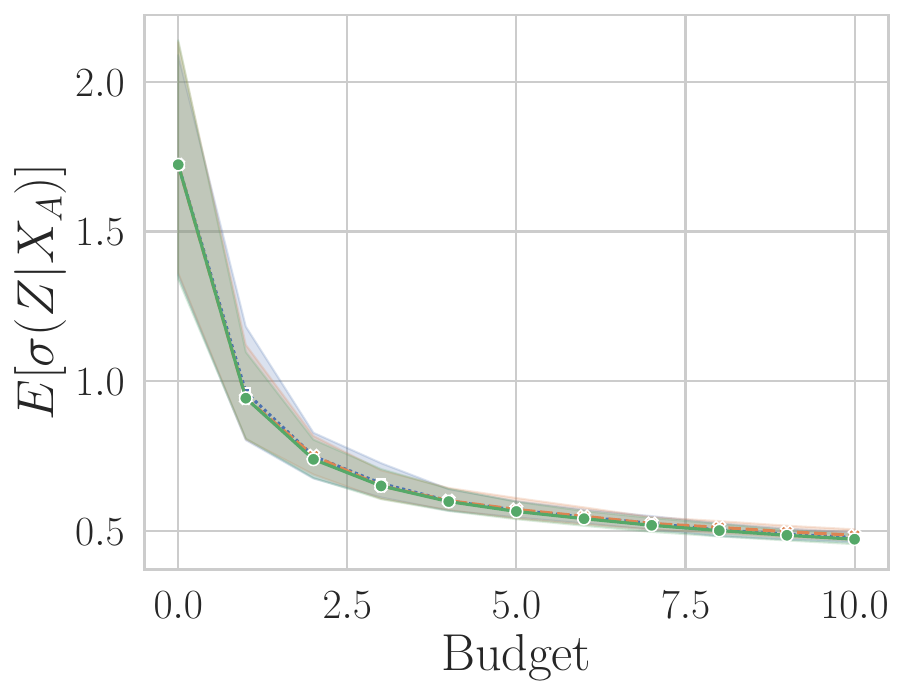}
    \includegraphics[width=0.48\linewidth]{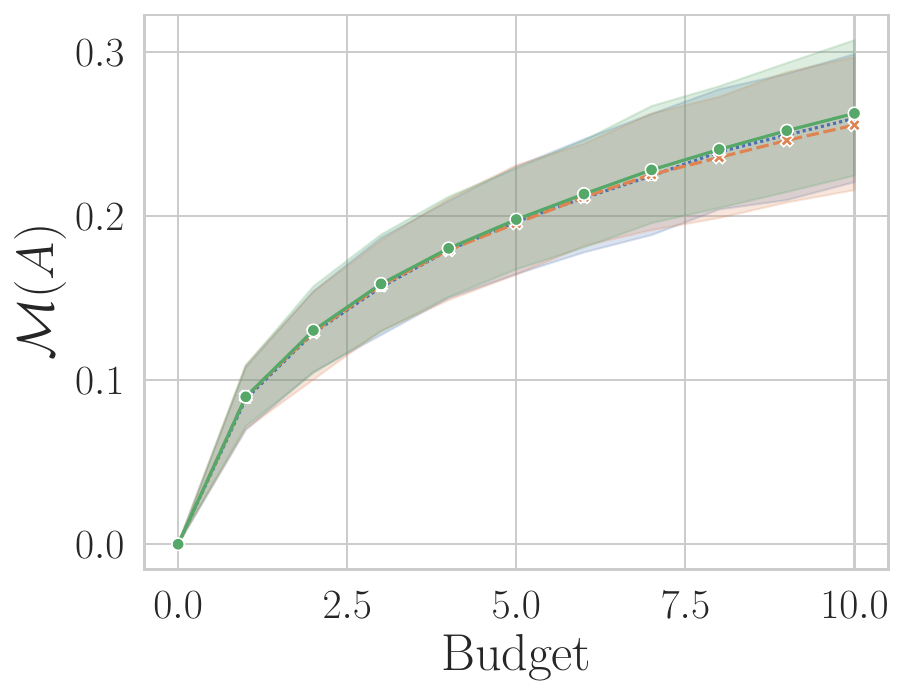}
    \caption{$IC(0.05,4)$}
    \end{subfigure}
    \begin{subfigure}{0.48\columnwidth}
    \includegraphics[width=0.48\linewidth]{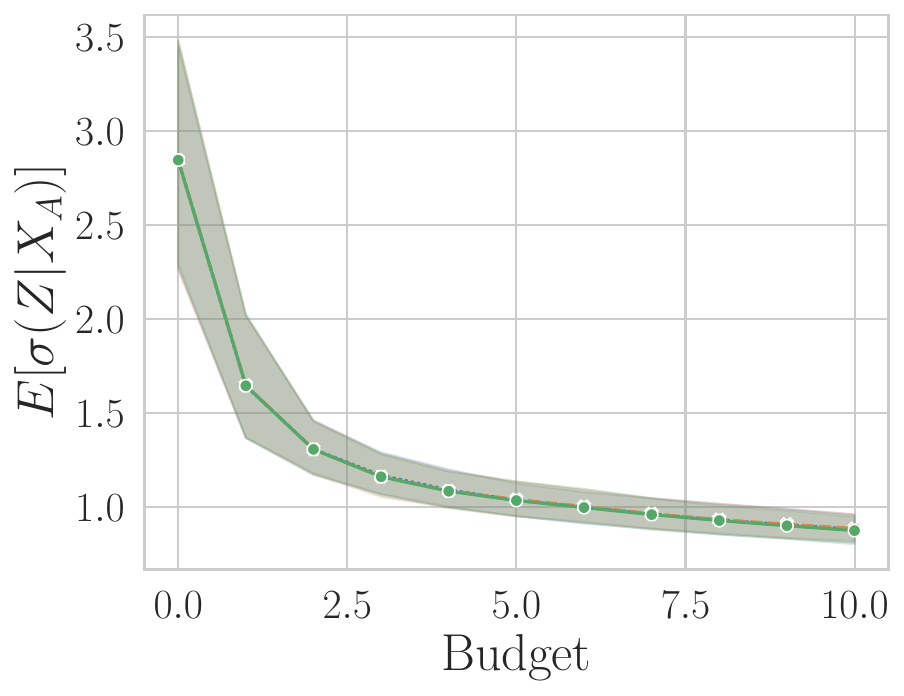}
    \includegraphics[width=0.48\linewidth]{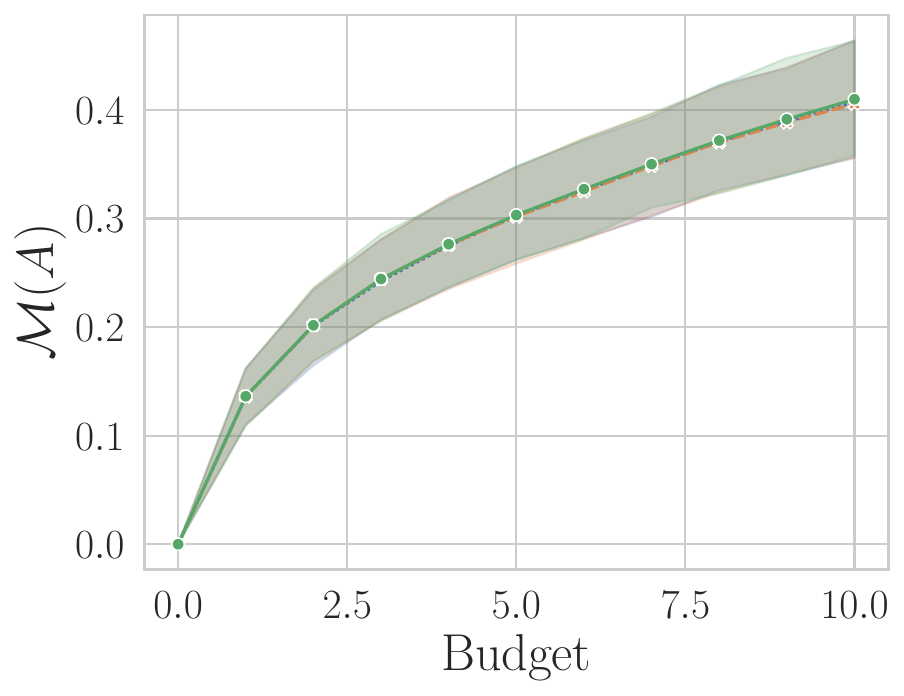}
    \caption{$IC(0.1,2)$}
    \end{subfigure}
    \begin{subfigure}{0.48\columnwidth}
    \includegraphics[width=0.48\linewidth]{performance_networkcl_n1000_g2.5_seed8_regime0.1_4_random_source_vary_budget_std_lineplot.pdf}
    \includegraphics[width=0.48\linewidth]{performance_networkcl_n1000_g2.5_seed8_regime0.1_4_random_source_vary_budget_mi_lineplot.pdf}
    \caption{$IC(0.1,4)$}
    \end{subfigure}
    \begin{subfigure}{0.48\columnwidth}
    \includegraphics[width=0.48\linewidth]{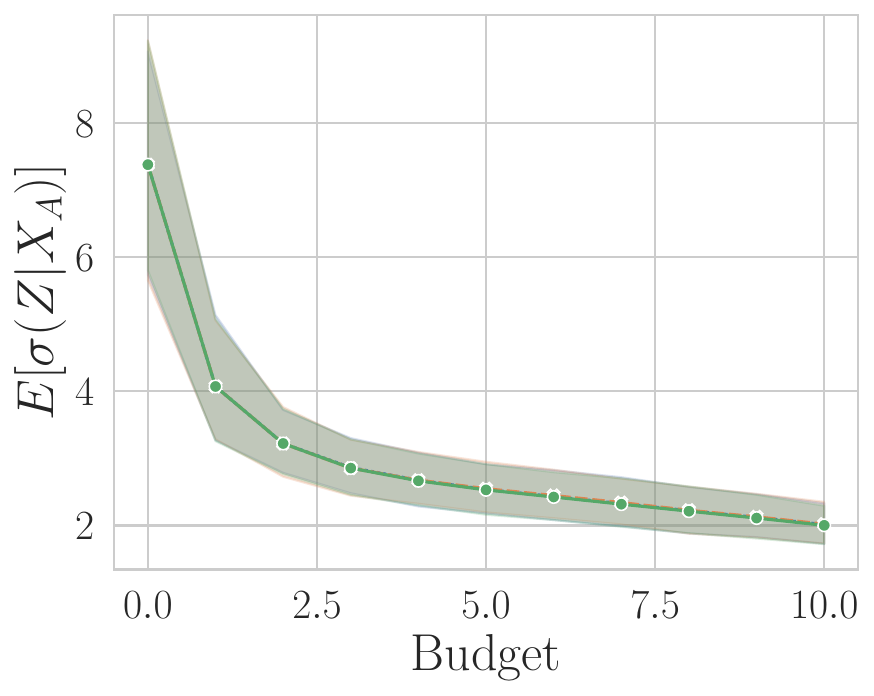}
    \includegraphics[width=0.48\linewidth]{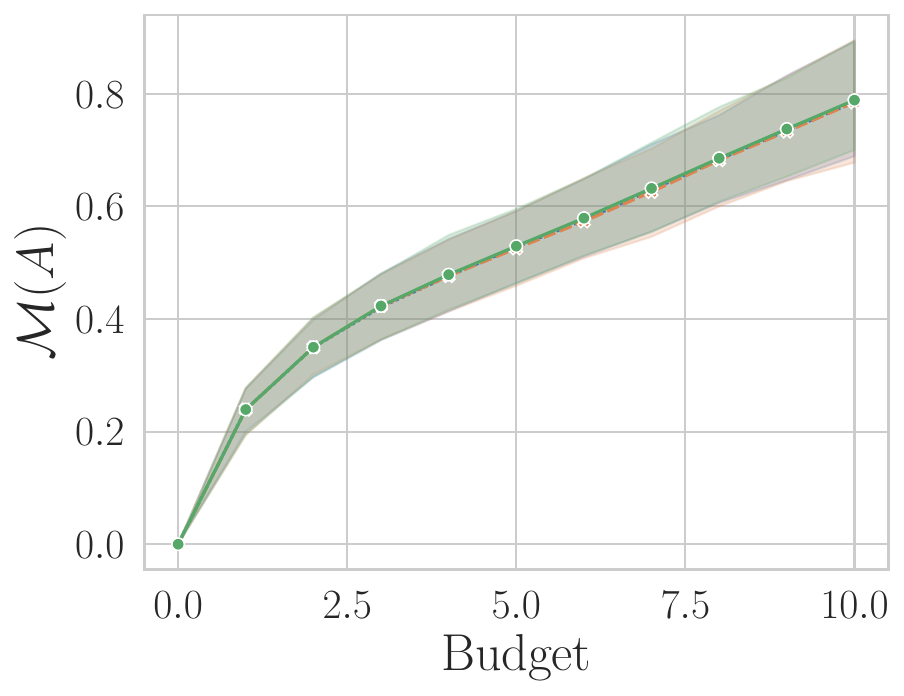}
    \caption{$IC(0.2,2)$}
    \end{subfigure}
    \begin{subfigure}{0.48\columnwidth}
    \includegraphics[width=0.48\linewidth]{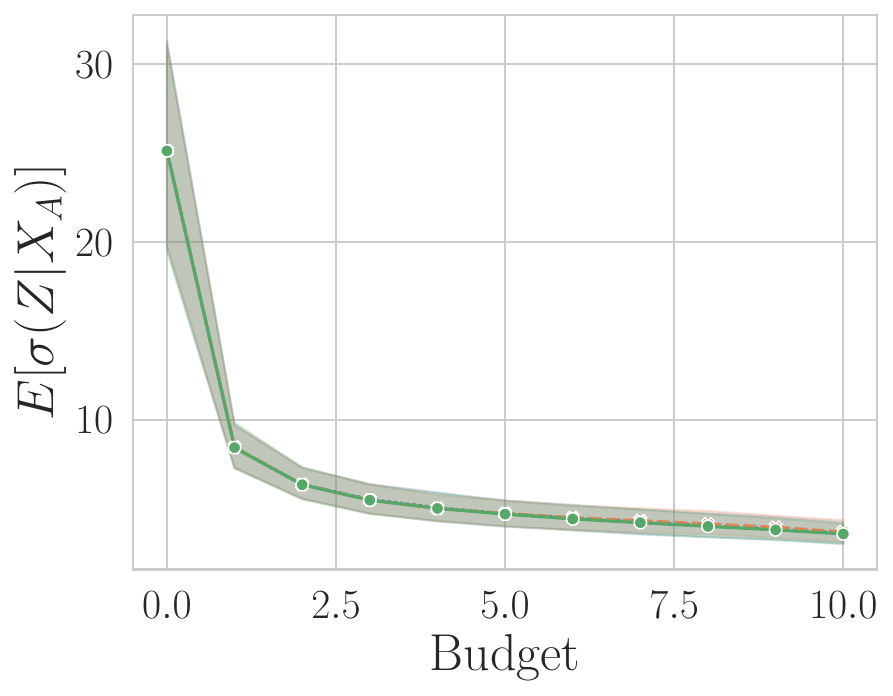}
    \includegraphics[width=0.48\linewidth]{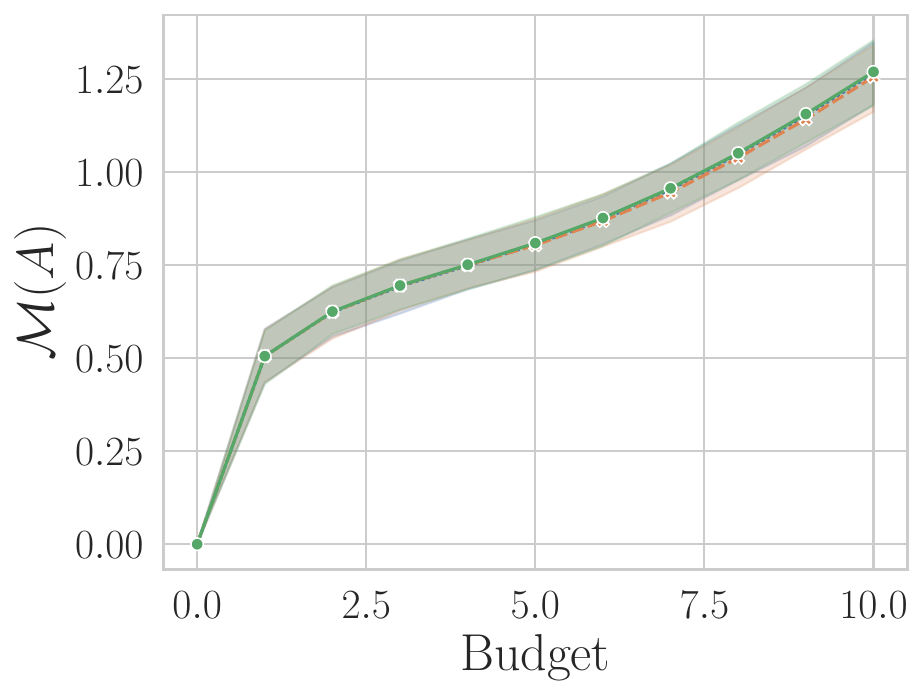}
    \caption{$IC(0.2,4)$}
    \end{subfigure}
    \includegraphics[width=0.5\linewidth]{methods_legend-2.pdf}
    \caption{Performance of \textsc{GreedyMI} vs baselines under \rsource{} seeding on \pl}
    \label{fig:perf_greedy_random_pl_addn}
\end{figure}

\begin{figure}
    \centering
    \begin{subfigure}{0.48\columnwidth}
    \includegraphics[width=0.48\linewidth]{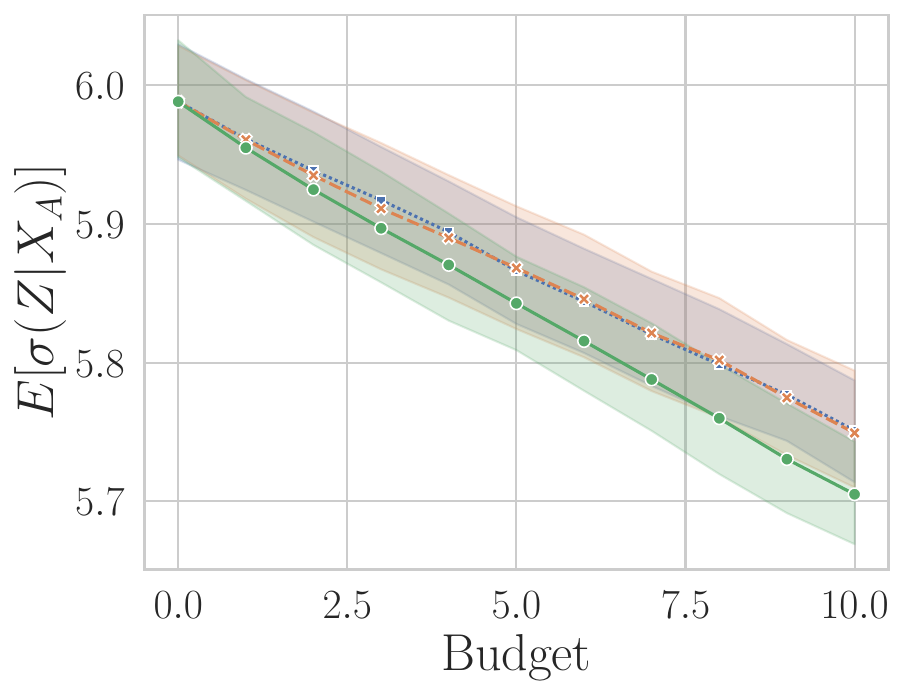}
    \includegraphics[width=0.48\linewidth]{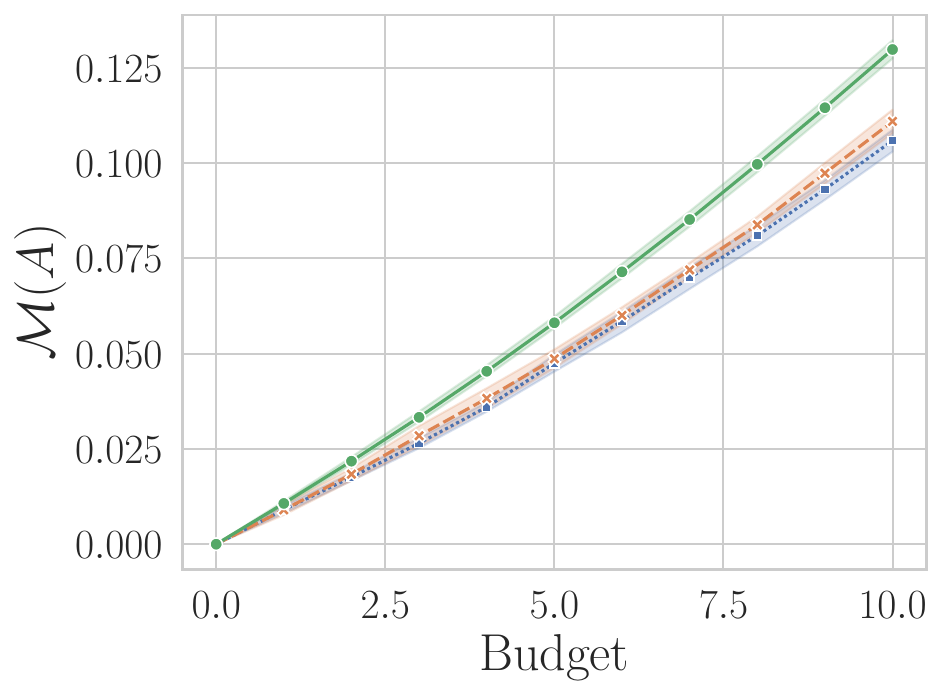}
    \caption{$IC(0.05,2)$}
    \end{subfigure}
    \begin{subfigure}{0.48\columnwidth}
        \includegraphics[width=0.48\linewidth]{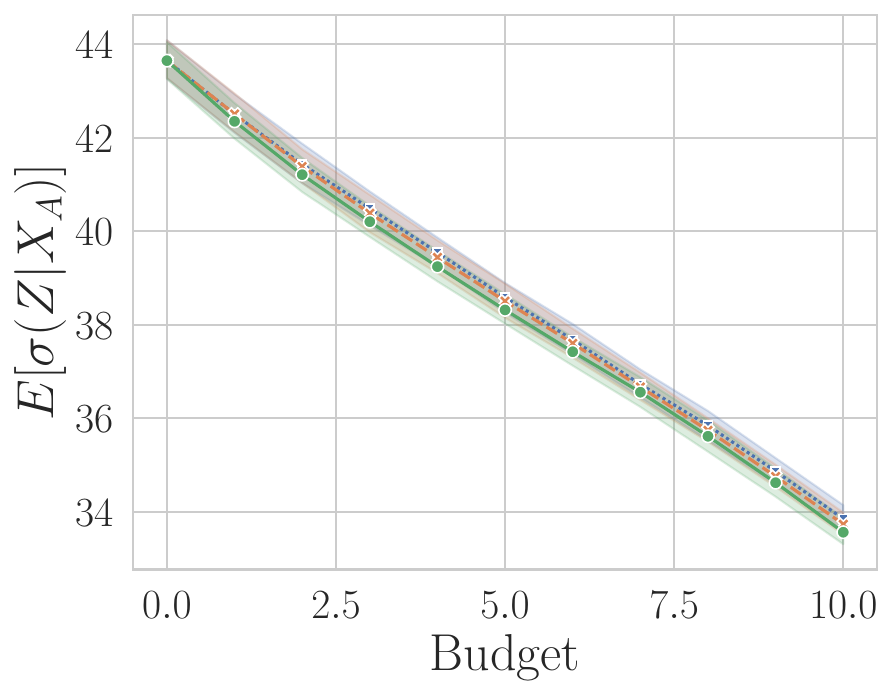}
        \includegraphics[width=0.48\linewidth]{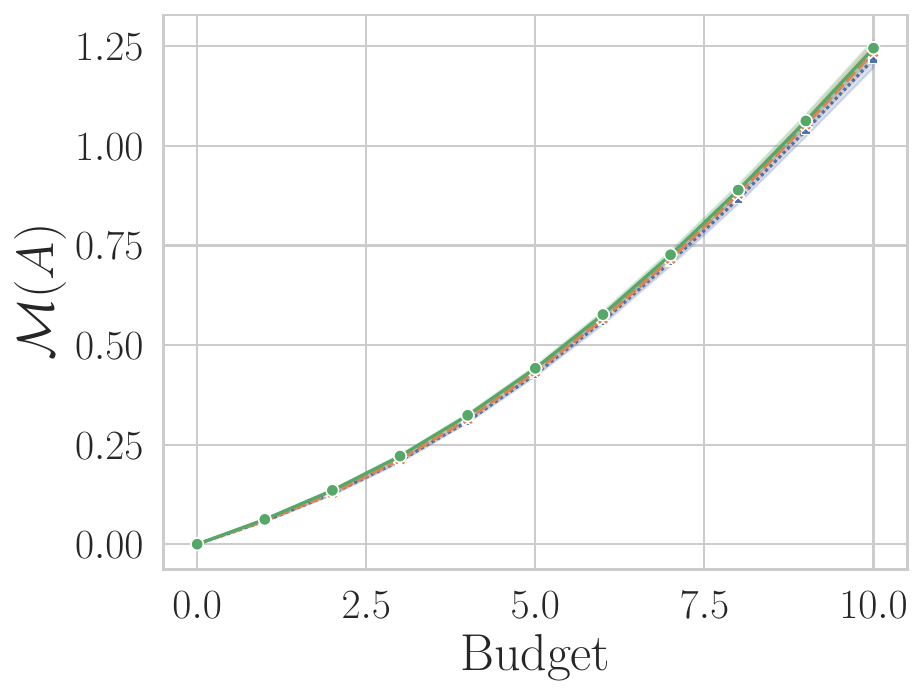}
        \caption{$IC(0.05,4)$}
    \end{subfigure}
    \begin{subfigure}{0.48\columnwidth}
    \includegraphics[width=0.48\linewidth]{performance_networker_n1000_q0.05_seed4_regime0.07_2_random_source_vary_budget_std_lineplot.pdf}
    \includegraphics[width=0.48\linewidth]{performance_networker_n1000_q0.05_seed4_regime0.07_2_random_source_vary_budget_mi_lineplot.pdf}
    \caption{$IC(0.07,2)$}
    \end{subfigure}
    \begin{subfigure}{0.48\columnwidth}
        \includegraphics[width=0.48\linewidth]{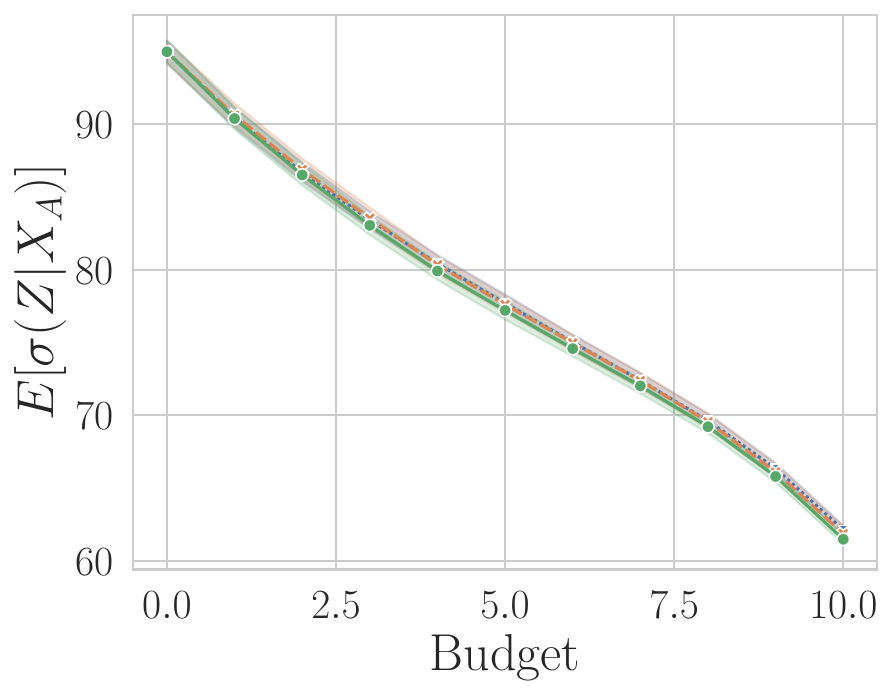}
        \includegraphics[width=0.48\linewidth]{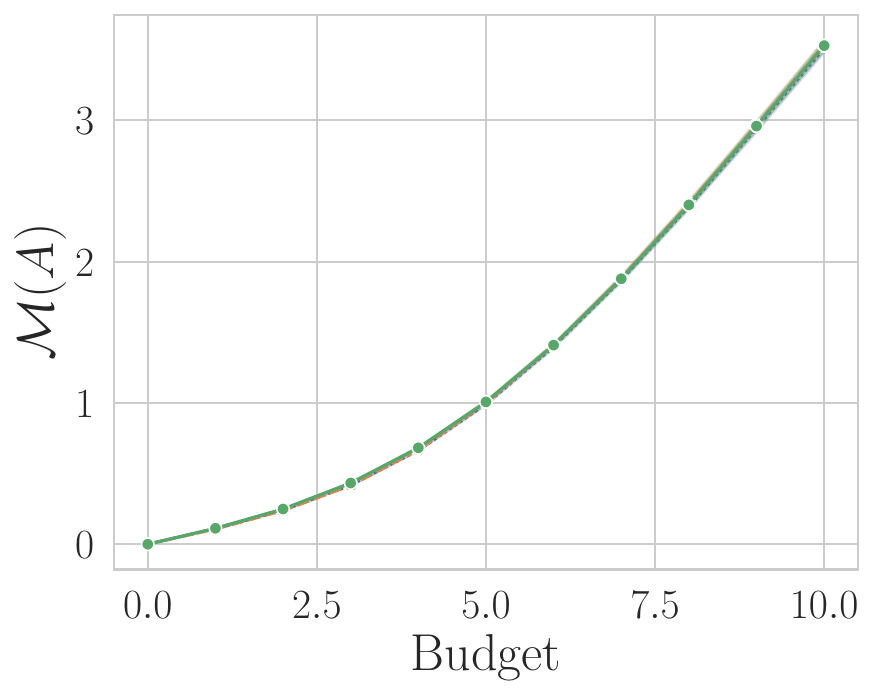}
    \caption{$IC(0.07,4)$}
    \end{subfigure}
    \caption{Performance of \textsc{GreedyMI} vs baselines under \rsource{} seeding on \er}
    \label{fig:perf_greedy_random_er_addn}
\end{figure}
\begin{figure}
    \centering
    \begin{subfigure}{0.48\columnwidth}
    \includegraphics[width=0.48\linewidth]{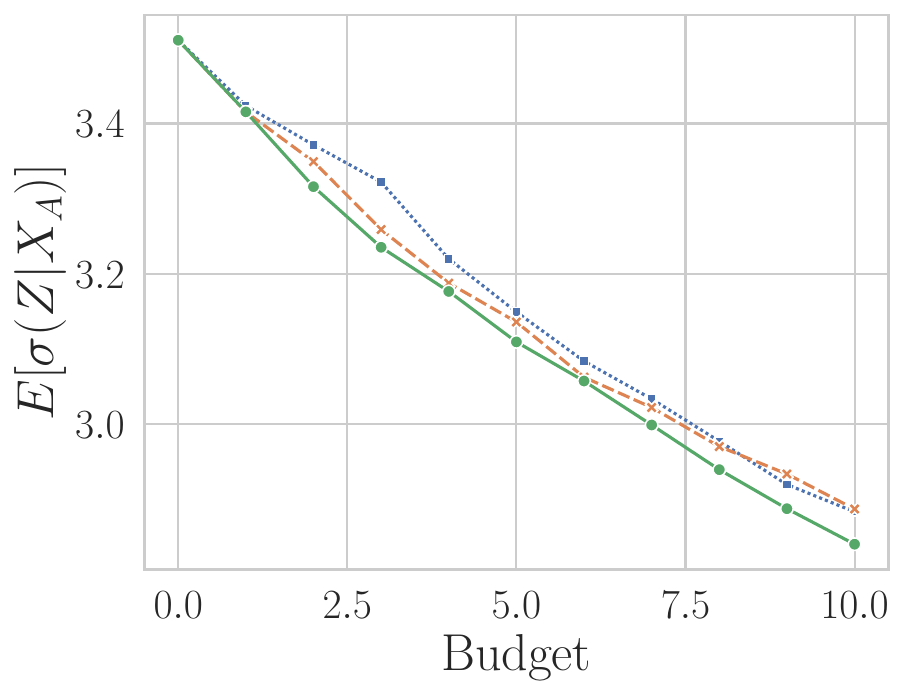}
    \includegraphics[width=0.48\linewidth]{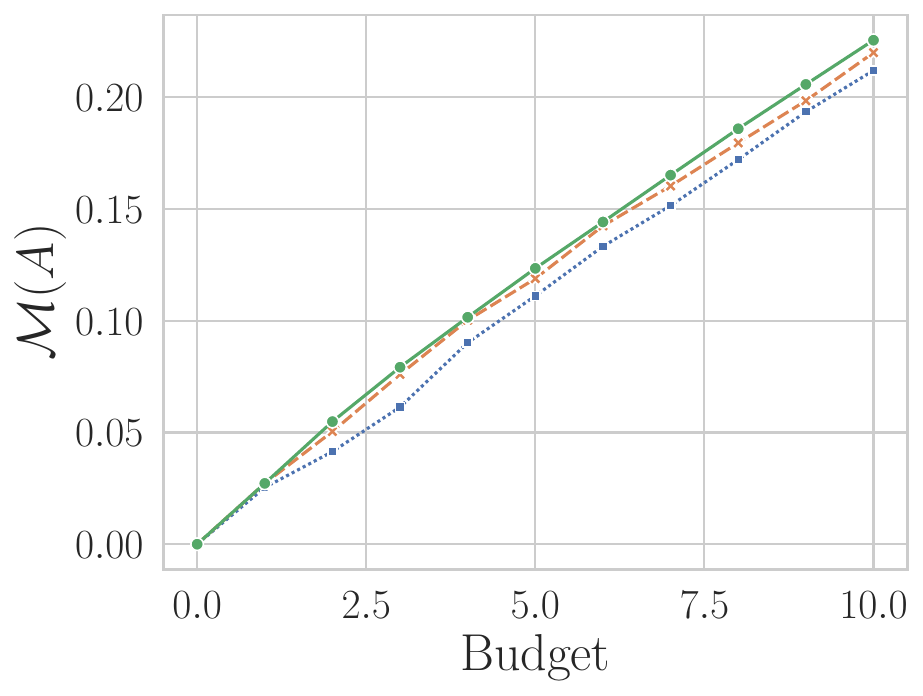}
    \caption{$IC(0.1,2)$}
    \end{subfigure}
    \begin{subfigure}{0.48\columnwidth}
    \includegraphics[width=0.48\linewidth]{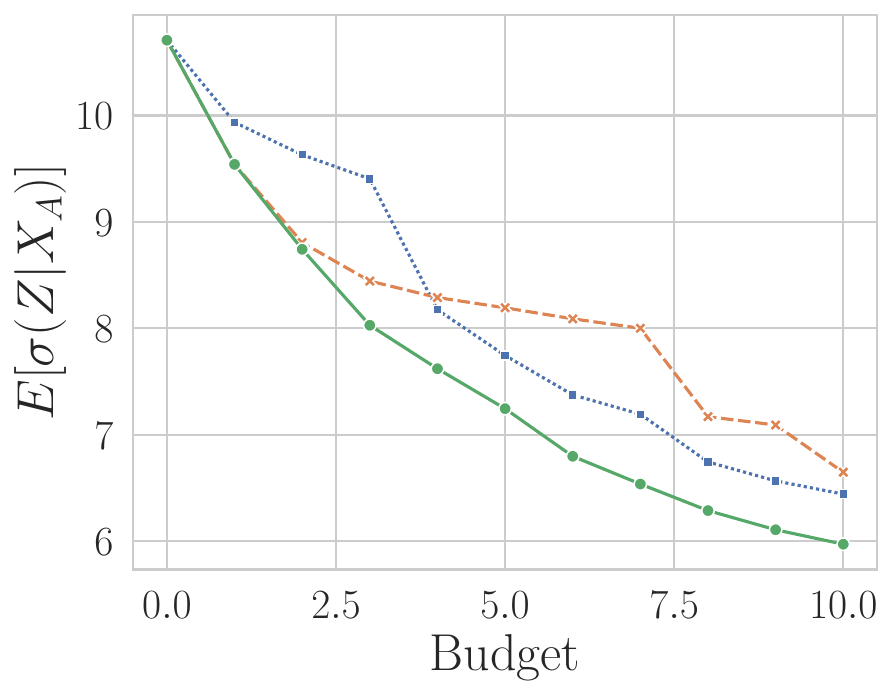}
    \includegraphics[width=0.48\linewidth]{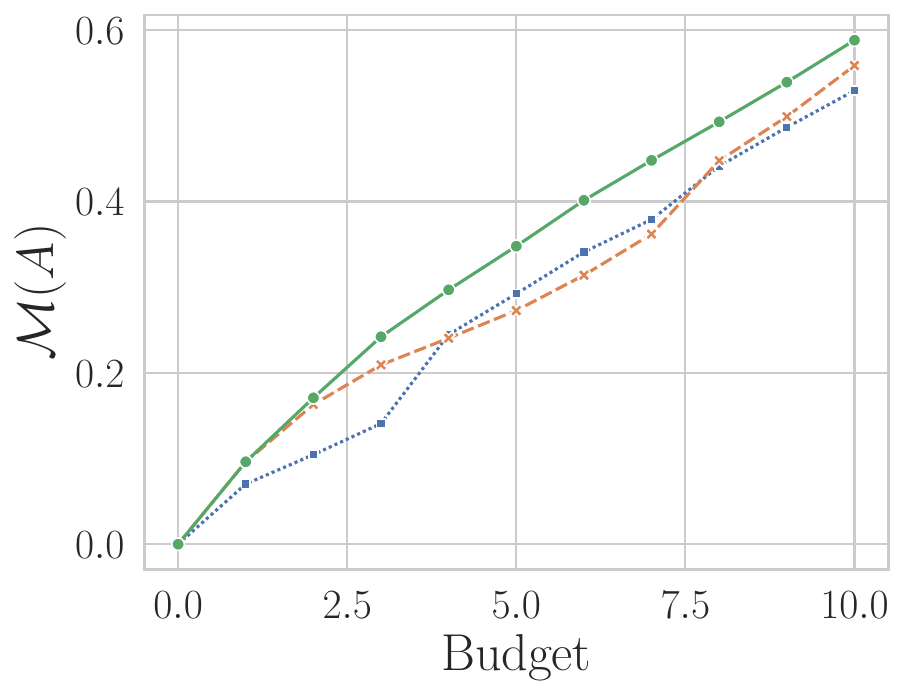}
    \caption{$IC(0.1,4)$}
    \end{subfigure}
    \begin{subfigure}{0.48\columnwidth}
    \includegraphics[width=0.48\linewidth]{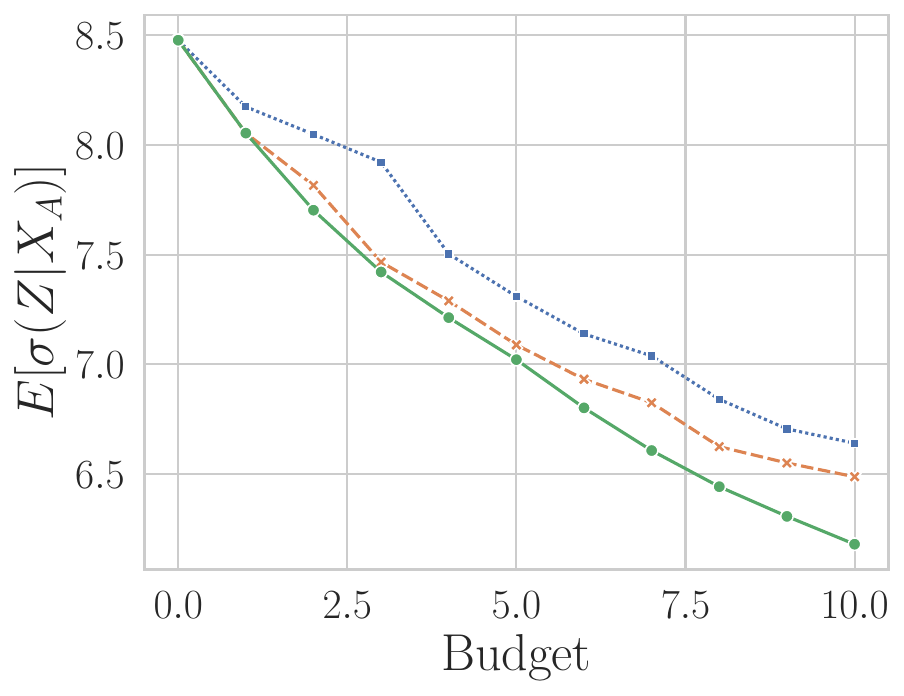}
    \includegraphics[width=0.48\linewidth]{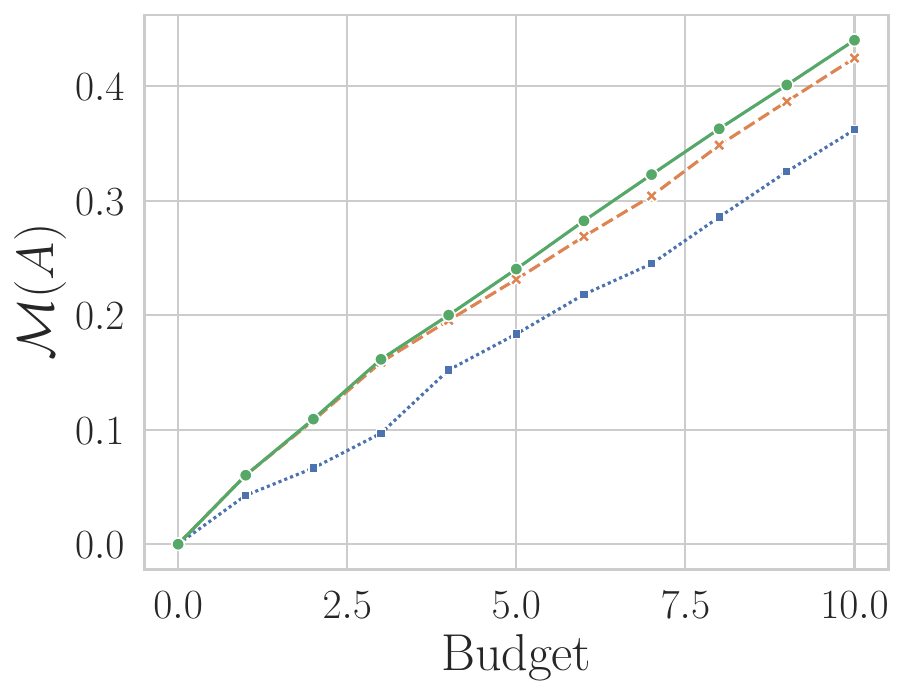}
    \caption{$IC(0.2,2)$}
    \end{subfigure}
    \begin{subfigure}{0.48\columnwidth}
    \includegraphics[width=0.48\linewidth]{performance_networkhospital_icu_contact_regime0.2_4_random_source_vary_budget_std_lineplot.pdf}
    \includegraphics[width=0.48\linewidth]{performance_networkhospital_icu_contact_regime0.2_4_random_source_vary_budget_mi_lineplot.pdf}
    \caption{$IC(0.2,4)$}
    \end{subfigure}
    \caption{Performance of \textsc{GreedyMI} vs baselines under \rsource{} seeding on \icu}
    \label{fig:perf_greedy_random_icu_addn}
\end{figure}

\begin{figure}
    \centering
    \begin{subfigure}{0.24\columnwidth}
    \includegraphics[width=0.99\linewidth]{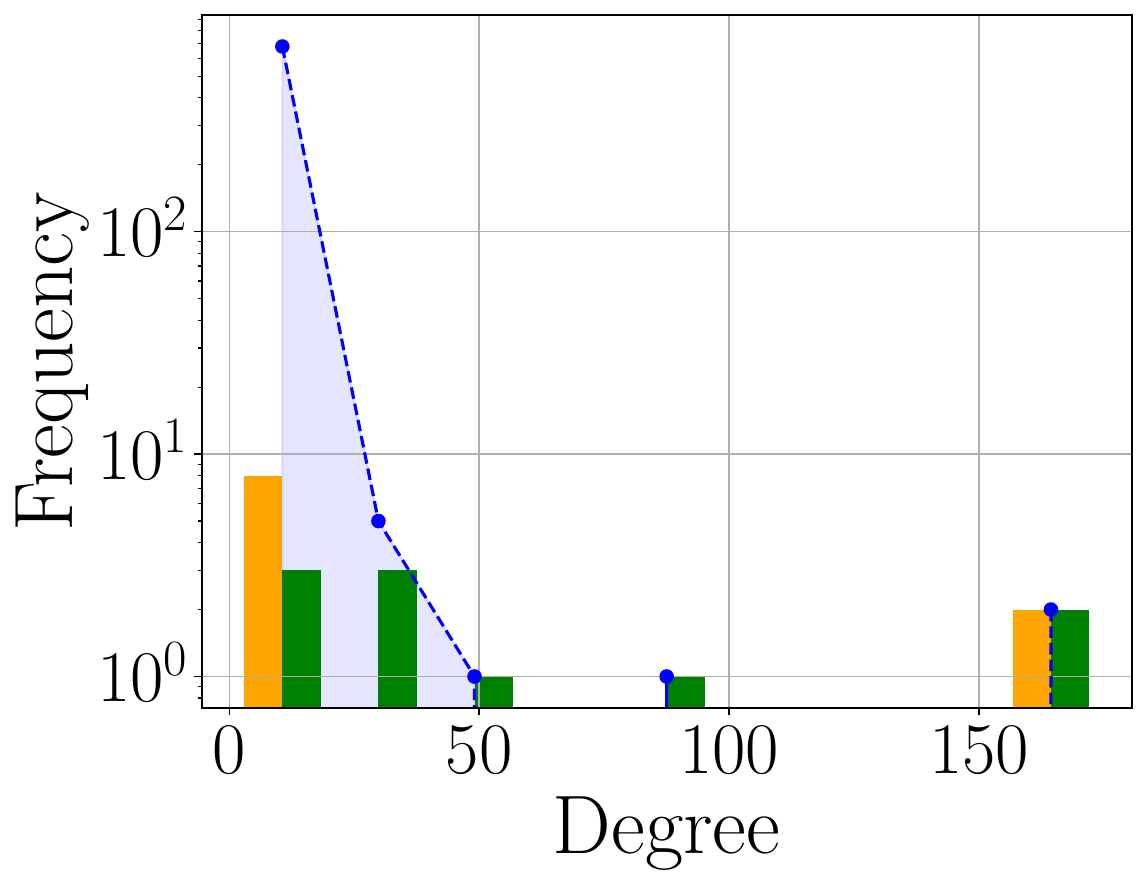}
    \caption{$IC(0.1,2)$}
    \end{subfigure}
     \begin{subfigure}{0.24\columnwidth}
    \includegraphics[width=0.99\linewidth]{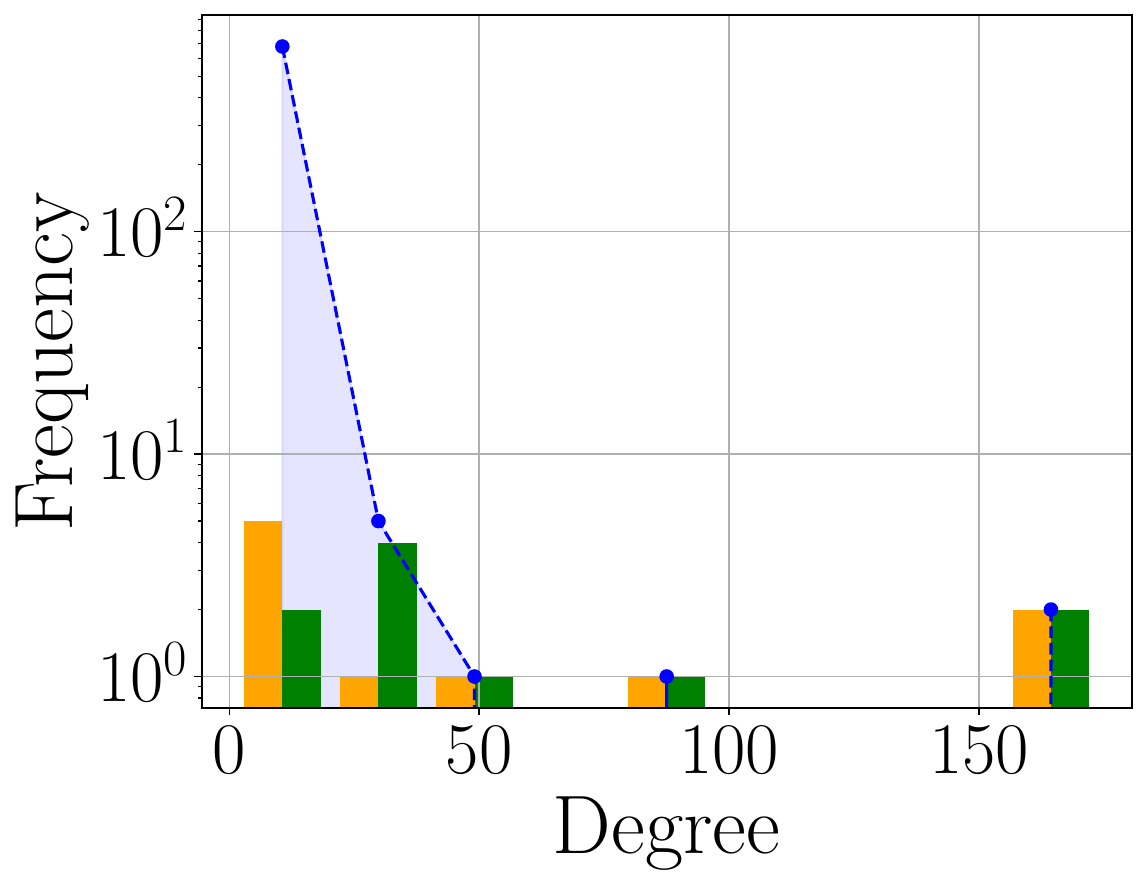}
    \caption{$IC(0.1,4)$}
    \end{subfigure}
     \begin{subfigure}{0.24\columnwidth}
    \includegraphics[width=0.99\linewidth]{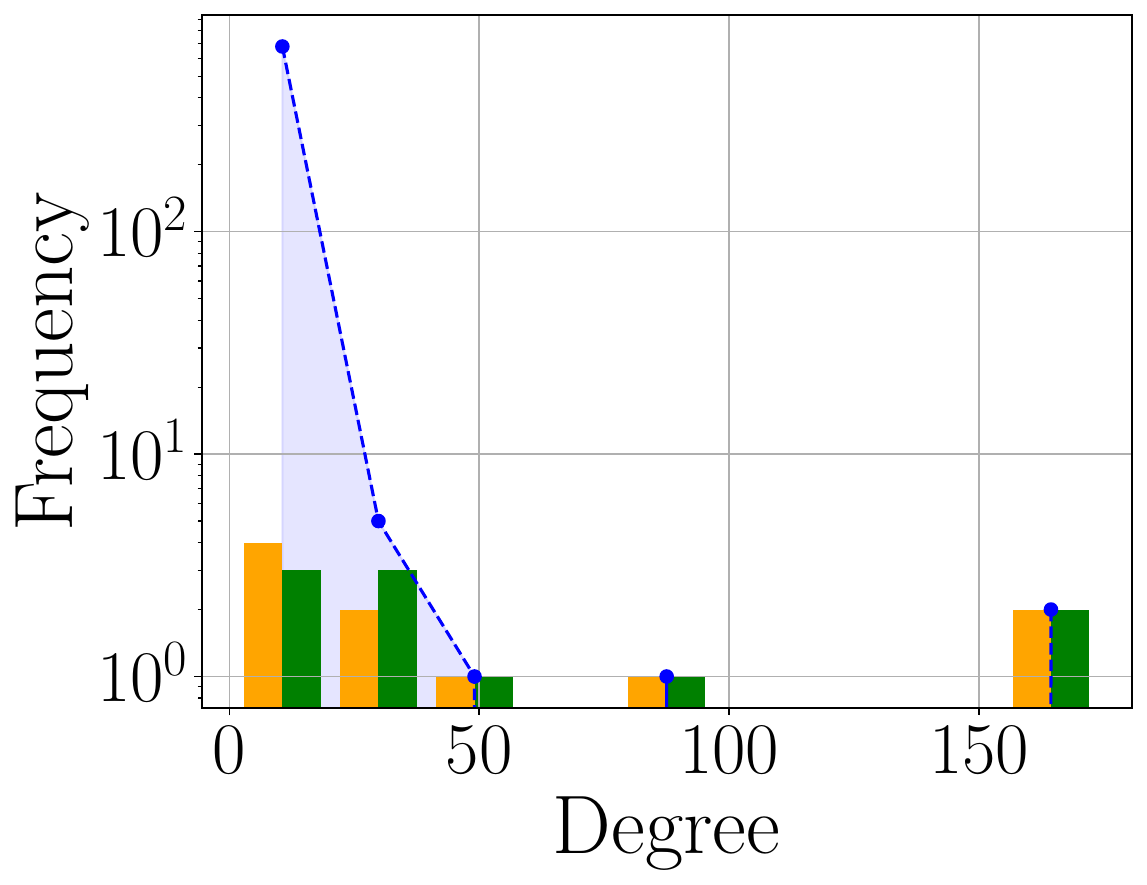}
    \caption{$IC(0.2,2)$}
    \end{subfigure}
     \begin{subfigure}{0.24\columnwidth}
    \includegraphics[width=0.99\linewidth]{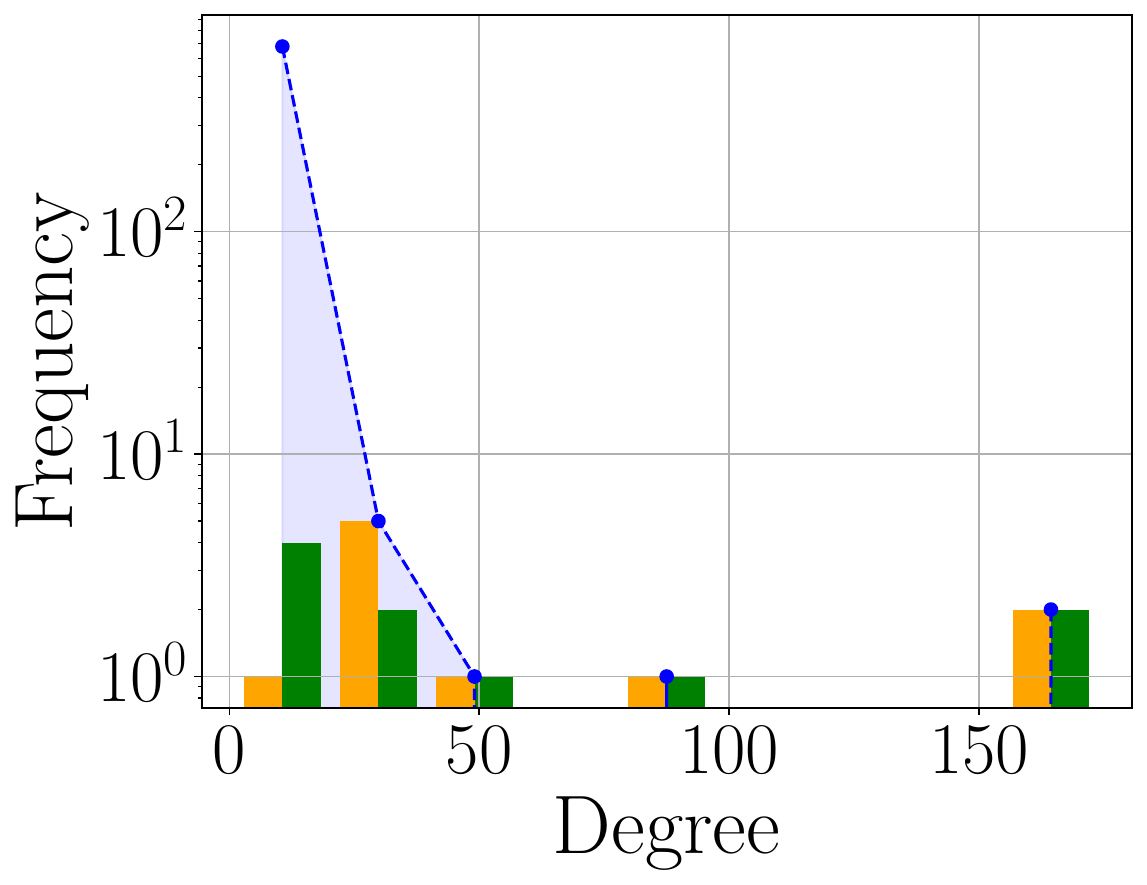}
    \caption{$IC(0.2,4)$}
    \end{subfigure}
    \includegraphics[width=0.5\linewidth]{degrees_legend.pdf}
    \caption{Comparison of the degree distribution of \textsc{GreedyMI} and \textsc{Vulnerable} under \texttt{known-source} seeding on \pl{}}
    \label{fig:degree_dist_addn_fixed_pl}
\end{figure}
\begin{figure}
    \centering
    \begin{subfigure}{0.28\columnwidth}
    \includegraphics[width=0.99\linewidth]{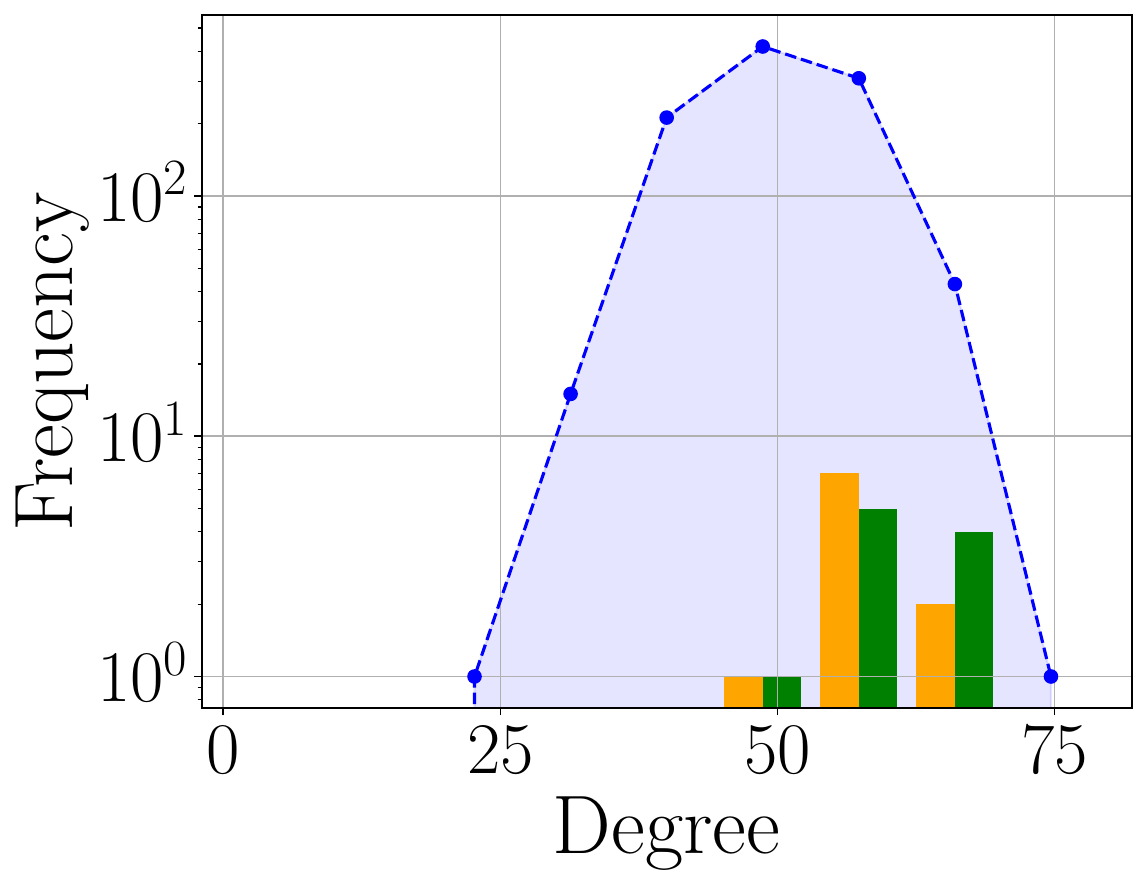}
    \caption{$IC(0.05,2)$}
    \end{subfigure}
     \begin{subfigure}{0.28\columnwidth}
    \includegraphics[width=0.99\linewidth]{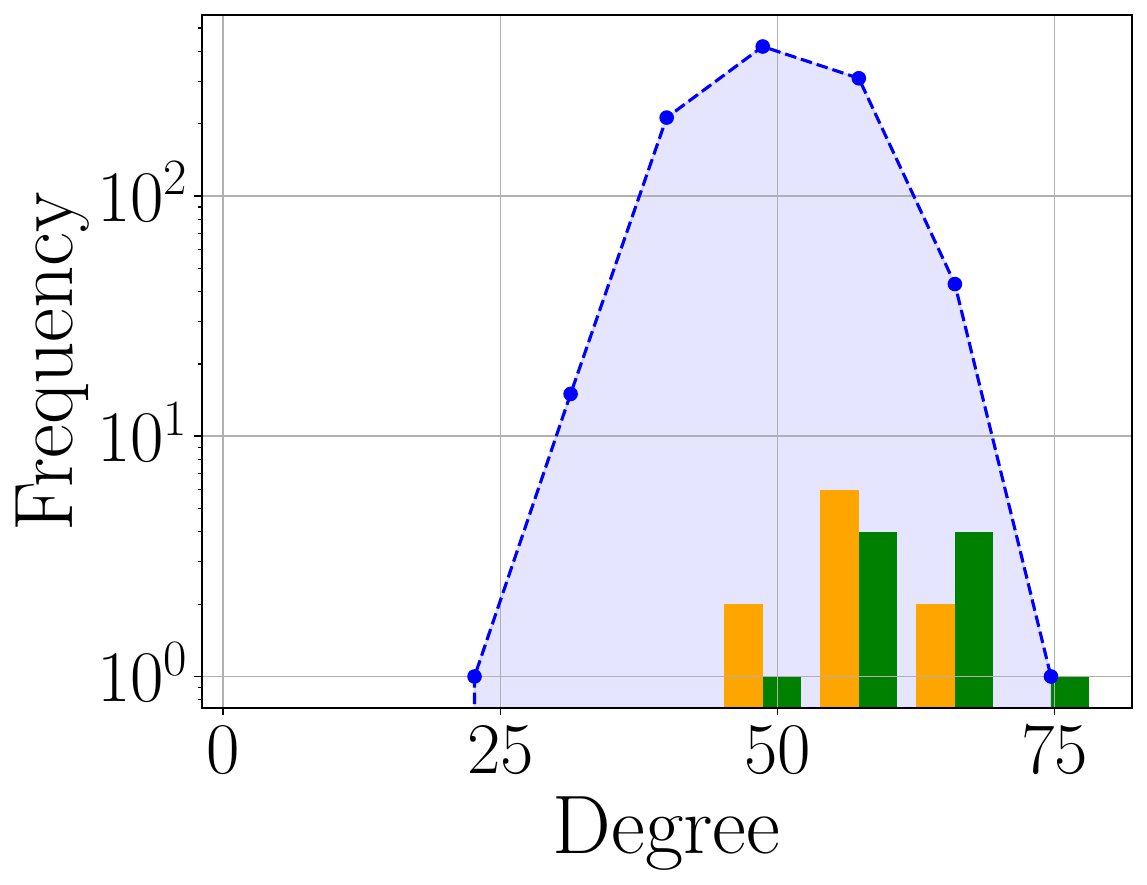}
    \caption{$IC(0.07,2)$}
    \end{subfigure}
    \includegraphics[width=0.5\linewidth]{degrees_legend.pdf}
    \caption{Comparison of the degree distrsc{Vulnerable} under \texttt{known-source} seeding on \er{}}
    \label{fig:degree_dist_addn_fixed_er}
\end{figure}

\begin{figure}
    \centering
    \begin{subfigure}{0.24\columnwidth}
    \includegraphics[width=0.99\linewidth]{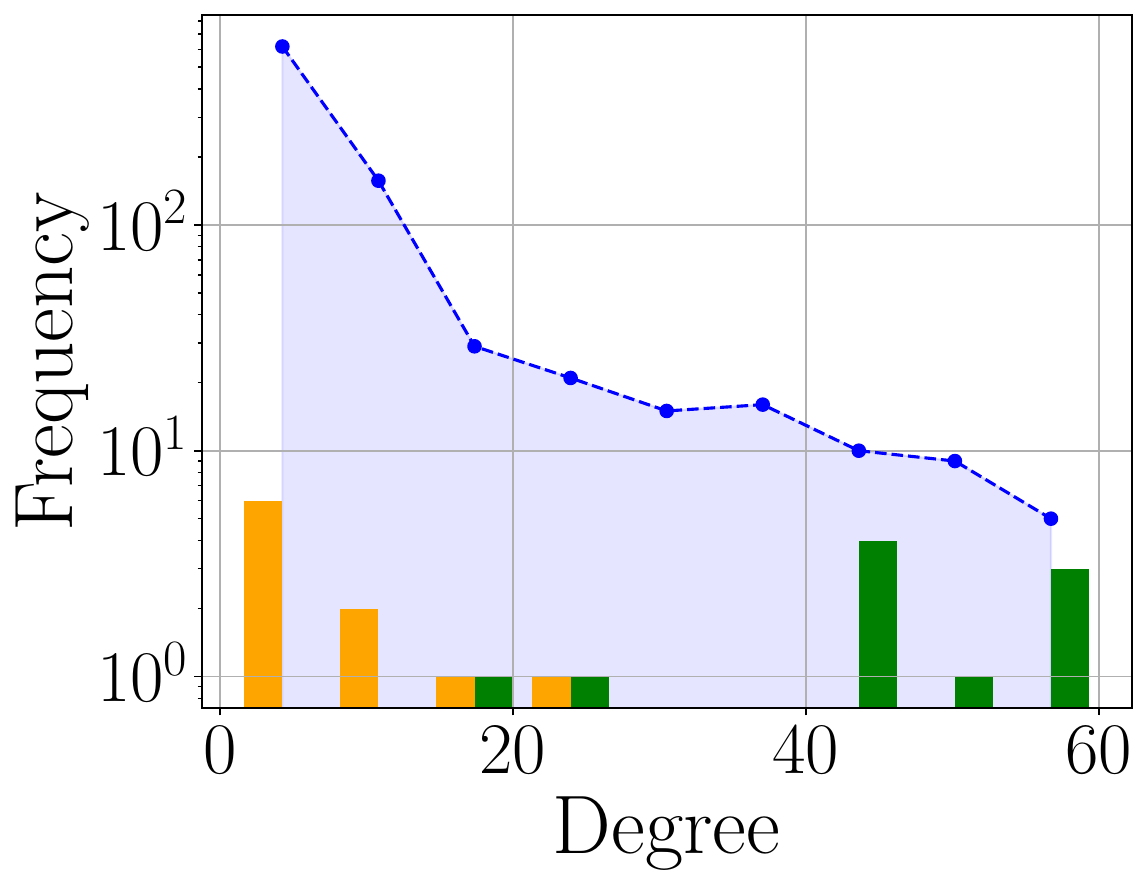}
    \caption{$IC(0.1,2)$}
    \end{subfigure}
     \begin{subfigure}{0.24\columnwidth}
    \includegraphics[width=0.99\linewidth]{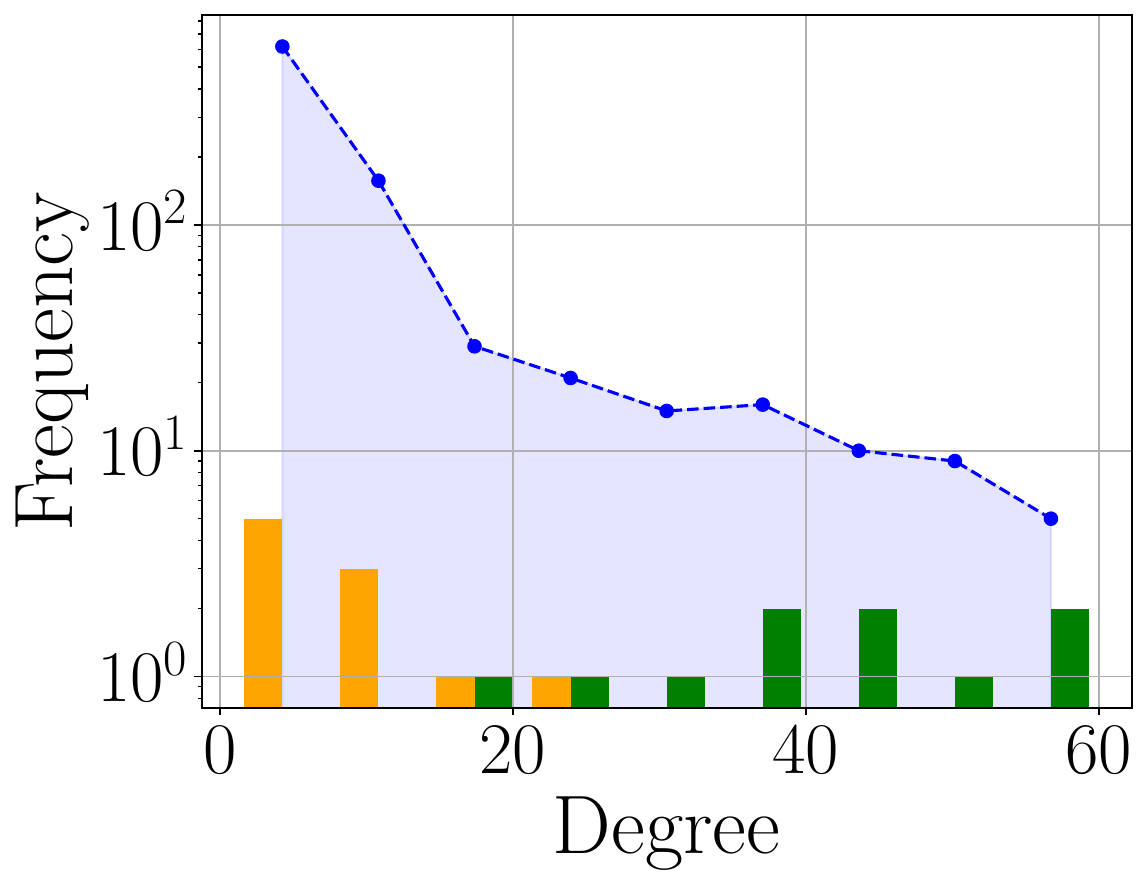}
    \caption{$IC(0.1,4)$}
    \end{subfigure}
    \begin{subfigure}{0.24\columnwidth}
    \includegraphics[width=0.99\linewidth]{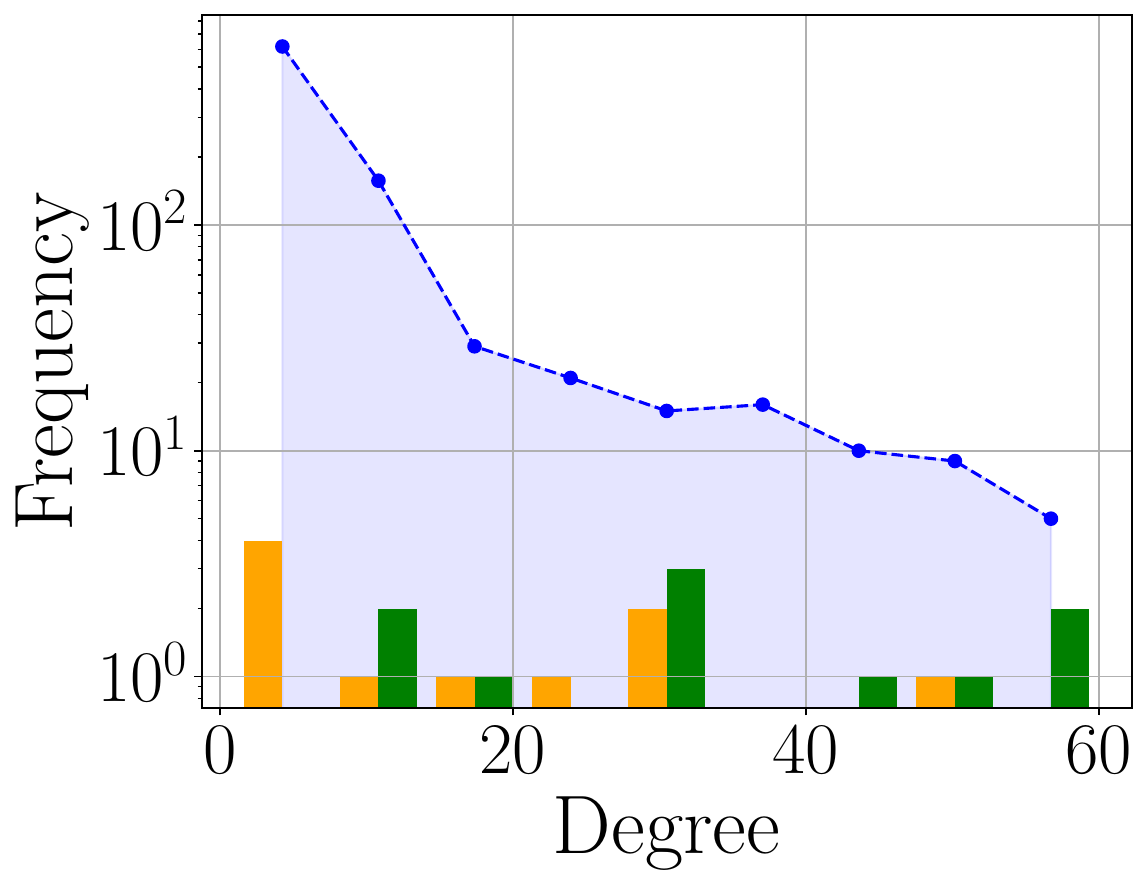}
    \caption{$IC(0.2,2)$}
    \end{subfigure}
     \begin{subfigure}{0.24\columnwidth}
    \includegraphics[width=0.99\linewidth]{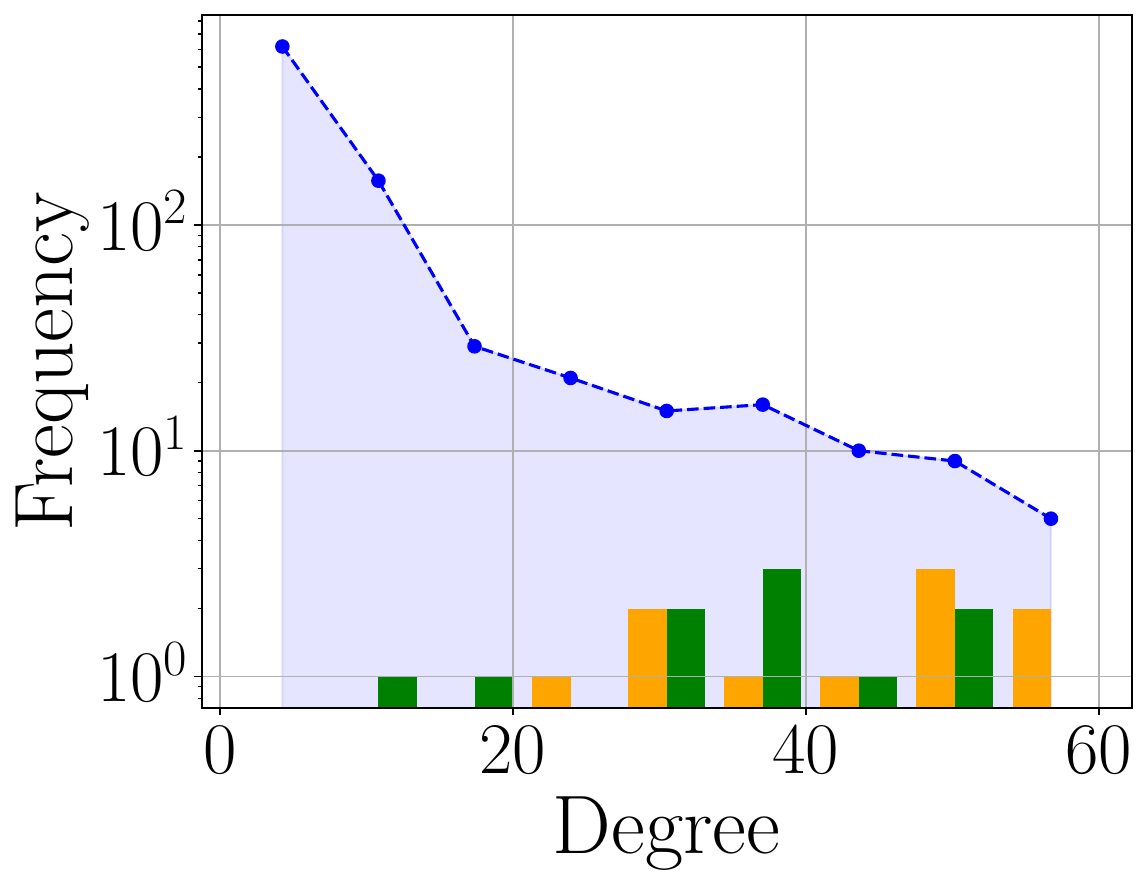}
    \caption{$IC(0.1,4)$}
    \end{subfigure}
    \includegraphics[width=0.5\linewidth]{degrees_legend.pdf}
    \caption{Comparison of the degree distribution of \textsc{GreedyMI} and \textsc{Vulnerable} under \texttt{known-source} seeding on \icu{}}
    \label{fig:degree_dist_addn_fixed_icu}
\end{figure}

\begin{figure}
    \centering
    \begin{subfigure}{0.24\columnwidth}
    \includegraphics[width=0.99\linewidth]{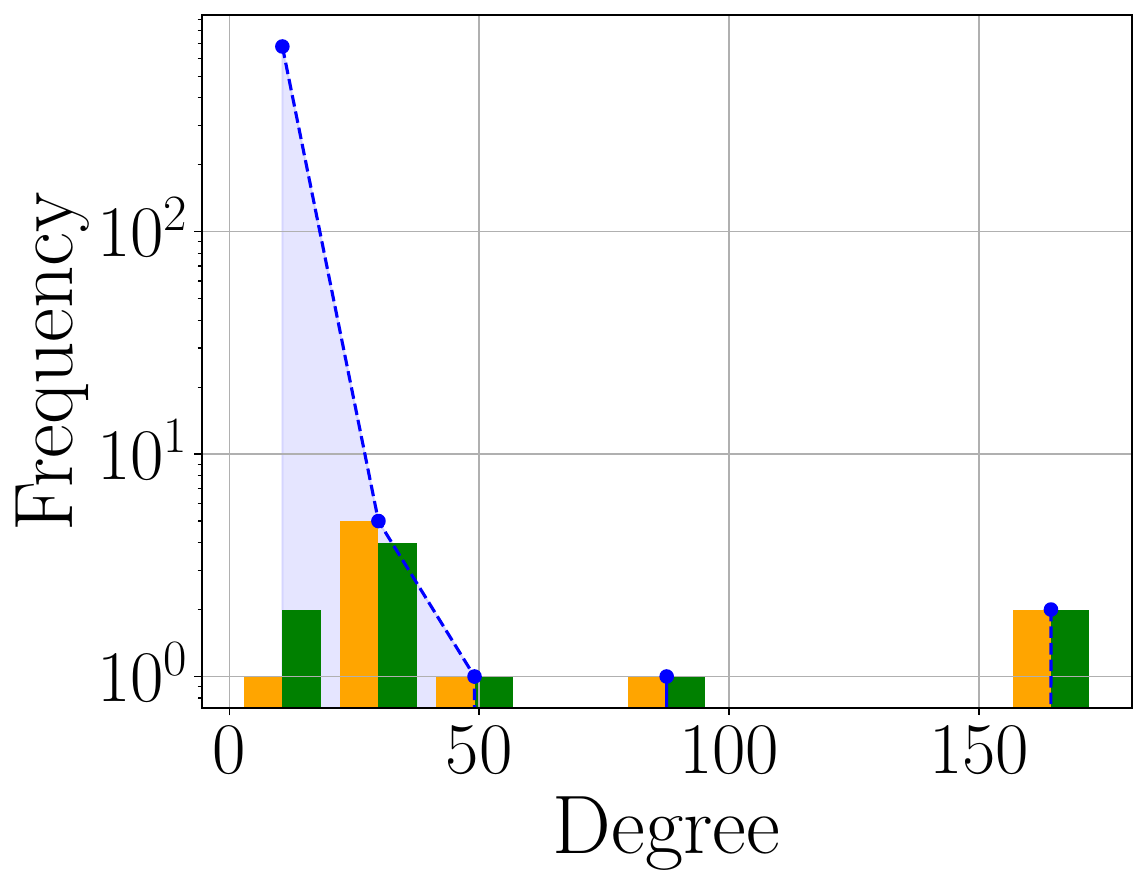}
    \caption{$IC(0.1,2)$}
    \end{subfigure}
     \begin{subfigure}{0.24\columnwidth}
    \includegraphics[width=0.99\linewidth]{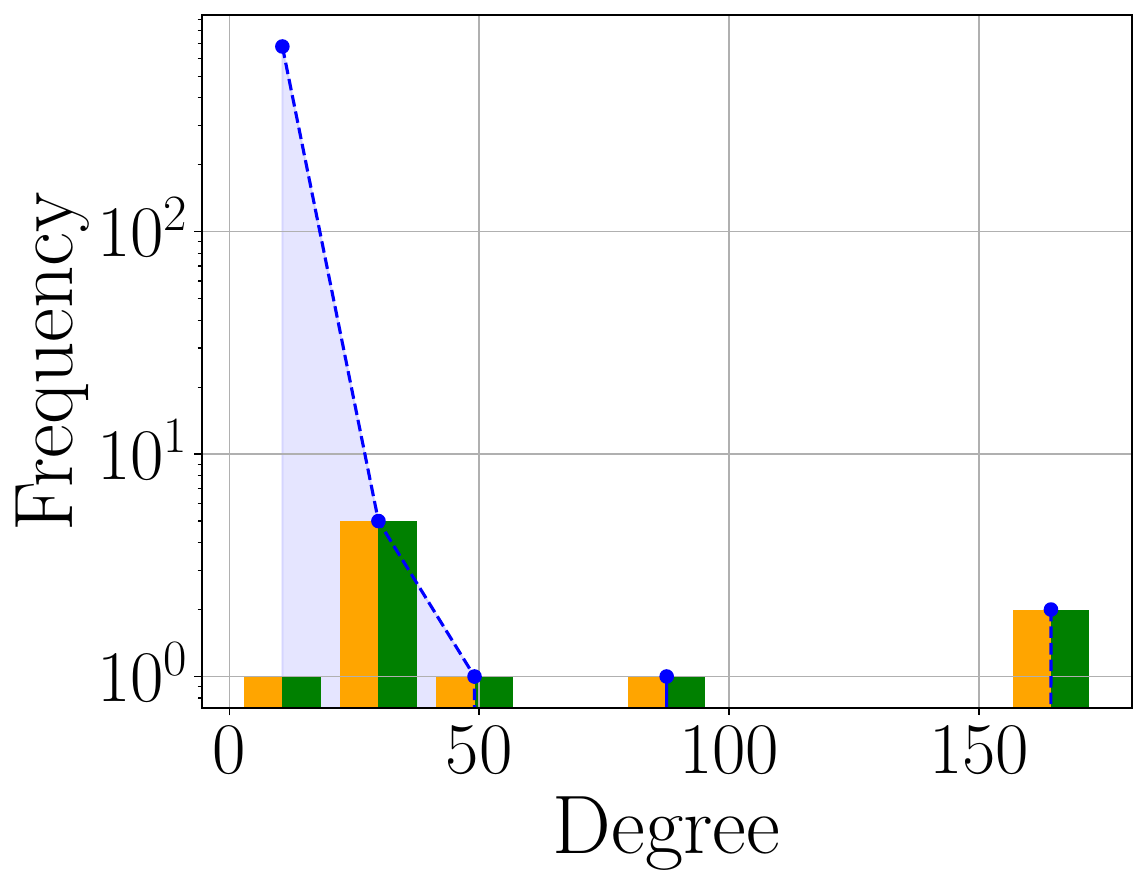}
    \caption{$IC(0.1,4)$}
    \end{subfigure}
     \begin{subfigure}{0.24\columnwidth}
    \includegraphics[width=0.99\linewidth]{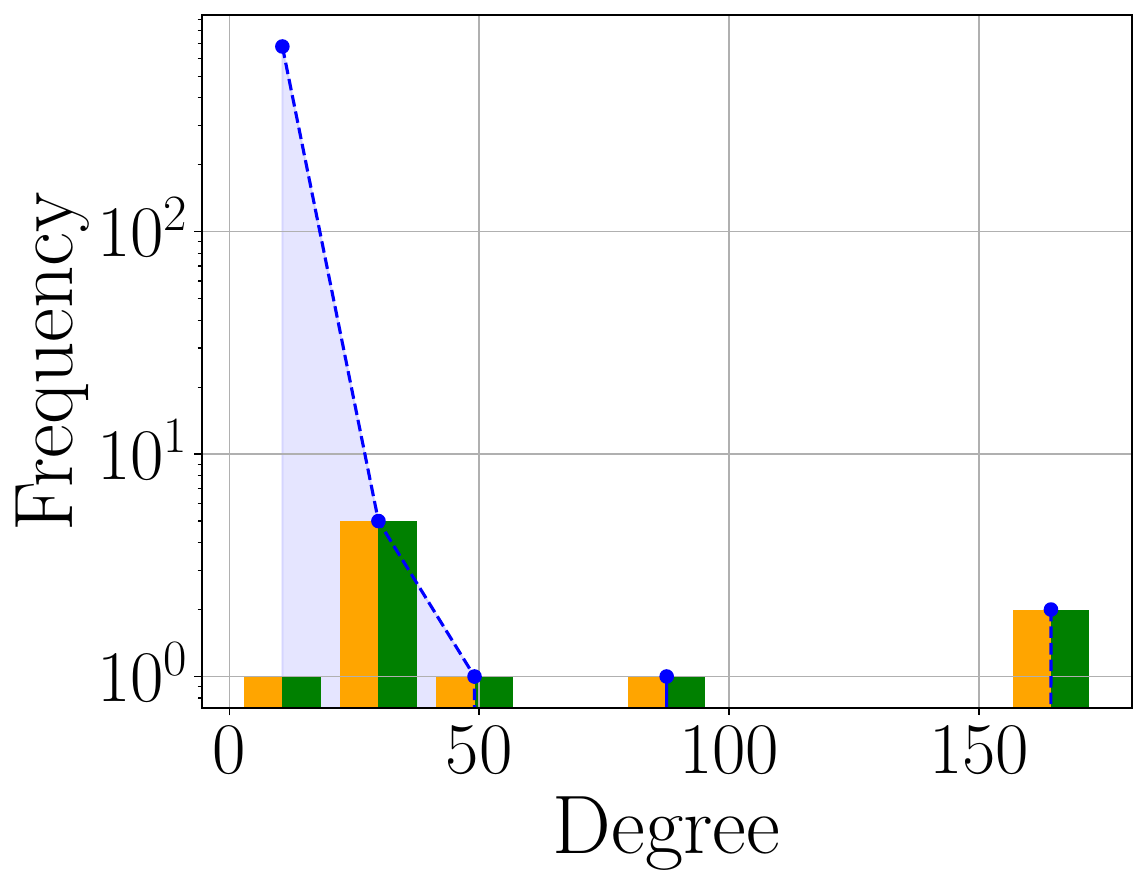}
    \caption{$IC(0.2,2)$}
    \end{subfigure}
     \begin{subfigure}{0.24\columnwidth}
    \includegraphics[width=0.99\linewidth]{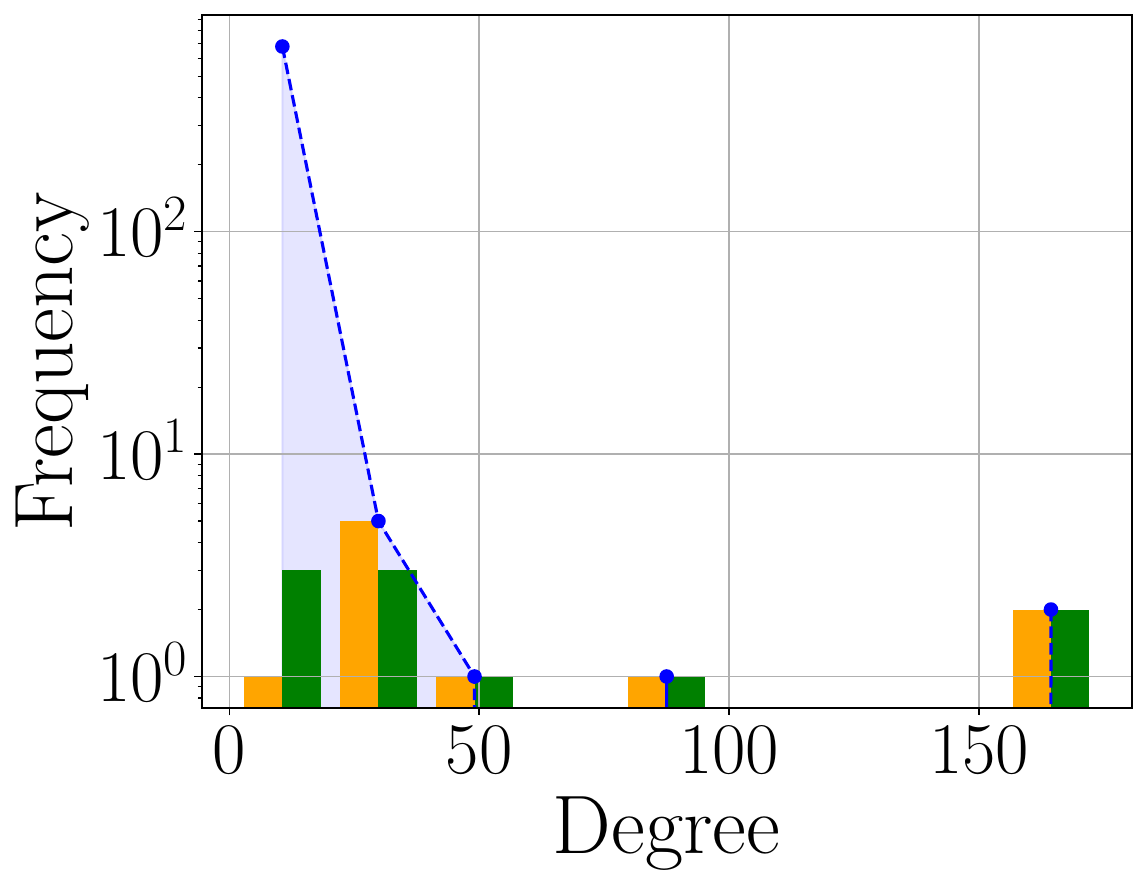}
    \caption{$IC(0.2,4)$}
    \end{subfigure}
    \includegraphics[width=0.5\linewidth]{degrees_legend.pdf}
    \caption{Comparison of the degree distribution of \textsc{GreedyMI} and \textsc{Vulnerable} under \rsource{} seeding on \pl{}}
    \label{fig:degree_dist_addn_random_pl}
\end{figure}

\begin{figure}
    \centering
    \begin{subfigure}{0.28\columnwidth}
    \includegraphics[width=0.99\linewidth]{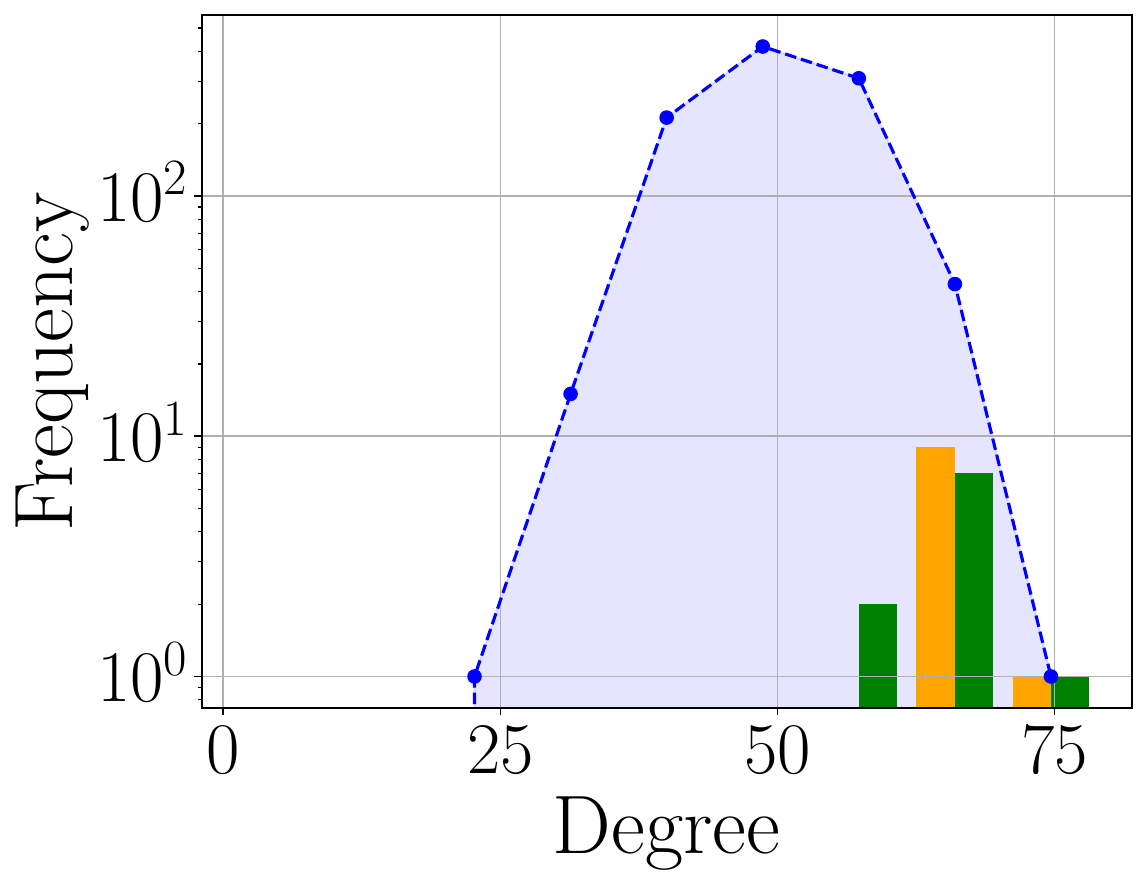}
    \caption{$IC(0.05,2)$}
    \end{subfigure}
     \begin{subfigure}{0.28\columnwidth}
    \includegraphics[width=0.99\linewidth]{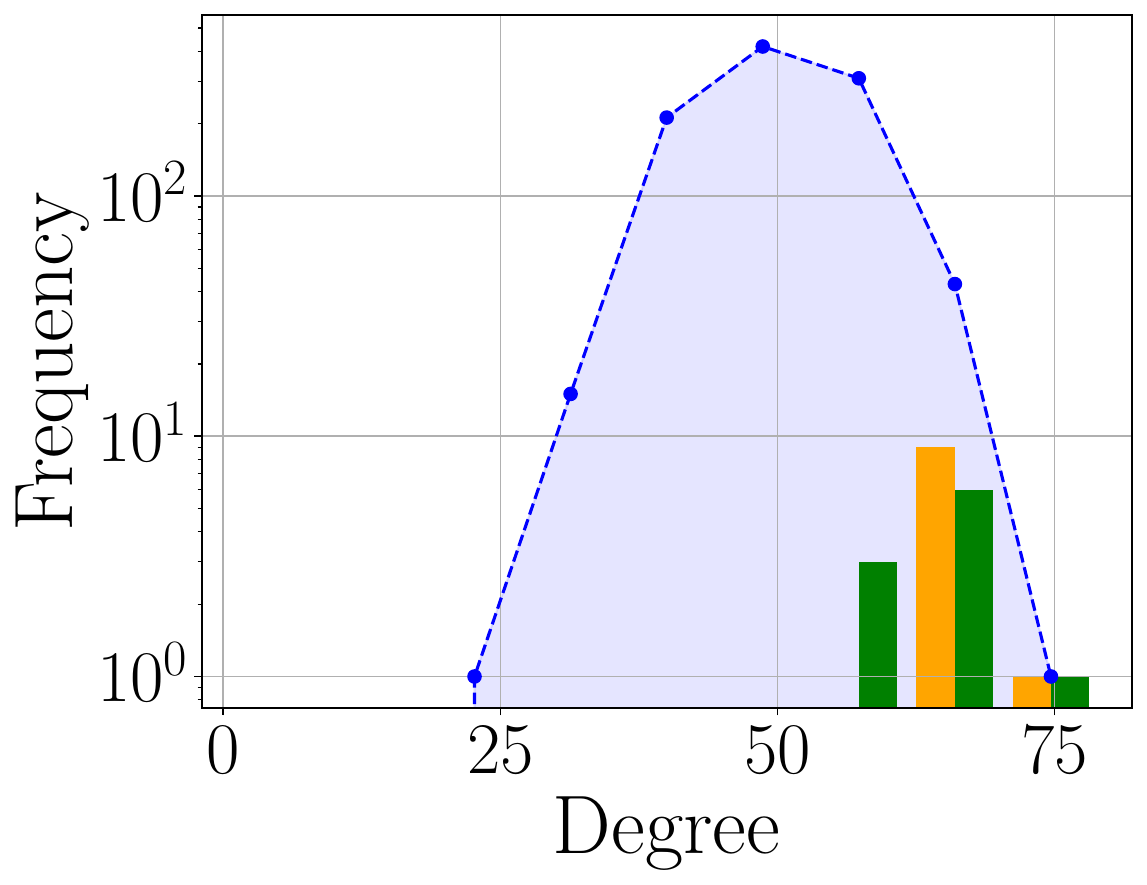}
    \caption{$IC(0.07,2)$}
    \end{subfigure}
    \includegraphics[width=0.5\linewidth]{degrees_legend.pdf}
    \caption{Comparison of the degree distribution of \textsc{GreedyMI} and \textsc{Vulnerable} in \rsource{} seeding on \er{}}
    \label{fig:degree_dist_addn_random_er}
\end{figure}

\begin{figure}
    \centering
    \begin{subfigure}{0.24\columnwidth}
    \includegraphics[width=0.99\linewidth]{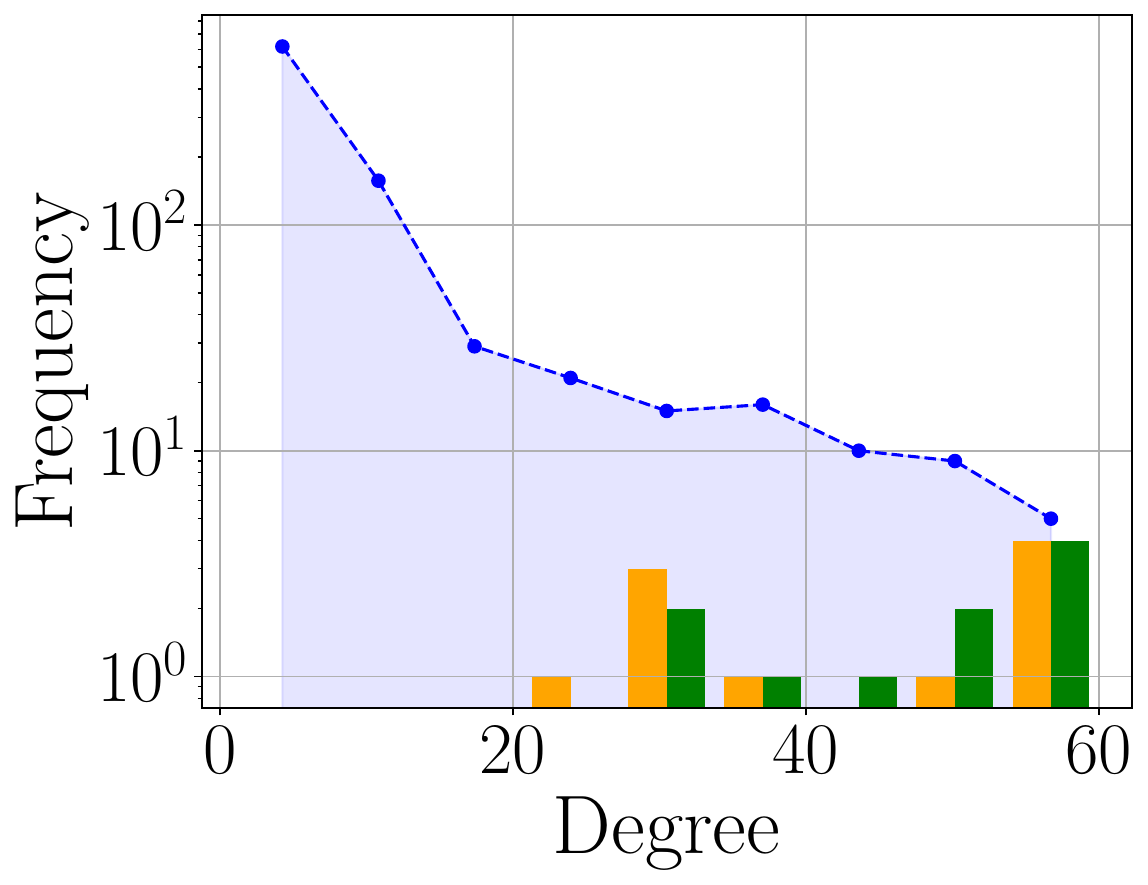}
    \caption{$IC(0.1,2)$}
    \end{subfigure}
     \begin{subfigure}{0.24\columnwidth}
    \includegraphics[width=0.99\linewidth]{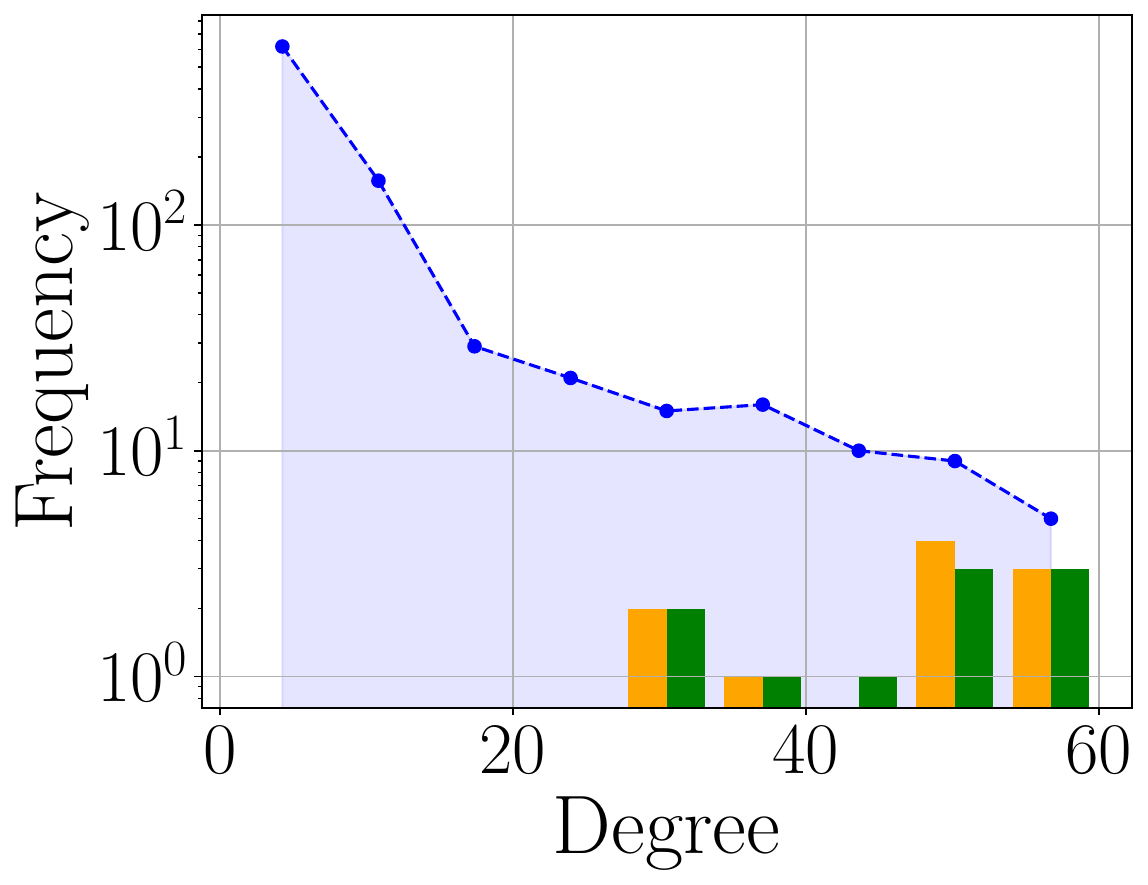}
    \caption{$IC(0.1,4)$}
    \end{subfigure}
    \begin{subfigure}{0.24\columnwidth}
    \includegraphics[width=0.99\linewidth]{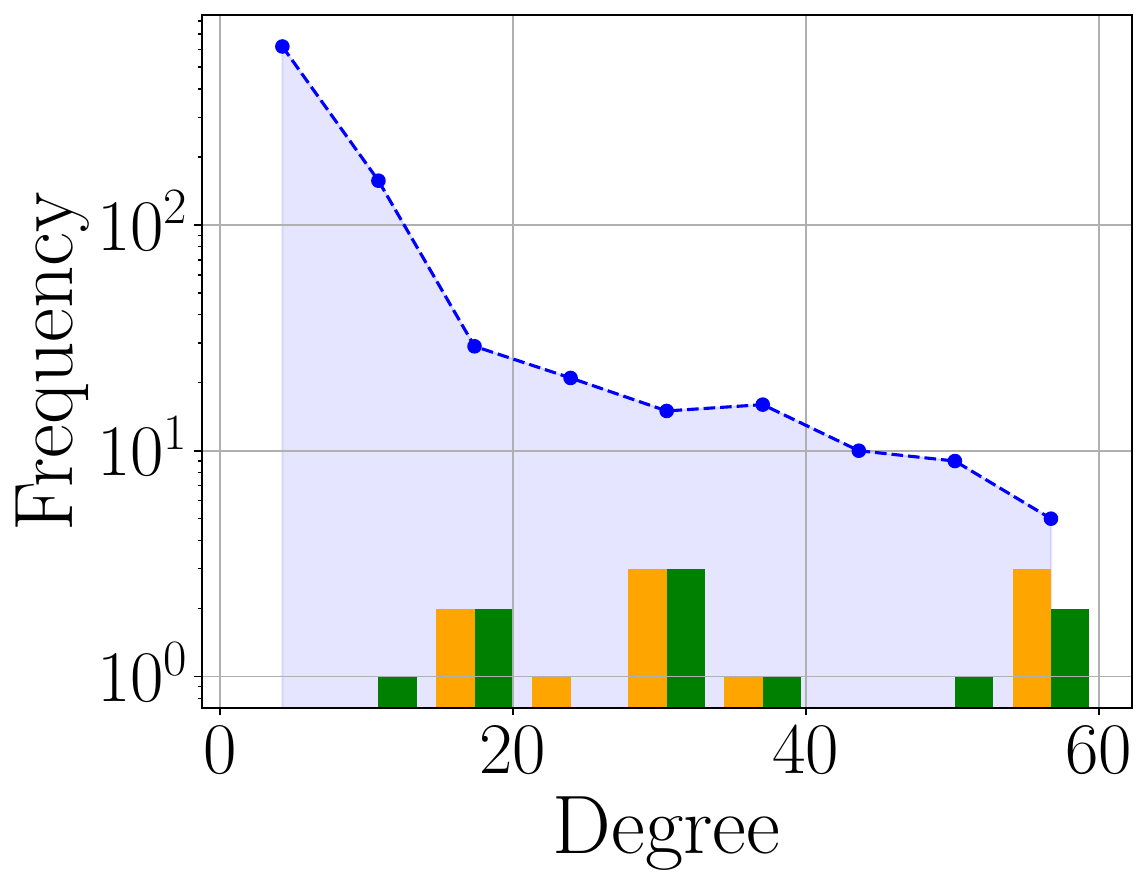}
    \caption{$IC(0.2,2)$}
    \end{subfigure}
     \begin{subfigure}{0.24\columnwidth}
    \includegraphics[width=0.99\linewidth]{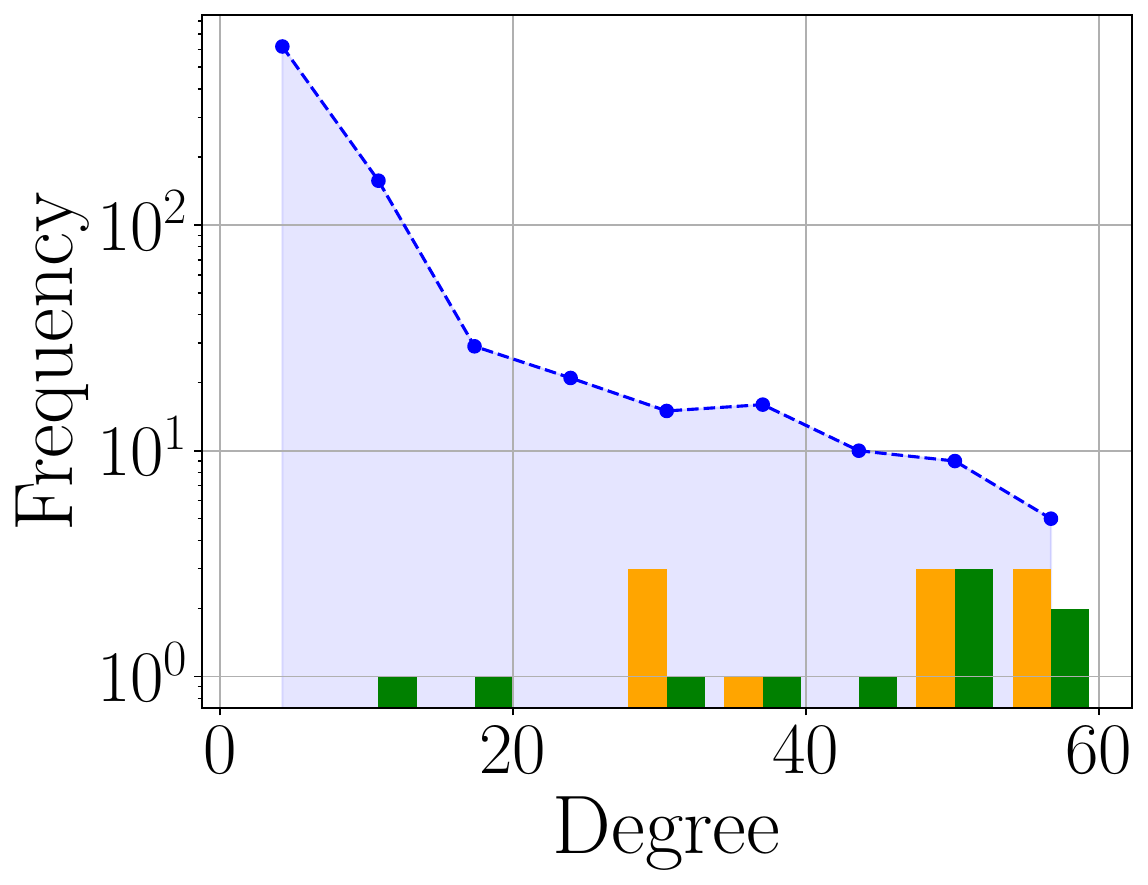}
    \caption{$IC(0.1,4)$}
    \end{subfigure}
    \includegraphics[width=0.5\linewidth]{degrees_legend.pdf}
    \caption{Comparison of the degree distribution of \textsc{GreedyMI} and \textsc{Vulnerable} under \rsource{} seeding on \icu{}}
    \label{fig:degree_dist_addn_random_icu}
\end{figure}

\begin{figure}
    \centering
    \begin{subfigure}{0.24\columnwidth}
    \includegraphics[width=0.99\linewidth]{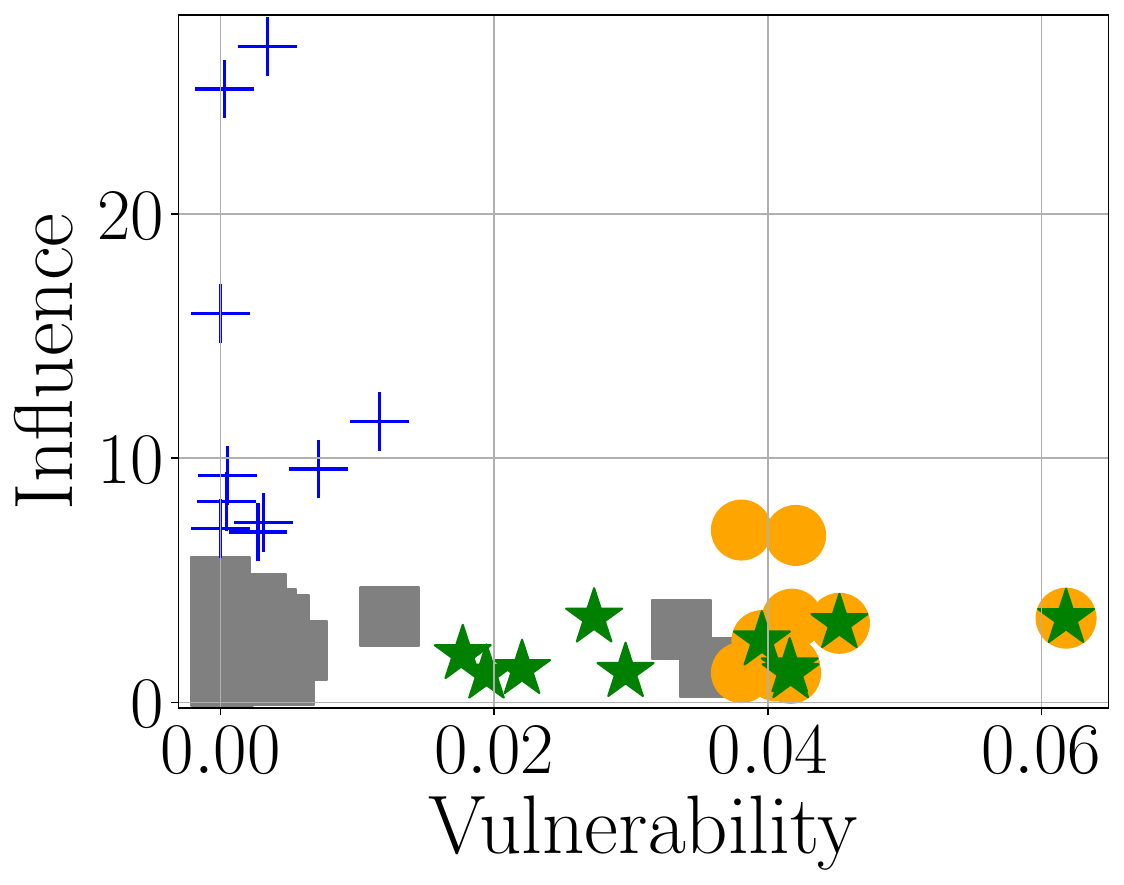}
         \caption{$IC(0.1,2)$}
    \end{subfigure}
   \begin{subfigure}{0.24\columnwidth}
    \includegraphics[width=0.99\linewidth]{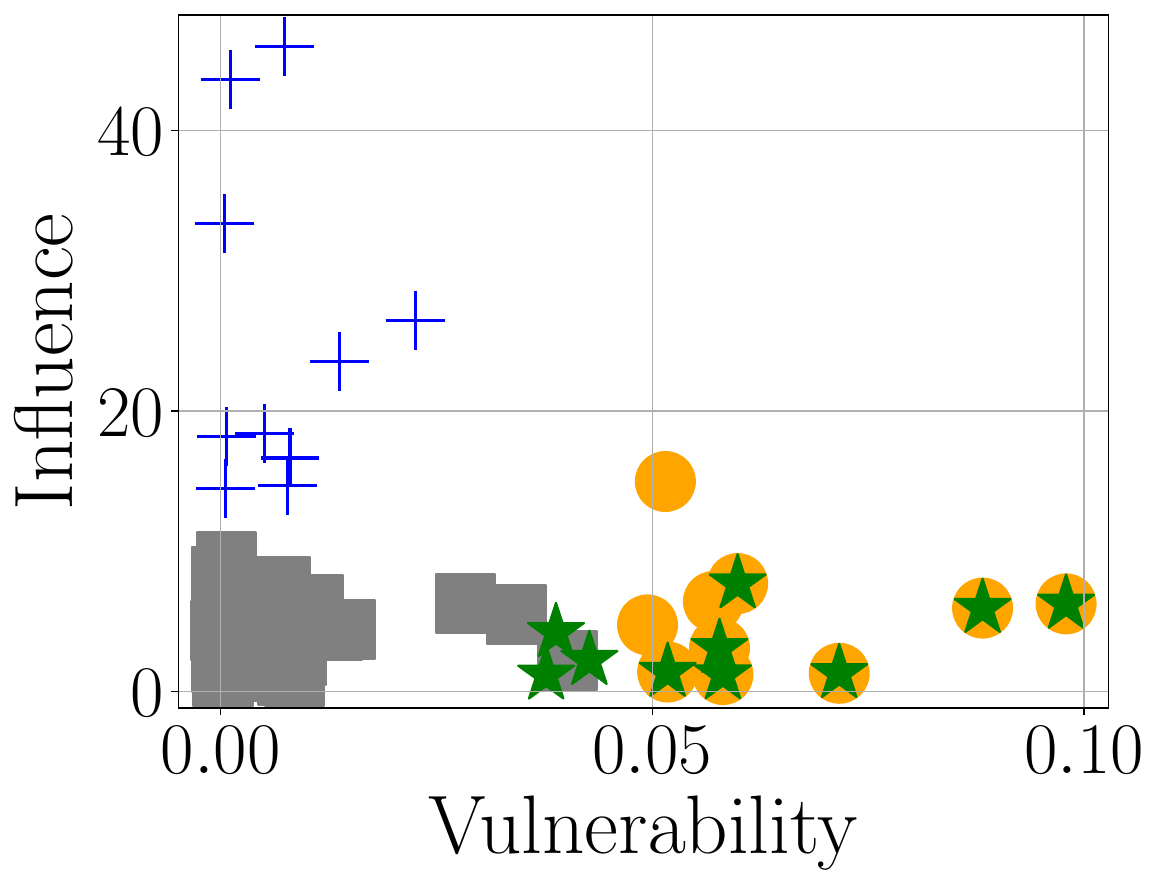}
    \caption{$IC(0.1,4)$}
    \end{subfigure}
    \begin{subfigure}{0.24\columnwidth}
    \includegraphics[width=0.99\linewidth]{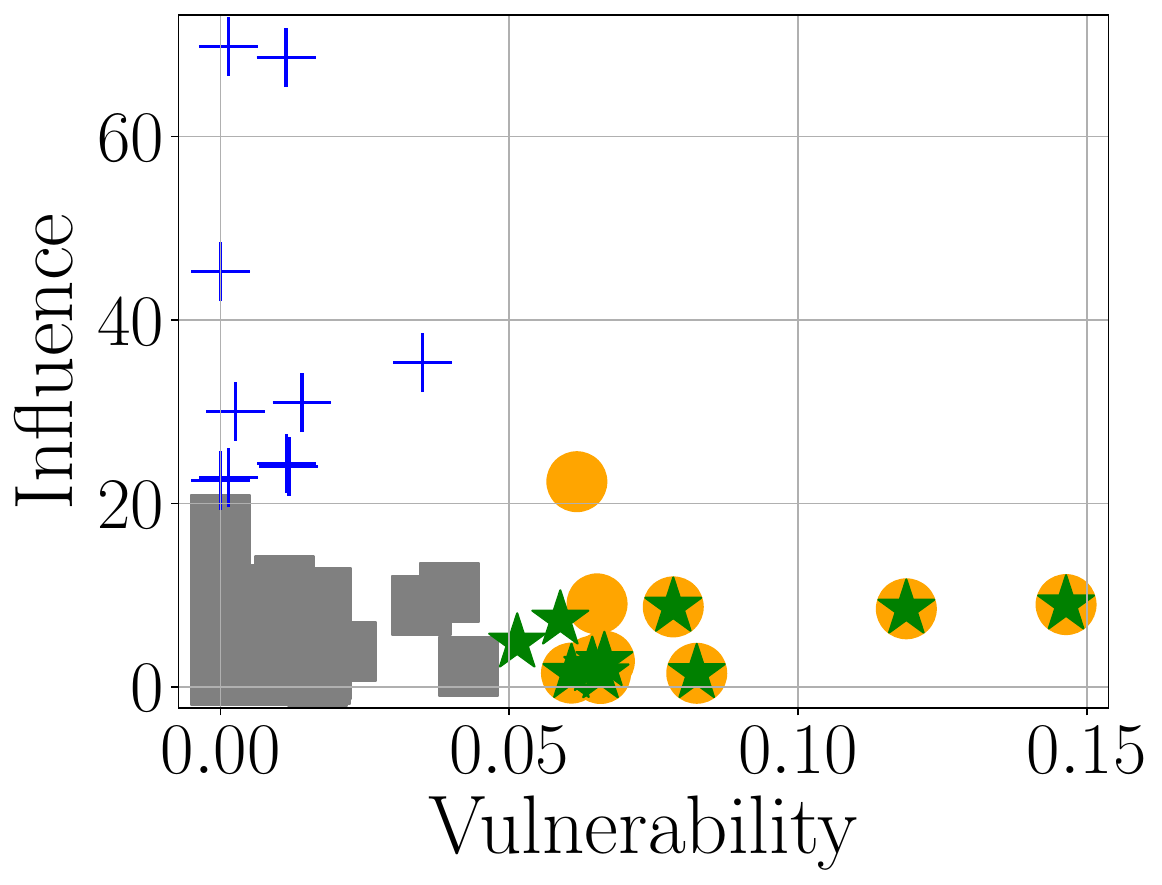}
    \caption{$IC(0.2,2)$}
    \end{subfigure}
    \begin{subfigure}{0.24\columnwidth}
    \includegraphics[width=0.99\linewidth]{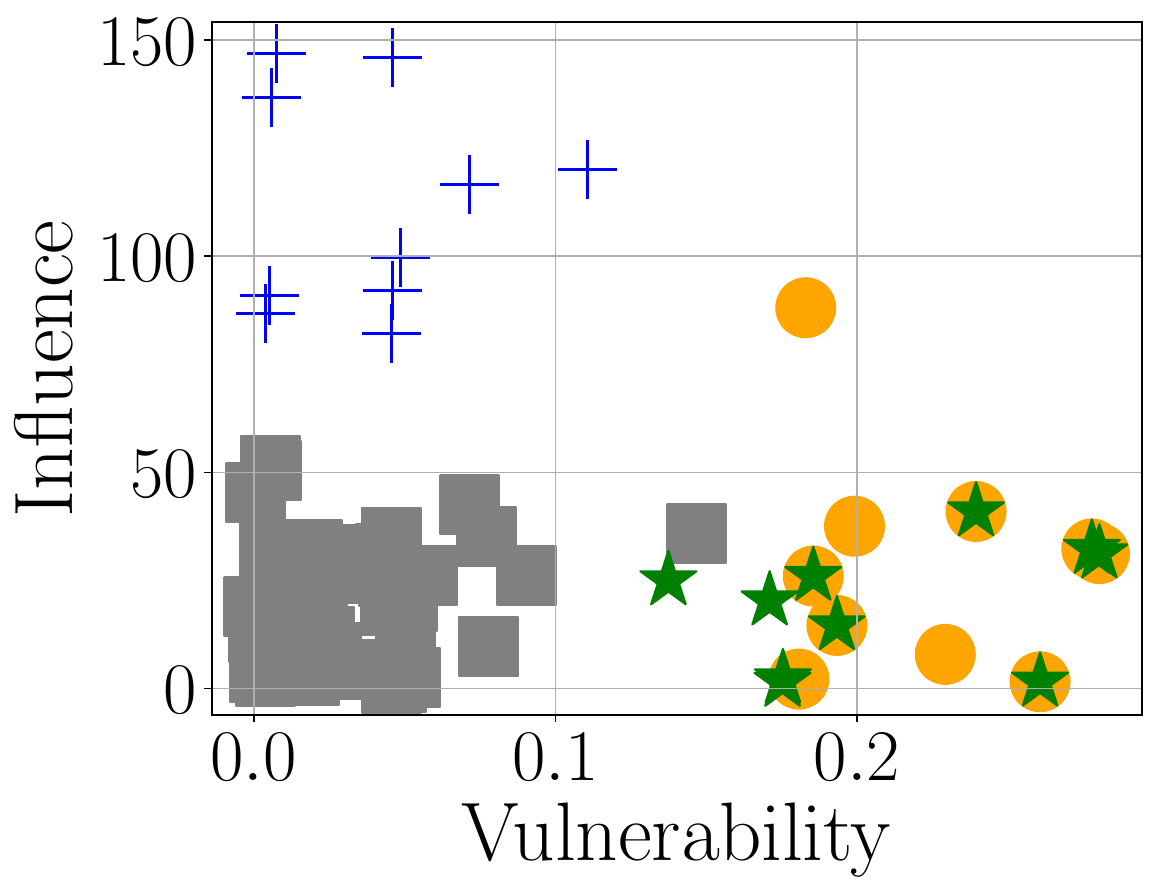}
     \caption{$IC(0.2,4)$}
     \end{subfigure}
     \includegraphics[width=0.5\linewidth]{influence_legend.pdf}
    \caption{Vulnerability vs Influence of the surveillance sets by \tool{} vs baselines  under \ksource{}~seeding on \pl{}}
    \label{fig:infl_add_fixed_pl}
\end{figure}

\begin{figure}
    \centering
    \begin{subfigure}{0.28\columnwidth}
    \includegraphics[width=0.99\linewidth]{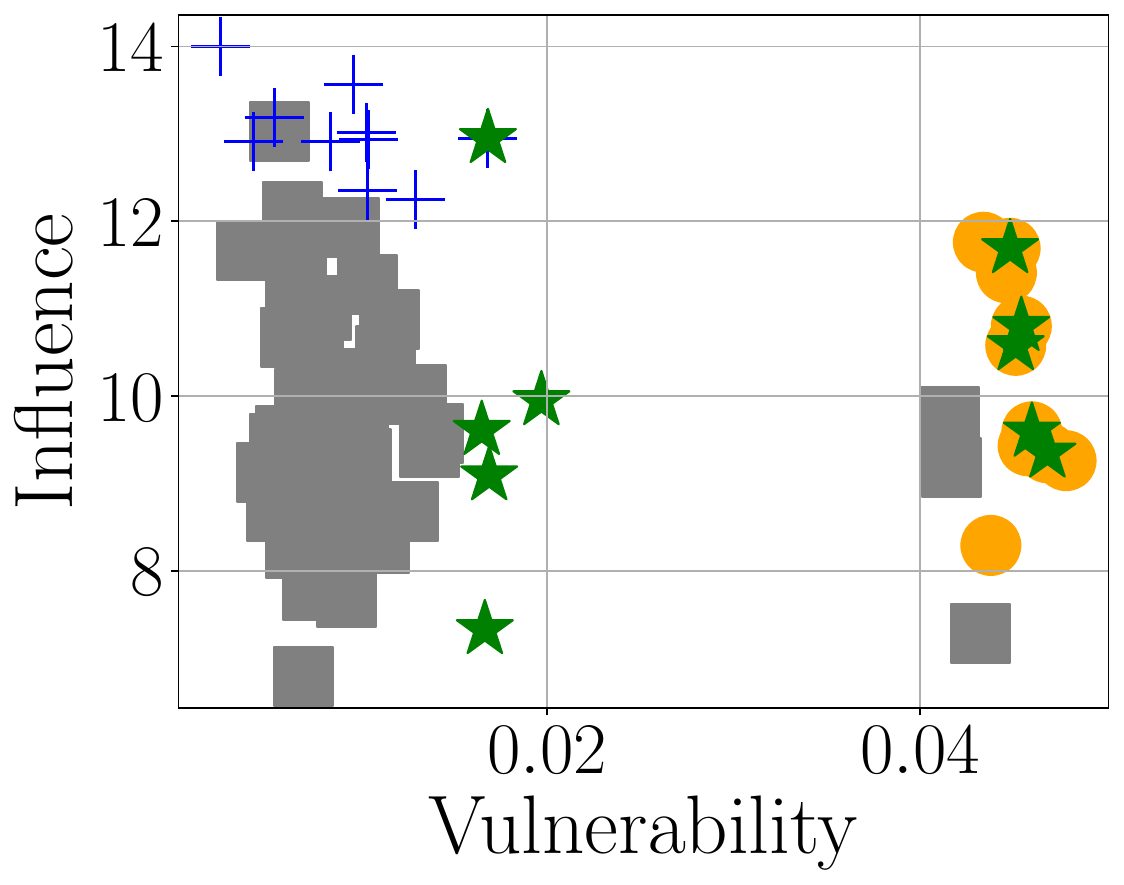}
    \caption{$IC(0.05,2)$}
    \end{subfigure}
    \begin{subfigure}{0.28\columnwidth}
     \includegraphics[width=0.99\linewidth]{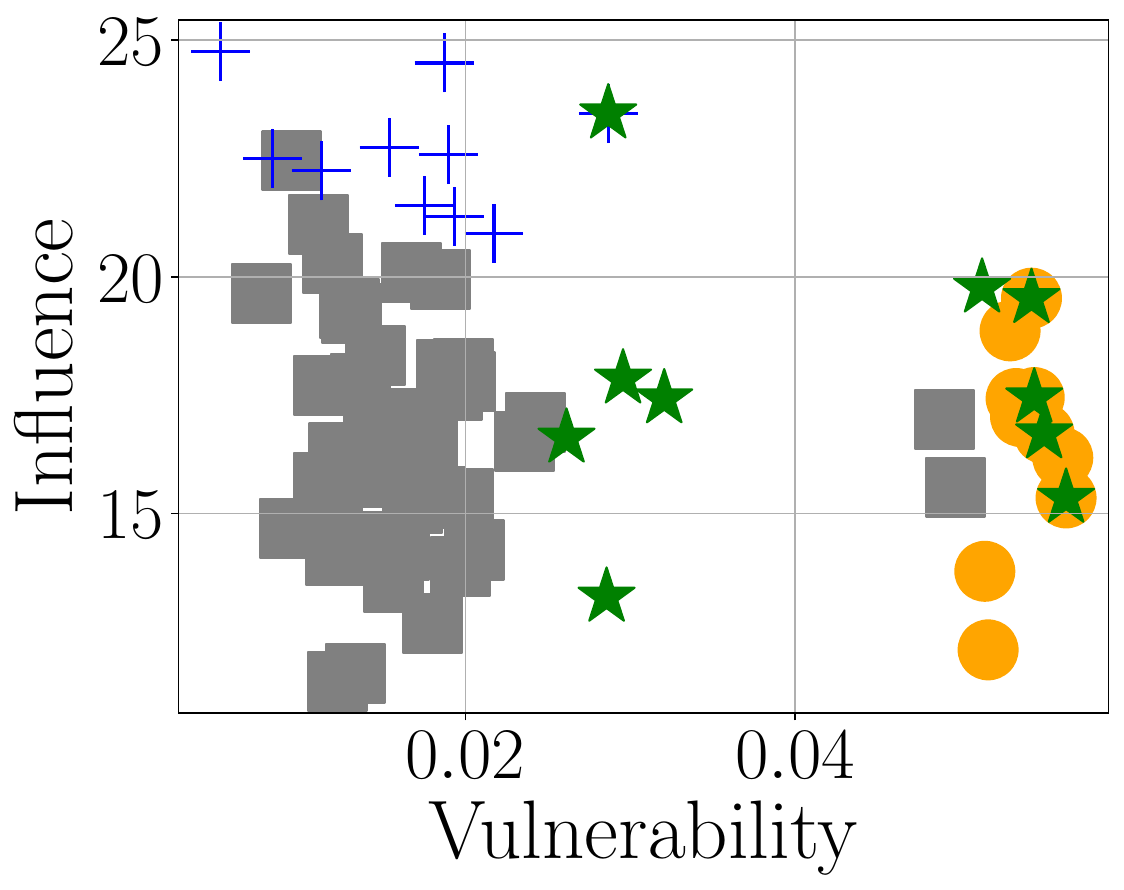}
     \caption{$IC(0.07,2)$}
     \end{subfigure}
     \includegraphics[width=0.5\linewidth]{influence_legend.pdf}
    \caption{Vulnerability vs Influence of the surveillance sets by \tool{} vs baselines  under \ksource{}~seeding on \er{}}
    \label{fig:infl_add_fixed_er}
\end{figure}

\begin{figure}
    \centering
    \begin{subfigure}{0.24\columnwidth}
    \includegraphics[width=0.99\linewidth]{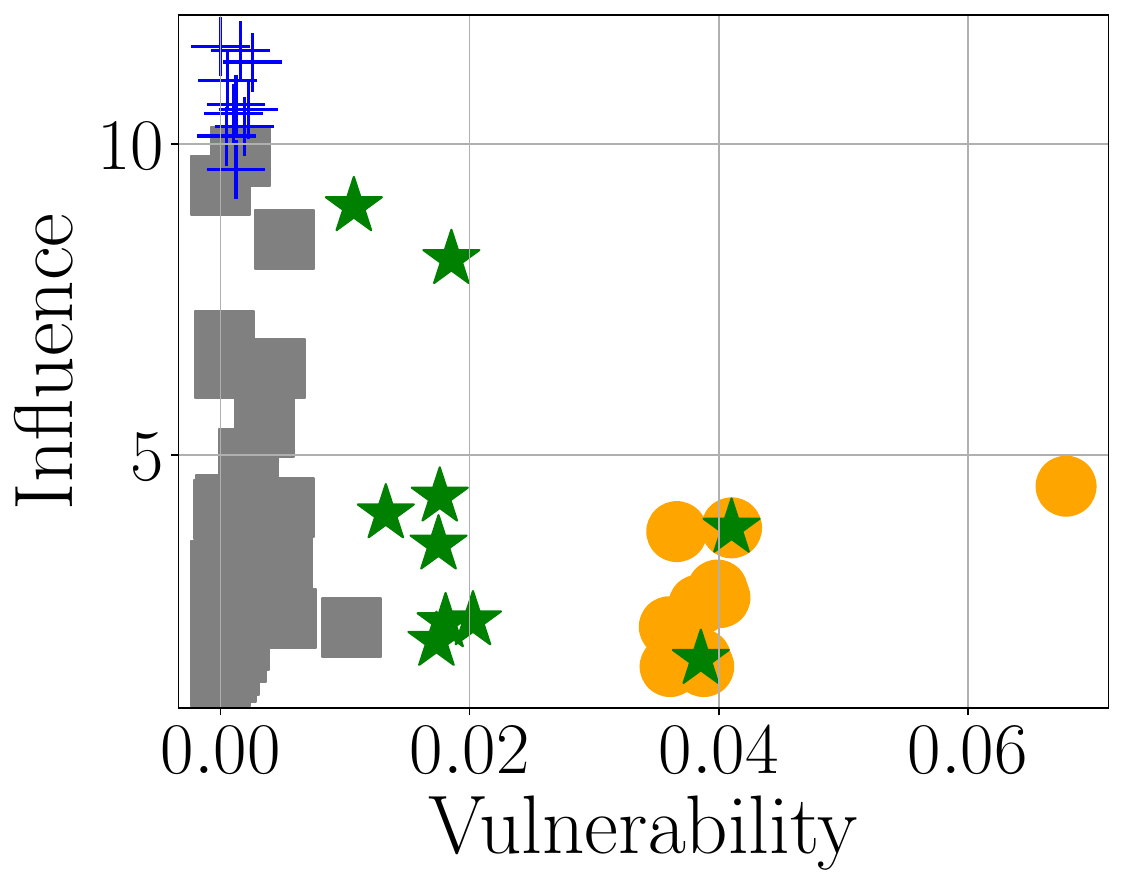}
         \caption{$IC(0.1,2)$}
    \end{subfigure}
   \begin{subfigure}{0.24\columnwidth}
    \includegraphics[width=0.99\linewidth]{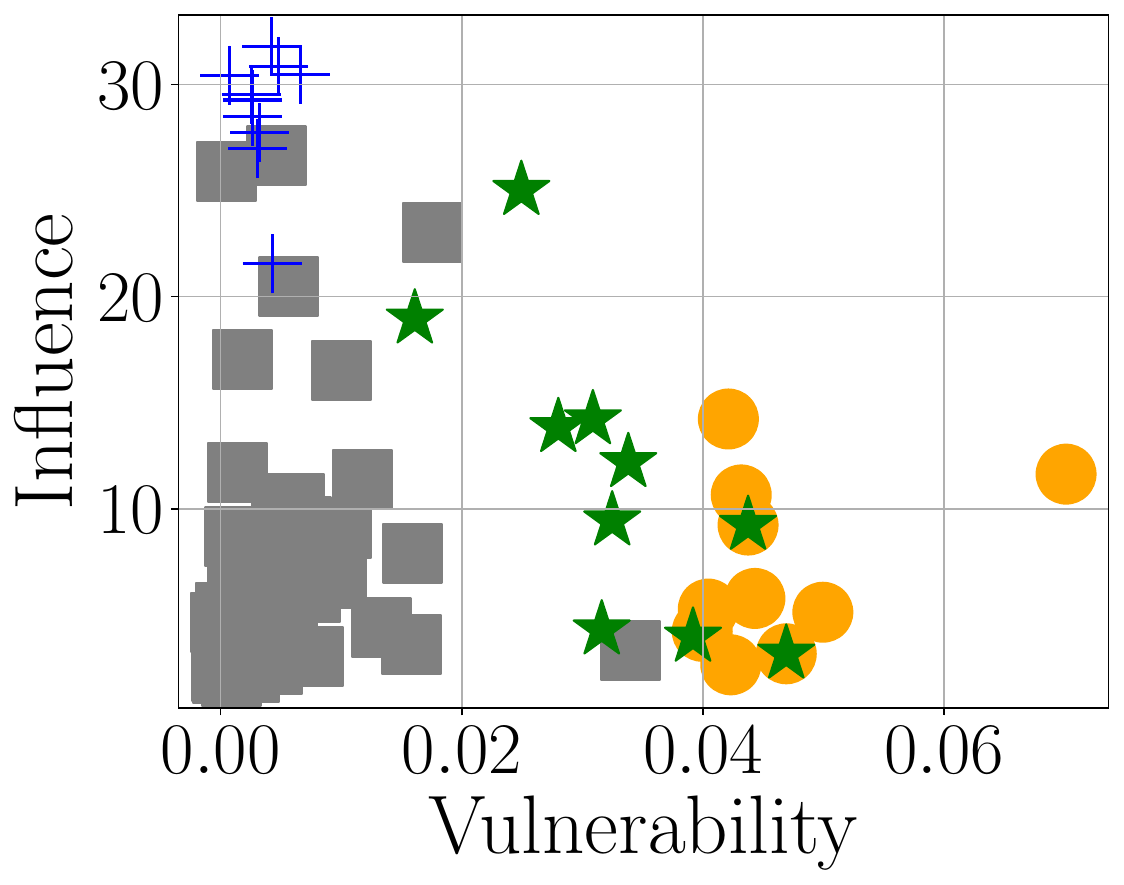}
    \caption{$IC(0.1,4)$}
    \end{subfigure}
    \begin{subfigure}{0.24\columnwidth}
    \includegraphics[width=0.99\linewidth]{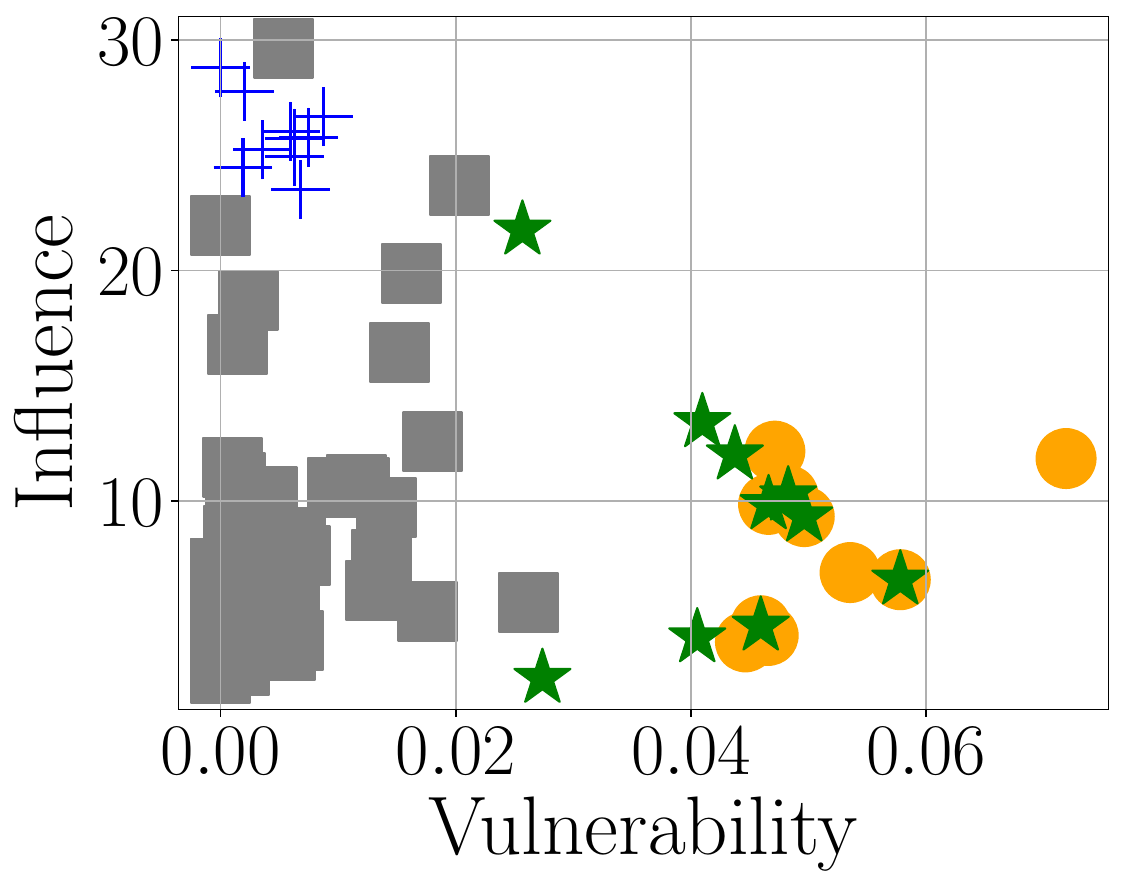}
    \caption{$IC(0.2,2)$}
    \end{subfigure}
    \begin{subfigure}{0.24\columnwidth}
    \includegraphics[width=0.99\linewidth]{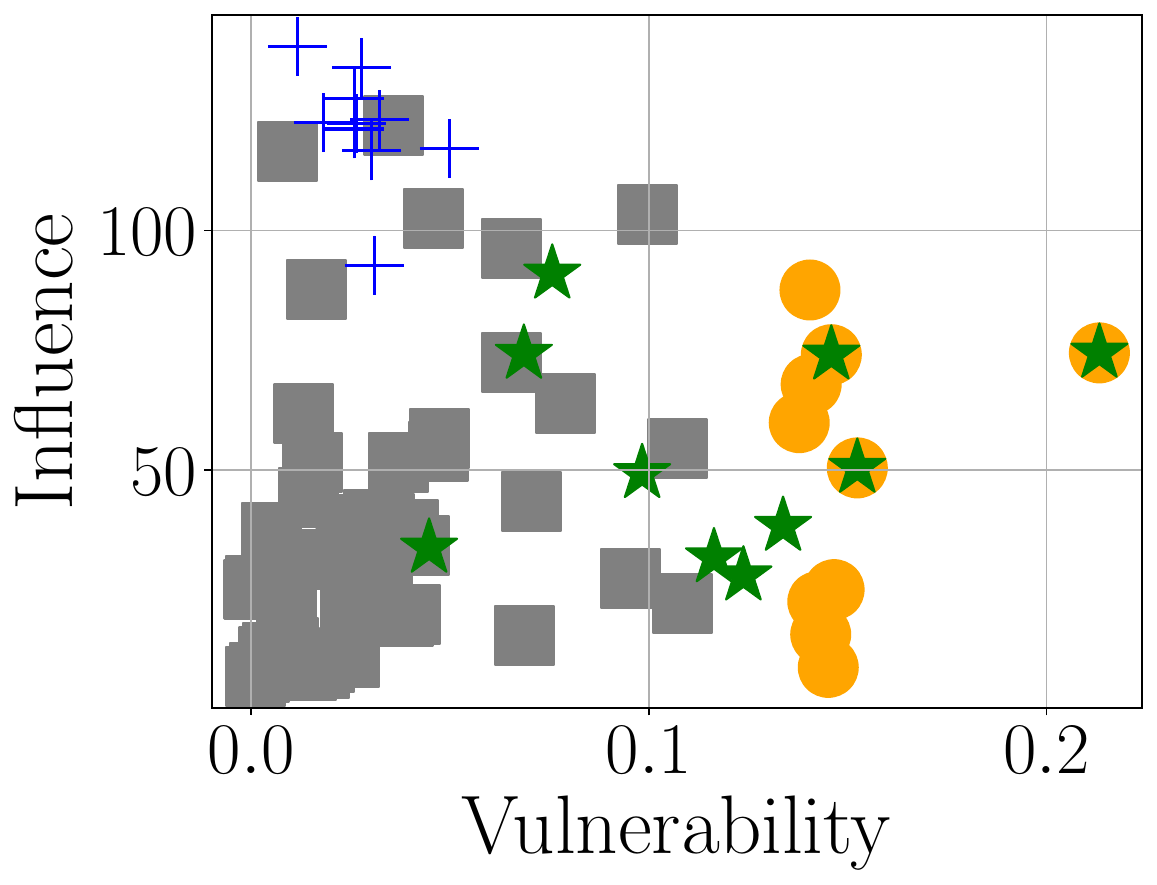}
     \caption{$IC(0.2,4)$}
     \end{subfigure}
     \includegraphics[width=0.5\linewidth]{influence_legend.pdf}
    \caption{Vulnerability vs Influence of the surveillance sets by \tool{} vs baselines  under \ksource{}~seeding on \icu{}}
    \label{fig:infl_add_fixed_icu}
\end{figure}

\begin{figure}
    \centering
    \begin{subfigure}{0.24\columnwidth}
    \includegraphics[width=0.99\linewidth]{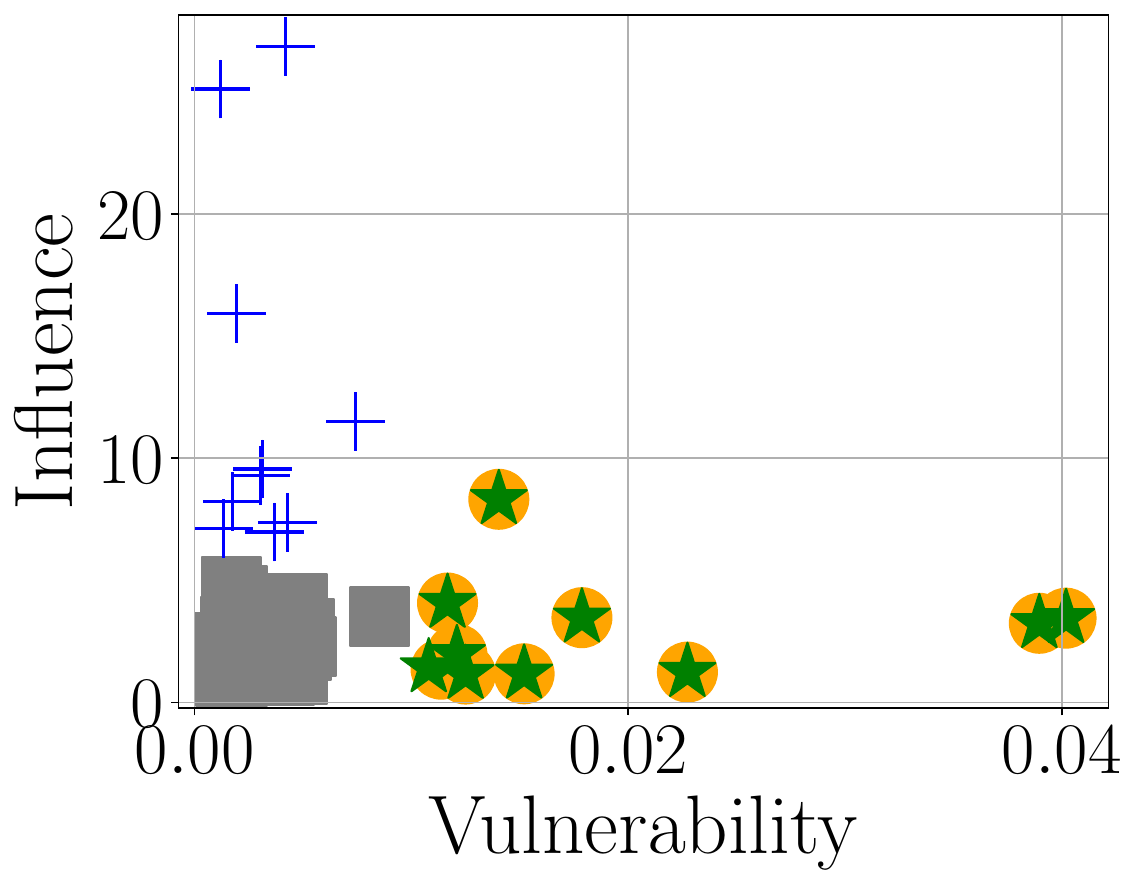}
    \caption{$IC(0.1,2)$}
    \end{subfigure}
    \begin{subfigure}{0.24\columnwidth}
    \includegraphics[width=0.99\linewidth]{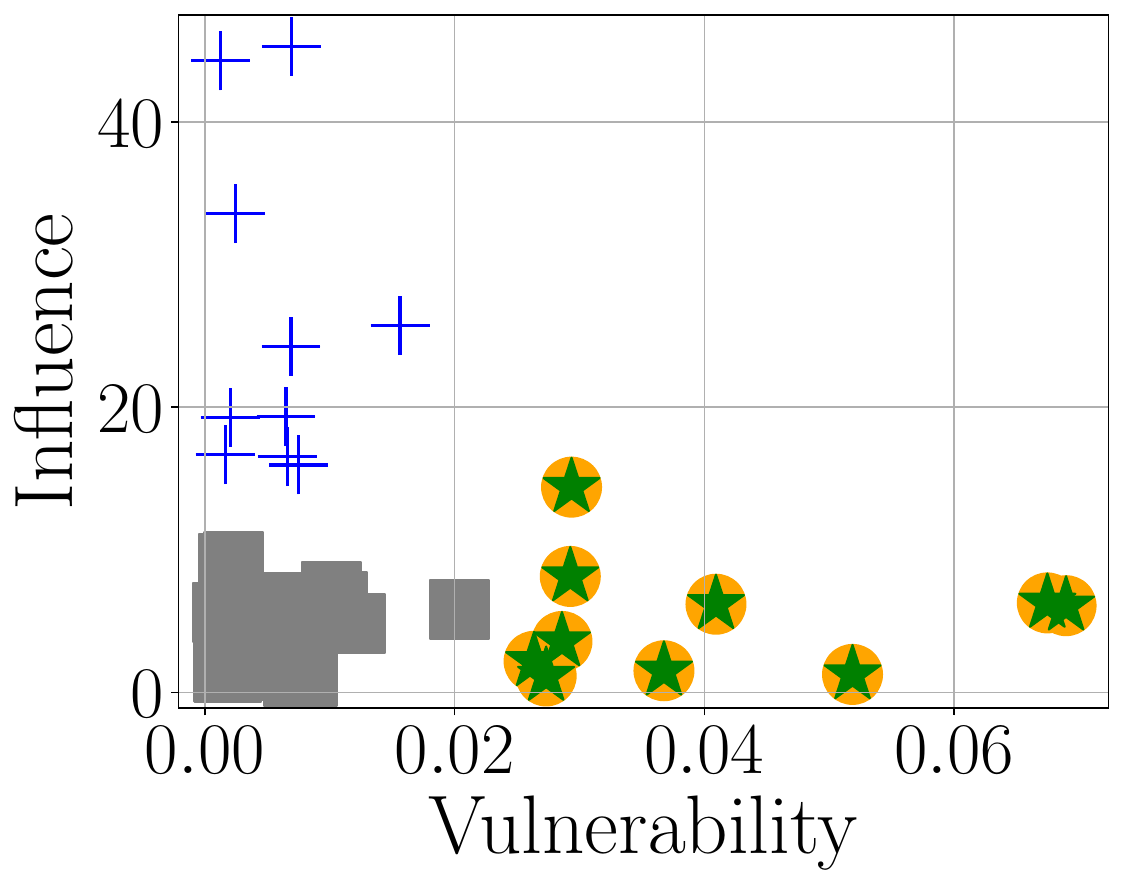}
        \caption{$IC(0.1,4)$}
    \end{subfigure}
    \begin{subfigure}{0.24\columnwidth}
    \includegraphics[width=0.99\linewidth]{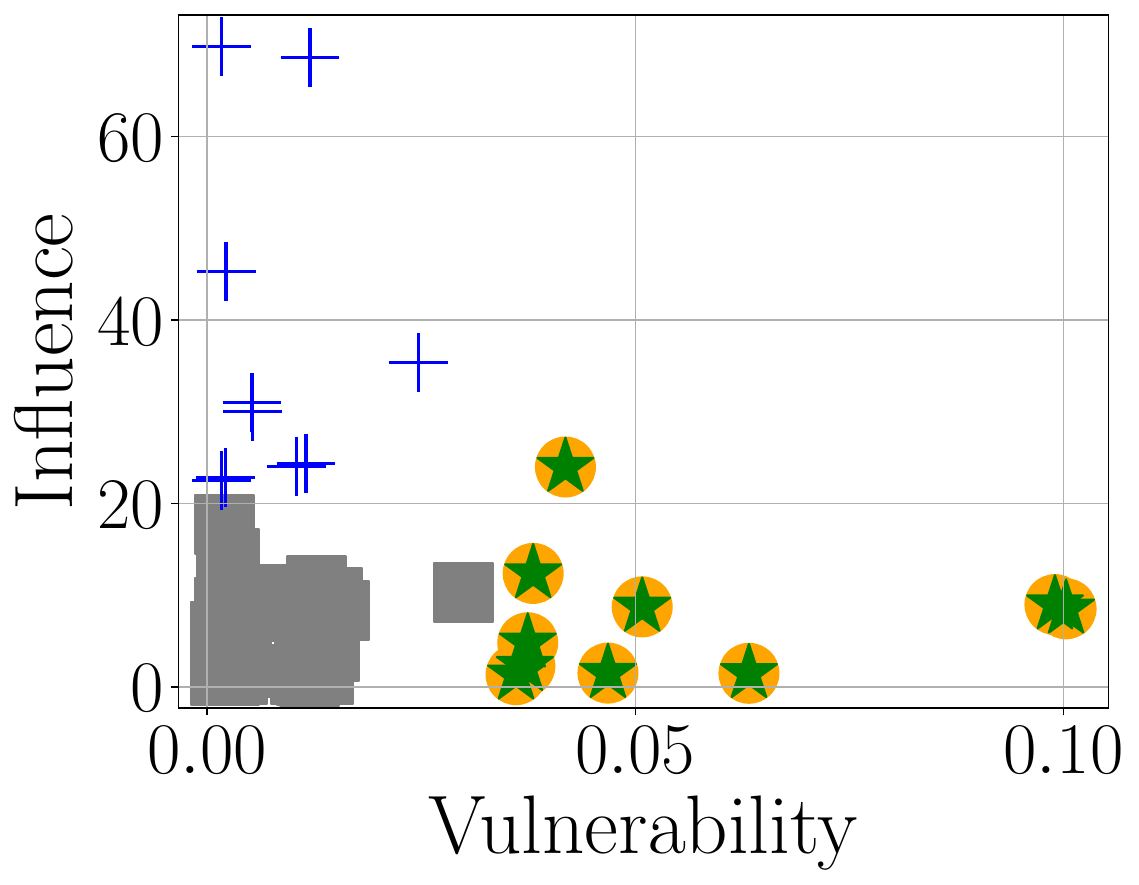}
    \caption{$IC(0.2,2)$}
    \end{subfigure}
    \begin{subfigure}{0.24\columnwidth}
    \includegraphics[width=0.99\linewidth]{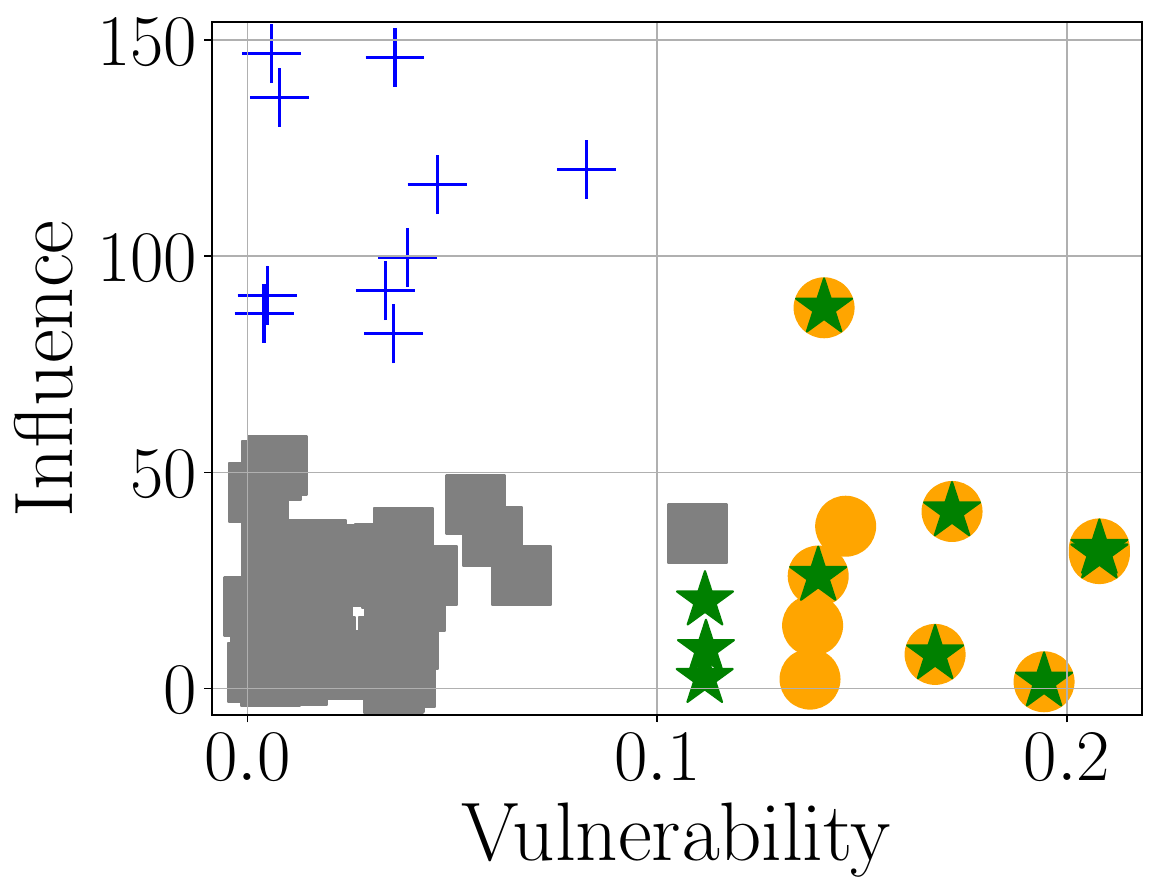}
    \caption{$IC(0.2,4)$}
    \end{subfigure}
     \includegraphics[width=0.5\linewidth]{influence_legend.pdf}
    \caption{Vulnerability vs Influence of the surveillance sets by \tool{} vs baselines  under \rsource{}~seeding on \pl{}}
    \label{fig:infl_addn_random_pl}
\end{figure}

\begin{figure}
    \centering
    \begin{subfigure}{0.28\columnwidth}
    \includegraphics[width=0.99\linewidth]{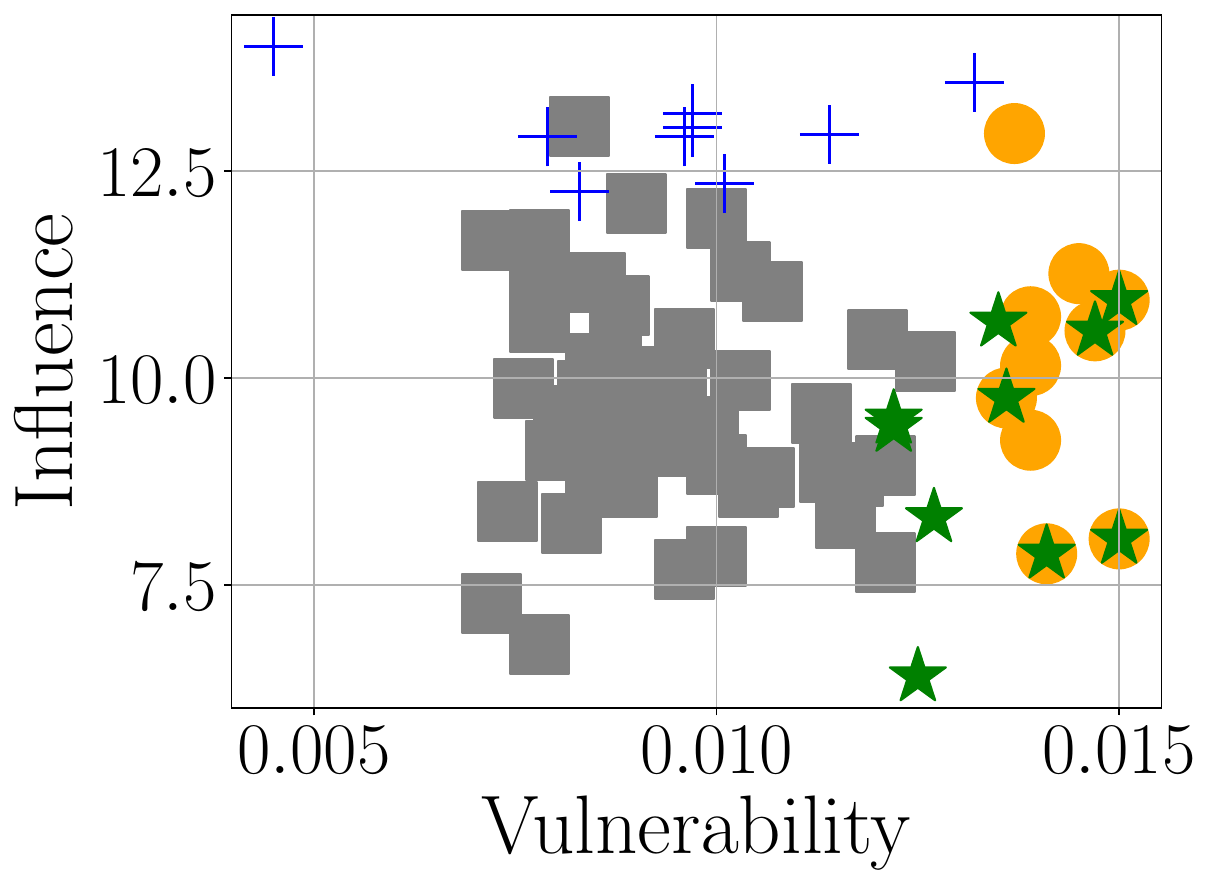}
    \caption{$IC(0.05,2)$}
    \end{subfigure}
    \begin{subfigure}{0.28\columnwidth}
     \includegraphics[width=0.99\linewidth]{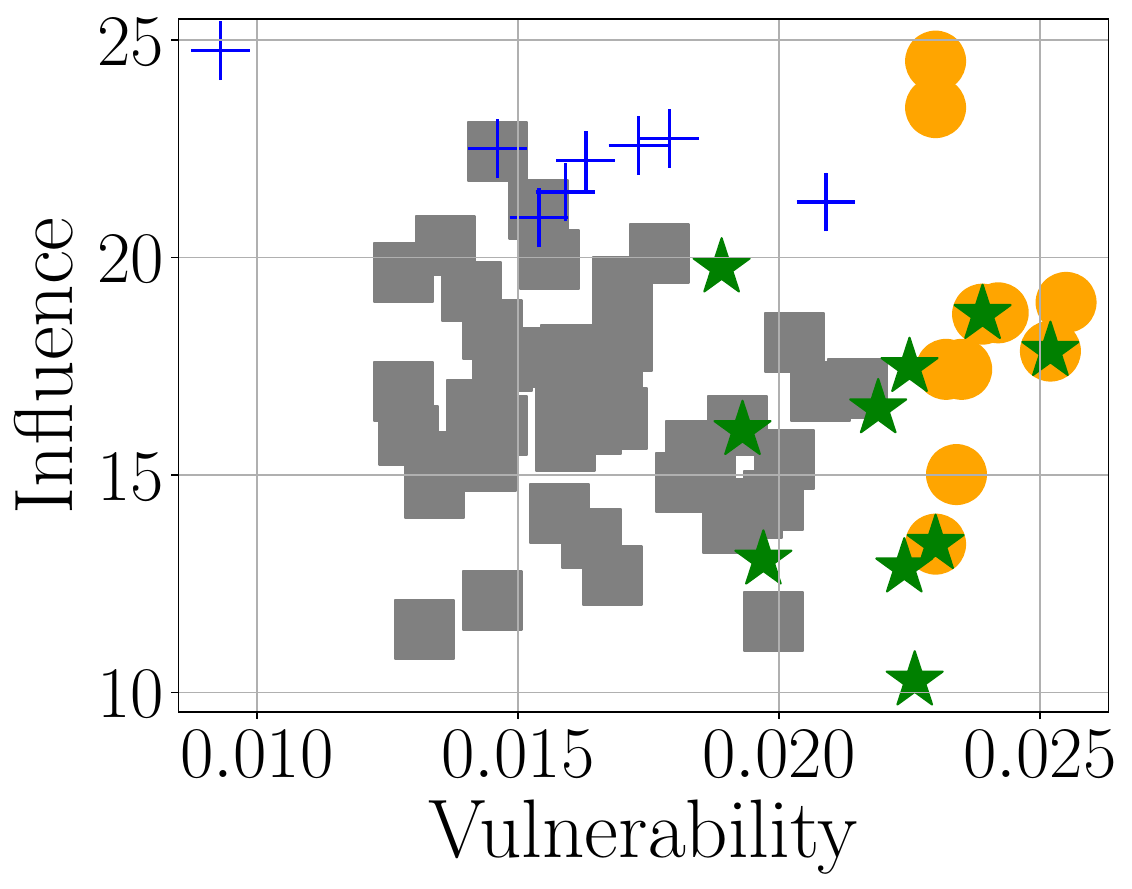}
     \caption{$IC(0.07,2)$}
     \end{subfigure}
     \includegraphics[width=0.5\linewidth]{influence_legend.pdf}
    \caption{Vulnerability vs Influence of the surveillance sets by \tool{} vs baselines  under \rsource{}~seeding on \er{}}
    \label{fig:infl_add_random_er}
\end{figure}

\begin{figure}
    \centering
    \begin{subfigure}{0.24\columnwidth}
    \includegraphics[width=0.99\linewidth]{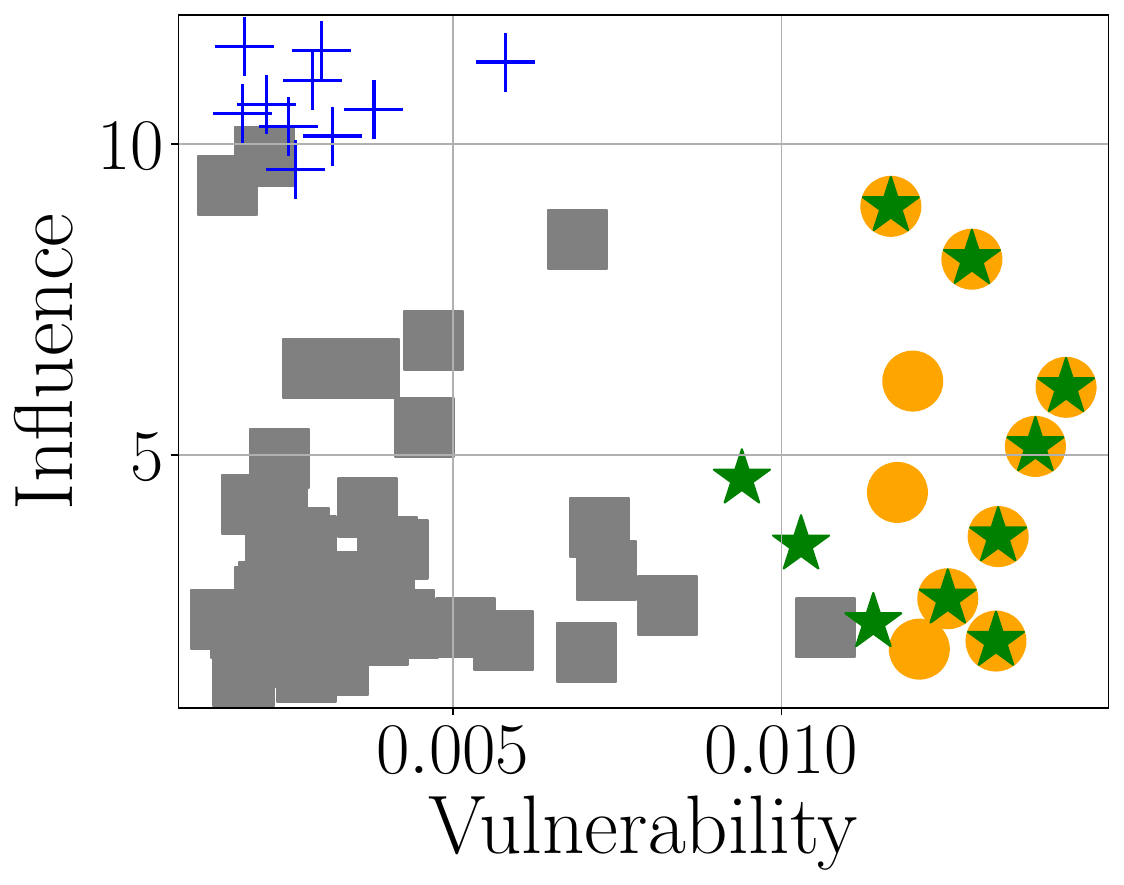}
    \caption{$IC(0.1,2)$}
    \end{subfigure}
    \begin{subfigure}{0.24\columnwidth}
    \includegraphics[width=0.99\linewidth]{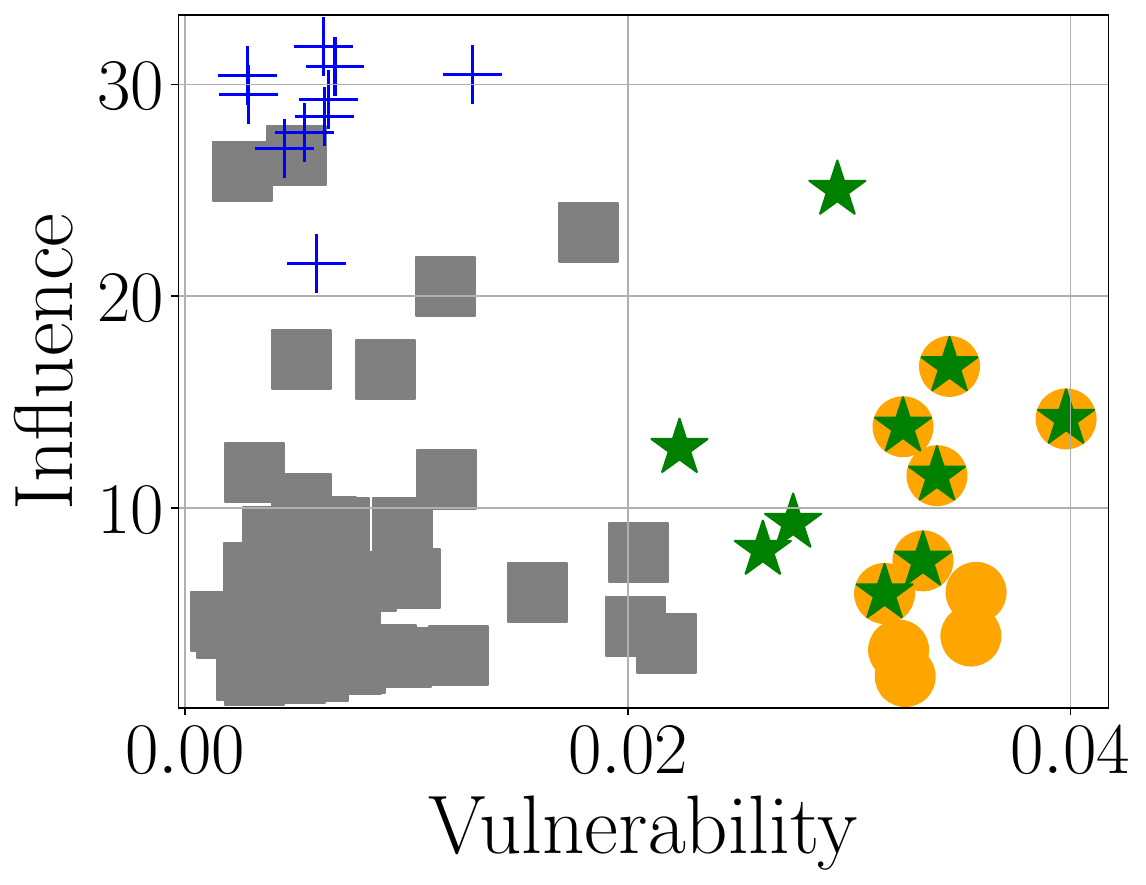}
    \caption{$IC(0.1,4)$}
    \end{subfigure}
    \begin{subfigure}{0.24\columnwidth}
    \includegraphics[width=0.99\linewidth]{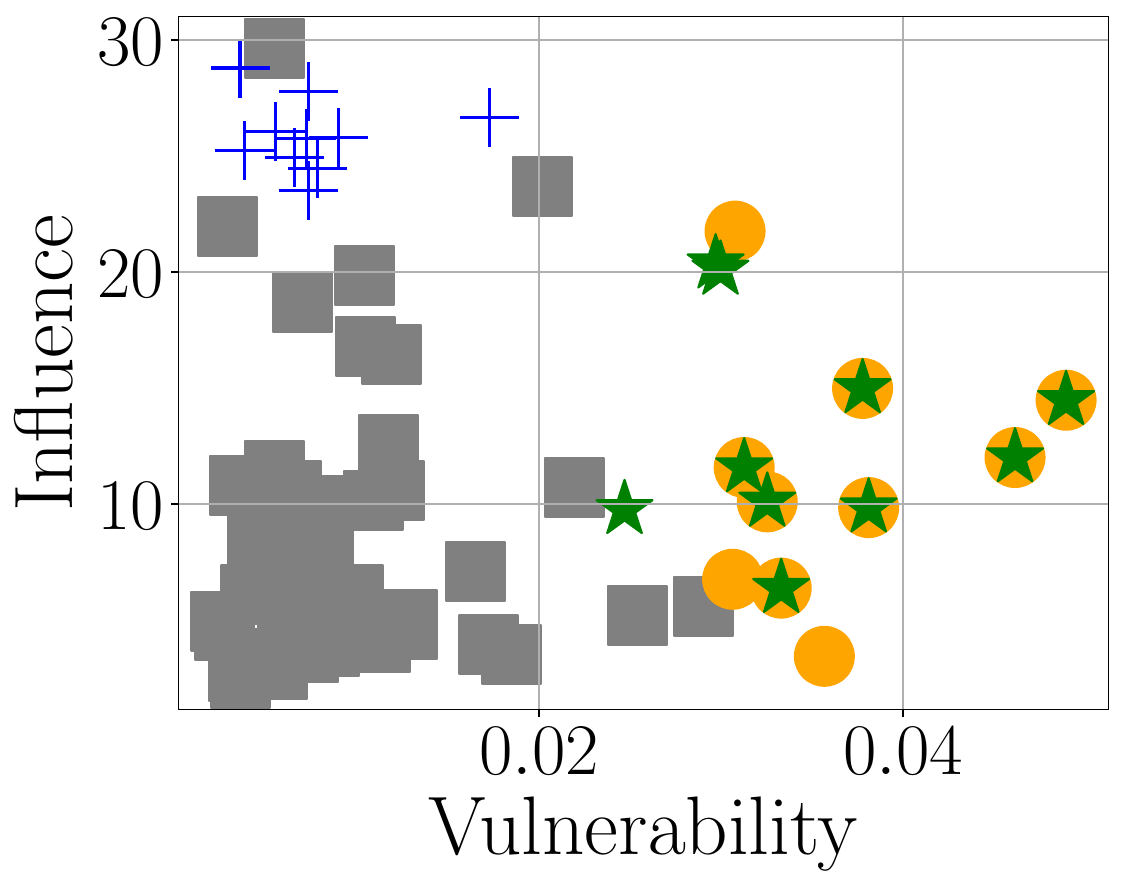}
    \caption{$IC(0.2,2)$}
    \end{subfigure}
    \begin{subfigure}{0.24\columnwidth}
    \includegraphics[width=0.99\linewidth]{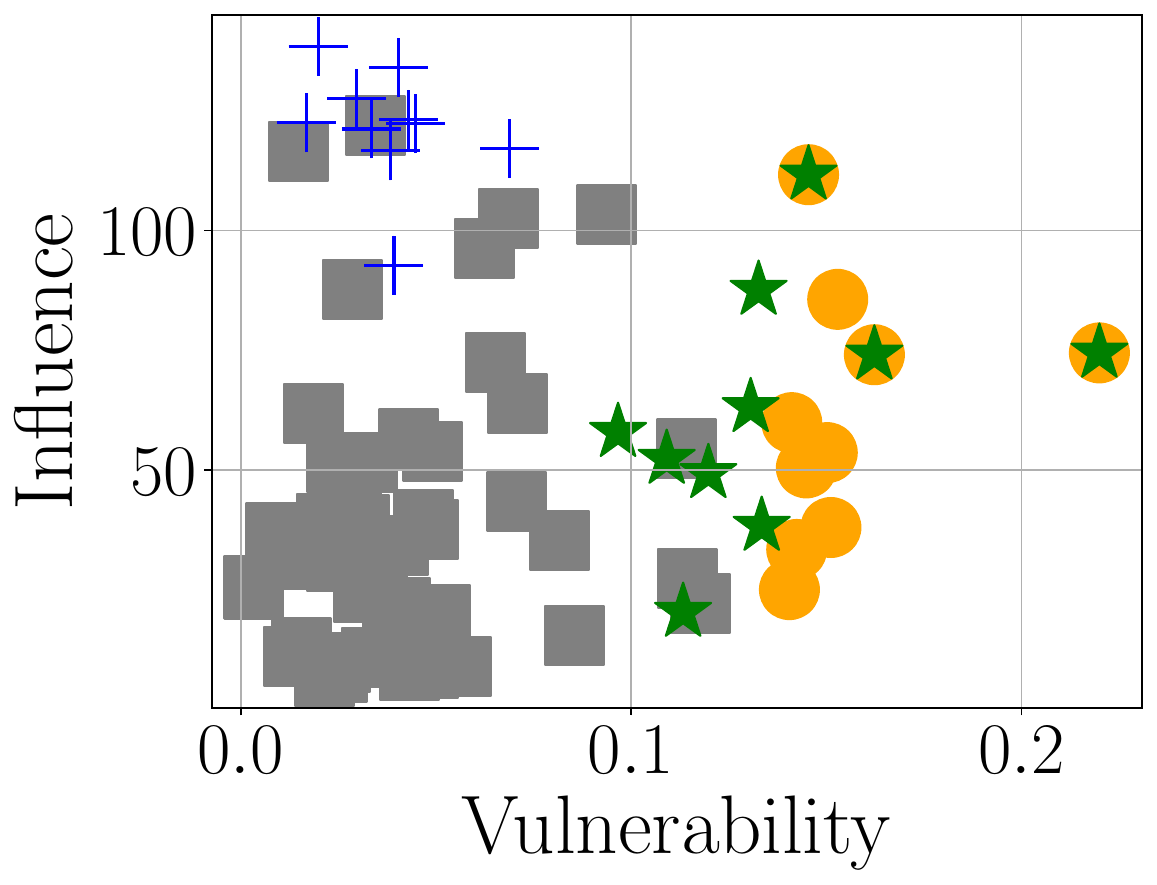}
    \caption{$IC(0.2,4)$}
    \end{subfigure}
    \includegraphics[width=0.5\linewidth]{influence_legend.pdf}
    \caption{Vulnerability vs Influence of the surveillance sets by \tool{} vs baselines  under \rsource{}~seeding on \icu{}}
    \label{fig:infl_addn_random_icu}
\end{figure}

\end{document}